\documentclass[a4paper,UKenglish,cleveref,nameinlink, autoref, thm-restate]{lipics-v2021}

\hideLIPIcs

\usepackage[rightcaption]{sidecap}

\usepackage{graphics} % Required for inserting images
\usepackage{complexity}
\usepackage{amsthm}
\usepackage{amsmath}
\usepackage{amssymb}
\usepackage{xspace}
\usepackage{microtype}
\usepackage{todonotes}
\usepackage{algorithm}
\usepackage[noend]{algpseudocode}
\usepackage{boxedminipage}
\usepackage{soul}
\usepackage{paralist}
\usepackage{hyperref}
\usepackage{cleveref}
\usepackage{cite} % To get multiple citations in order in the pdf.

\DeclareMathOperator{\bigoh}{\mathcal{O}}
%%%%%%%%%%%%%%%%%%%%%%%%%%%%%%%%%%%%%%%%%%%%%%%%%%%%%%%%%%%%%%%%%%%
%	CLEVEREF
\Crefname{observation}{Observation}{Observations}
\Crefname{longobservation}{Observation}{Observations}
\Crefname{algorithm}{Algorithm}{Algorithms}
\Crefname{algorithm}{Listing}{Listings}
\Crefname{section}{Section}{Sections}
\Crefname{property}{Property}{Properties}
\Crefname{longproperty}{Property}{Properties}
\Crefname{lemma}{Lemma}{Lemmas}
\Crefname{longlemma}{Lemma}{Lemmas}
\Crefname{claim}{Claim}{Claims}
\Crefname{longclaim}{Claim}{Claims}
\Crefname{longtheorem}{Theorem}{Theorems}
\Crefname{claimx}{Claim}{Claims}
\Crefname{figure}{Fig.}{Figs.}
\Crefname{figure}{Fig.}{Figs.}
\Crefname{enumi}{Condition}{Conditions}

%%%%%%%%%%%%%%%%%%%%%%%%%%%%%%%%%%%%%%%%%%%%%%%%%%%%%%%%%%%%%%%%%%%

\Crefname{property}{Property}{Properties}

\newtheorem{longtheorem}{Theorem}
\newtheorem{longlemma}[longtheorem]{Lemma}
\newtheorem{longdefinition}[longtheorem]{Definition}
\newtheorem{longobservation}[longtheorem]{Observation}
\newtheorem{longproposition}[longtheorem]{Proposition}
\newtheorem{longclaim}[longtheorem]{Claim}
\newtheorem{longproperty}[longtheorem]{Property}

\usepackage{xcolor}
\definecolor{defblue}{rgb}{0.274,0.392,0.666}

\let\emph\relax
\DeclareTextFontCommand{\emph}{\color{defblue}\em}
\definecolor{lipicsblue}{rgb}{0.08235294118,0.3098039216,0.537254902}
\hypersetup{
    colorlinks=true,
    linkcolor=lipicsblue,
    filecolor=lipicsblue,      
    urlcolor=lipicsblue,
    citecolor=lipicsblue,
    pdfpagemode=FullScreen,
    }

\newcommand{\CPLS}{\textsc{\textup{CPLS}}}
\newcommand{\CPLSC}{\textsc{\textup{CPLS-Completion}}}

\newcommand{\CPLSF}{\textsc{\textup{CPLSF}}}
\newcommand{\CPLSFC}{\textsc{\textup{CPLSF-Completion}}}

\newcommand{\VVV}{\mathcal{V}}
\newcommand{\NCM}{\textup{MatPenDel}}
\newcommand{\NULL}{\textup{NULL}}
\newcommand{\MS}{\textup{ExtractTriple}}

\nolinenumbers

\title{Exact Algorithms for Clustered Planarity with Linear Saturators}

\author{Giordano {Da Lozzo}}{Roma Tre University, Italy \and \url{http://www.dia.uniroma3.it/~dalozzo}}{giordano.dalozzo@uniroma3.it}{https://orcid.org/0000-0003-2396-5174}{Supported, in part, 
	by MUR of Italy (PRIN Project no.~2022ME9Z78~-- NextGRAAL and PRIN Project
	no.~2022TS4Y3N~-- EXPAND).}
 \author{Robert Ganian}{Algorithms and Complexity Group, TU Wien, Vienna, Austria \and \url{https://www.ac.tuwien.ac.at/people/rganian/}}{rganian@gmail.com}{https://orcid.org/0000-0002-7762-8045}{Supported by the Austrian Science Fund (FWF, Project 10.55776/Y1329) and the Vienna Science and Technology Fund (WWTF, Project 10.47379/ICT22029).}
\author{Siddharth Gupta}{BITS Pilani, K K Birla Goa Campus, India \and \url{https://guptasid.bitbucket.io/}}{siddharthg@goa.bits-pilani.ac.in}{https://orcid.org/0000-0003-4671-9822}{}
\author{Bojan Mohar}{Department of Mathematics, Simon Fraser University, Burnaby, BC, Canada \and \url{https://www.sfu.ca/~mohar/}}{mohar@sfu.ca}{https://orcid.org/0000-0002-7408-6148}{}
\author{Sebastian Ordyniak}{University of Leeds, UK \and \url{https://eps.leeds.ac.uk/computing/staff/8786/dr-sebastian-ordyniak}}{sordyniak@gmail.com}{https://orcid.org/0000-0002-1825-0097}{}
\author{Meirav Zehavi}{Ben-Gurion University of the Negev, Beer-Sheva, Israel \and \url{https://sites.google.com/site/zehavimeirav/}}{meiravze@bgu.ac.il}{https://orcid.org/0000-0002-3636-5322}{}

\Copyright{Giordano {Da Lozzo}, Robert\ Ganian, Siddharth\ Gupta, Bojan\ Mohar, Sebastian\ Ordyniak, Meirav\ Zehavi}

\acknowledgements{This research started at the Dagstuhl Seminar: New Frontiers of Parameterized Complexity in Graph Drawing; seminar number: 23162; April 16-21, 2023.}

\bibliographystyle{plainurl}

\authorrunning{G.\ {Da Lozzo}, R.\ Ganian, S.\ Gupta, B.\ Mohar, S.\ Ordyniak, M.\ Zehavi}

\ccsdesc[500]{Theory of computation~Fixed parameter tractability}
\ccsdesc[500]{Theory of computation~Computational geometry}
\ccsdesc[500]{Mathematics of computing~Graph algorithms}

\keywords{Clustered planarity, independent c-graphs, path saturation, graph drawing}
\begin{document}
\maketitle

\begin{abstract}
We study {\sc Clustered Planarity with Linear Saturators}, which is the problem of augmenting an $n$-vertex planar graph whose vertices are partitioned into independent sets (called {\em clusters}) with paths---one for each cluster---that connect all the vertices in each cluster while maintaining~planarity. 
We show that the problem can be solved in time $2^{\mathcal{O}(n)}$
for both the variable and fixed embedding~case. Moreover, we show that it
can be solved in subexponential time $2^{\mathcal{O}(\sqrt{n}\log n)}$ in
the fixed embedding case if additionally the input graph is connected. The
latter time complexity is tight under the Exponential-Time Hypothesis.
We also show that $n$ can be replaced with the vertex cover number of the input graph by providing a linear (resp.\ polynomial) kernel for the variable-embedding (resp.\ fixed-embedding) case; these results contrast the \NP-hardness of the problem on graphs of bounded treewidth (and even on trees).
Finally, we complement known lower bounds for the problem by showing that {\sc Clustered Planarity with Linear Saturators} is \NP-hard even when the number of clusters is at most $3$, thus excluding the algorithmic use of the number of \mbox{clusters as a parameter.}
\end{abstract}

\section{Introduction}

The representation of graphs with a hierarchical structure has become an increasingly crucial tool in the analysis of networked data across various domains. Indeed, by recursively grouping vertices into clusters exhibiting semantic affinity, modern visualization tools allow for the visualization of massive graphs whose entire visualization would otherwise be impossible. Clustered graphs, where a graph's vertex set is partitioned into distinctive subsets, naturally originate in various and diverse fields such as knowledge representation~\cite{DBLP:journals/tog/KamadaK91}, software visualization~\cite{DBLP:conf/sac/PaivaRBL16}, visual statistics~\cite{DBLP:journals/ivs/BrandesL08}, and data mining~\cite{DBLP:phd/dnb/Niggemann01}, only to name a few. 

Formally, a \emph{flat clustered graph} (for short, \emph{clustered graph} or \emph{c-graph}) is a pair $\mathcal{C}=(G,\mathcal{V})$ where $G$ is a graph and $\mathcal{V} = \{V_1,\dots,V_k\}$ is a partition of the vertex set of $G$ into sets~$V_i$ called \emph{clusters}. The graph $G$ is the \emph{underlying graph} of $\cal C$.
A pivotal criterion for a coherent visualization of a clustered graph stems from the classical notion of graph planarity. A \emph{c-planar drawing} of a clustered graph $\mathcal{C}=(G,\mathcal{V})$ is defined as a planar drawing of $G$, accompanied by a representation of each cluster $V_i \in \mathcal V$ as a region $\mathcal{D}_i$ homeomorphic to a closed disc, such that regions associated with different clusters are disjoint, the drawing of the subgraph of $G$ induced by the vertices of each cluster $V_i$ lies in the interior of $\mathcal{D}_i$, and each edge crosses the boundary of a region at most once. The problem of testing for the existence of a c-planar drawing of a clustered graph, called {\sc C-Planarity Testing}, was introduced by Lengauer~\cite{Lengauer89} (in an entirely different context) and then rediscovered by Feng, Cohen, and Eades~\cite{FengCE95}. Determining the complexity of the problem has occupied the agenda of the Graph Drawing community for almost three decades~\cite{AngeliniL19,AngeliniLBFPR15,AngeliniLBFR15,BlasiusR16,ChimaniBF019,LozzoEG018,DBLP:conf/compgeom/CorteseB05,CorteseBFPP08,ChimaniK12,LozzoEGG21,BattistaF09,GoodrichLS05,GutwengerJLMPW02,FulekKMP15,JelinekJKL08,JelinekSTV08,JelinkovaKKPSV09,SneddonB11}. 
The seemingly elusive goal of settling the question regarding the computational complexity of this problem has only been recently addressed by Fulek and T{\'{o}}th~\cite{FulekT22}, who presented a polynomial-time algorithm running in $\bigoh(n^{8})$ time (and in $\bigoh(n^{16})$ time for the version of the problem in which a recursive clustering of the vertices is allowed. It is worth pointing out that, even before a polynomial-time solution for the {\sc C-Planarity} problem was presented, Cortese and Patrignani~\cite{CorteseP18} established that the problem retains the same complexity on both flat and general (i.e., recursively clustered) instances. Subsequently, a more efficient algorithm running in quadratic time has been presented by Bl{\"{a}}sius, Fink, and Rutter~\cite{BlasiusFR23}. 

Patrignani and Cortese studied \emph{independent c-graphs}, i.e., c-graphs where each of the clusters induces an independent set~\cite{CorteseP18}. A characterization given by Di Battista and Frati~\cite{BattistaF09} implies that an independent c-graph is c-planar if and only if the underlying graph can be augmented by adding extra edges, called \emph{saturating edges}, in such a way that the resulting graph has a planar embedding and each cluster induces a tree in the resulting graph. 
Angelini et al.~\cite{AngeliniLBFPR17} considered a constrained version of c-planarity, called {\sc Clustered Planarity with Linear Saturators} (for short, \CPLS), which takes as input an independent c-graph and requires that each cluster induces a path (instead of a tree) in the augmented graph. They proved that the \CPLS\ problem is \NP-complete for c-graphs with an unbounded number of clusters, regardless of whether the input graph is equipped with an embedding or not.
The problem fits the paradigm of augmenting planar graphs with edges in such a way that the resulting graph remains planar, while achieving some other desired property, which is a central question in Algorithmic Graph Theory~\cite{DBLP:conf/soda/FialkoM98,AloupisBCDFM15,ChimaniH16,ChimaniGMW09,KantB91,Poutre94}.

Although \CPLS\ is a topological problem, it stems from a geometric setting within \emph{intersection-link representations}, a form of hybrid representations for locally-dense globally-sparse graphs~\cite{AngeliniLBFPR17}.
Specifically, see also~\cite{AngeliniL20}, given a c-graph $\mathcal{C} = (G, \mathcal{V})$ whose every cluster induces a clique, the {\sc Clique Planarity} problem asks to compute a \emph{clique planar representation} of $\cal C$, i.e., a drawing of $\cal C$ in which each vertex $v \in V(G)$ is represented as a translate~$R(v)$ of a given rectangle $R$, each intra-cluster edge $(u, v)$ is represented by an intersection between $R(u)$ and $R(v)$, and each inter-cluster edge $(u, v)$ is represented by a Jordan arc connecting the boundaries of $R(u)$ and $R(v)$ that intersects neither the interior of any rectangle nor the representation of any other inter-cluster edge. Notably, the authors showed that a c-graph whose every cluster induces a clique admits a clique planar representation if and only if it admits a so-called {\em canonical representation}, where the vertices are squares arranged in a ``linear fashion''; see~\cref{fig:clique-planar}. This allowed them to establish the equivalence \mbox{between \CPLS\ and {\sc Clique Planarity}.} In particular, they proved that a c-graph $\cal C$ whose every cluster induces a clique is a yes-instance of {\sc Clique Planarity} if and only if the c-graph obtained by removing all intra-cluster edges from $\cal C$ is a yes-instance~of~\CPLS.

\begin{figure}
    \centering
    \begin{subfigure}{.32\textwidth}
    \includegraphics[page=1,width=\textwidth]{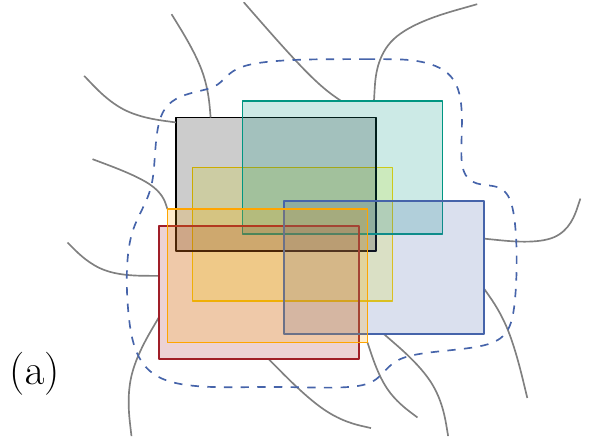}
    \end{subfigure}
    \begin{subfigure}{.32\textwidth}
    \includegraphics[page=2,width=\textwidth]{figures/canonical.pdf}
    \end{subfigure}
    \begin{subfigure}{.32\textwidth}
    \includegraphics[page=3,width=\textwidth]{figures/canonical.pdf}
    \end{subfigure}
    \caption{(a) A partial clique planar representation focused on a cluster. (b) Canonical representation of the cluster in (a). A linear saturation of the vertices of the cluster corresponding to (b).}
    \label{fig:clique-planar}
\end{figure}

\smallskip \noindent
\textbf{Our Contribution.}\quad In this paper, we study the \CPLS\
problem from a computational perspective, in both the fixed embedding
as well as the variable embedding setting. In the fixed embedding
case, the underlying graph of the c-graph comes with a prescribed
combinatorial embedding, which must be preserved by the output drawing. Instead, in the variable embedding setting, we are allowed to select the embedding of the underlying graph. To distinguish these two settings, we refer to the former problem (i.e., where a fixed embedding is provided as part of the input) as \CPLSF. 
Our main results are as follows.

\noindent\textbf{(1)} In \cref{se:exactAlgo} we give exact single-exponential and sub-exponential algorithms for the problems. In particular, both \CPLS\ and \CPLSF\ can be solved in $2^{\mathcal{O}(n)}$ time. Moreover, in \cref{se:exactAlgoFix} we obtain a subexponential $2^{\mathcal{O}(\sqrt{n}\log n)}$ algorithm for \CPLSF\ when the underlying graph is connected; this result is essentially tight under the Exponential Time Hypothesis~\cite{ImpagliazzoPZ01}. In both cases, the main idea behind the algorithms is to use a divide-and-conquer approach that separates the instance according to a hypothetical separator in the solution graph.

\noindent
\textbf{(2)} In \cref{se:fptAlgo} we obtain polynomial kernels (and thus establish fixed-parameter tractability) for both \CPLS\ and \CPLSF\ with respect to the vertex cover number of the underlying graph. Interestingly, while being provided with an embedding allowed us to design a more efficient exact algorithm for \CPLSF, in the parameterized setting the situation is reversed: we obtain a linear kernel for \CPLS, but for \CPLSF\ the size of the kernel is cubic in the vertex cover number.
Combining the former result with our exact algorithm for \CPLS\ allows us to obtain an algorithm that runs in single-exponential time with respect to the vertex cover number.
    
\noindent \textbf{(3)} In \cref{se:npc} we observe that the {\sc CPLS} problem is \NP-complete on trees and even a disjoint union of stars. Since stars have treedepth, pathwidth, and treewidth one, this charts an intractability border between the vertex cover number parameterization used in \cref{se:fptAlgo} and other parameters.
Then we prove that the problem is \NP-complete even for c-graphs having at most $3$ clusters, thus strengthening the previously known hardness result for an unbounded number of clusters. This result combined with the equivalence between {\CPLS} and {\sc Clique Planarity} shows that  {\sc Clique Planarity} is \NP-complete for instances with a bounded number of clusters, which solves an open problem posed in~\cite[OP 4.3]{AngeliniL20}.

\smallskip
\noindent \textbf{Further Related Work.}\quad
Apart from the classical {\sc C-Planarity Testing} problem, several variants of planarity for clustered graphs have attracted considerable attention. Didimo et al.~\cite{DBLP:journals/jgaa/DidimoGL08} studied {\sc Overlapping Clustered Planarity}, where the clusters of the c-graph are not required to define a partition of the vertex set, but are instead allowed to overlap, and thus the disks corresponding to overlapping clusters may also intersect; see also~\cite{AthenstadtHN14,AthenstadtC17,JohnsonP87}.
Angelini et al.~\cite{AngeliniLBFPR15} studied a relaxed version of \mbox{\sc C-Planarity Testing}, where either crossings among clusters disks (\emph{region-region crossings}) or crossings among cluster disks and edges (\emph{edge-regions crossings}) are allowed, while the underlying graph must be drawn planar.
Forster et al.~\cite{DBLP:conf/sofsem/ForsterB04} studied {\sc Clustered-Level Planarity}, where the underlying graph $G$ is a level graph and they seek a c-planar drawing that contains a level planar drawing of $G$ and such that each cluster disk intersects the \mbox{levels of the drawing in a continuous segment; see also~\cite{AngeliniLBFR15}.}

\section{Preliminaries}\label{se:preliminaries}

For a positive integer $k$, we denote by $[k]$ the set $\{1,\dots,k\}$.

\subparagraph{Graphs.} 
Let $G$ be a graph. We denote by $V(G)$ and $E(G)$ the vertex set and the edge set of $G$, respectively. When it is clear from the context, we denote $V(G)=V$ and $E(G)=E$. Let $U \subseteq V \cup E$. We denote by $G[U]$ the subgraph of $G$ induced by $U$.

A \emph{vertex cover} of $G$ is a subset $U \subset V$ such that every edge $e \in E$ has at least an endpoint in $U$. The \emph{vertex cover number} of $G$ is the minimum size of a vertex cover of $G$. An \emph{independent set} of $G$ is a subset $U \subset V$ such that there does not exist any two vertex $u,v \in U$ such that $(u,v) \in E$. 
The graph $G$ is \emph{bipartite} if the vertices of $G$ can be partitioned into two non-empty independent sets $V_1, V_2 \subset V$ such that every edge of $G$ connects a vertex in $V_1$ to a vertex in $V_2$.
The graph $G$ is \emph{connected} if there is a path between any two vertices, and it is \emph{disconnected} otherwise.

\subparagraph{Planar Graphs and Embeddings.}
A \emph{drawing} of a graph is a mapping of each vertex to a distinct point in the plane and of each edge to a Jordan curve between its endpoints. A drawing is \emph{planar} if there are no edge crossings, except possibly at their common endpoints. 
A graph is \emph{planar} if it admits a planar drawing. A planar drawing partitions the plane into topologically connected regions called \emph{faces}. The bounded faces are the \emph{inner faces}, while the unbounded face is the \emph{outer face}. 

A \emph{combinatorial embedding} (for short, \emph{embedding}) is an equivalence class of planar drawings, where two drawings of a graph are \emph{equivalent} if they determine the same \emph{rotation} at each vertex (i.e., the same circular ordering of the edges around each vertex). As the problem considered in this paper is topological in nature and the actual geometry does not matter, we often talk about embeddings as if they are actual planar drawings. When this happens, we are referring to any planar drawing within the equivalence class. An \emph{embedded graph} $G_{\cal E}$ is a planar graph $G$ equipped with an embedding $\cal E$. A drawing $\Gamma$ of $G_{\cal E}$ is a planar drawing of $G$ such that $\Gamma \in \cal E$. 

Let $G_{\cal E}$ be an embedded graph. A \emph{noose} $N$ of $G_{\cal E}$ is a simple closed curve in 
some drawing $\Gamma$ of $G_{\cal E}$ that
\begin{inparaenum}[(i)]
    \item intersects $G$ only at vertices and
    \item traverses each face of $\Gamma$ at most once.
\end{inparaenum}
Given a subgraph $H$ of $G$, we denote by ${\cal E}(H)$ the embedding of $H$ obtained from $\cal E$ by restricting it to $H$.

\subparagraph{Parameterized Complexity.}
A problem $\Pi$ is a \emph{parameterized} problem if each problem instance of $\Pi$ is associated with a \emph{parameter} $k$. The goal of the framework of Parameterized Complexity is to confine the combinatorial explosion in the running time of an algorithm for an \NP-hard parameterized problem $\Pi$ to depend only on $k$.  In particular, a parameterized problem $\Pi$ is \emph{fixed-parameter
tractable (FPT)} if any instance $(I, k)$ of $\Pi$ is solvable in time $f(k) \cdot |I|^{\bigoh(1)}$, where $f$ is an arbitrary computable function of $k$. Moreover, a parameterized problem $\Pi$ is para-NP-hard if it is \NP-hard even for some fixed constant value of the parameter~$k$.  

Another concept related to fixed-parameter tractability is polynomial kernel. A parameterized problem $\Pi$ is said to admit a \emph{polynomial kernel} 
if there exists a polynomial-time algorithm that transforms an arbitrary parameterized instance of $\Pi$ to an equivalent (not necessarily parameterized) instance of $\Pi$ whose size is bounded by a polynomial function $g$ of the parameter of the original instance.

\smallskip \noindent
\textbf{Clustered Planarity with Linear Saturators.}\quad
Let $\cal C$ be a c-graph. We say that $\cal C$ has a \emph{fixed embedding} if the underlying graph of $\cal C$ is an embedded graph, and has a \emph{variable embedding} otherwise. We say that $\cal C$ is an \emph{embedded c-graph} if $\cal C$ has a fixed embedding.

Let $\mathcal{C}=(G,\{V_1,\dots,V_k\})$ be an independent c-graph, i.e., for every $i \in [k]$, $V_i$ is an  independent set of $G$. We say that $G$ can be \emph{linearly saturated} if there exist sets $Z_1,\dots,Z_k$ of non-edges of $G$ such that
\begin{inparaenum}[(i)]\
    \item $H = (V(G), E(G) \cup Z)$ for $Z=\bigcup^k_{i=1}Z_i$ is planar,
    \item for every $i \in [k]$, each edge in $Z_i$ connects two vertices of $V_i$ in $H$, and
    \item for every $i \in [k]$, the graph $H[V_i]$ is a path.
\end{inparaenum}
For $i \in [k]$, the edges in $Z_i$ are the \emph{saturating edges} of cluster $V_i$, and $H$ is the \emph{linear saturation} of $G$ \emph{via} $Z_1,\dots,Z_k$. 
We now define the {\sc Clustered Planarity with Linear Saturators} (for short, \CPLS) and the {\sc Fixed Embedding Clustered Planarity with Linear Saturators} (for short, \CPLSF) problems.

\medskip
\noindent $\blacktriangleright${\kern 0.33em}{\bf\CPLS:} Given an independent c-graph $(G, \VVV)$, does there exist a linear saturation of $G$?
\smallskip

\noindent $\blacktriangleright${\kern 0.33em}{\bf\CPLSF:} Given an independent embedded c-graph $(G_{\cal E}, \VVV)$, does there exist a linear saturation $H$ of $G$ that admits an embedding $\cal E'$ for which ${\cal E}'(G)$ coincides with $\cal{E}$?
\medskip

To devise exact algorithms for the \CPLS\ and \CPLSF\ problems, it will be useful to consider a more general setting. A c-graph is \emph{paths-independent} if each of its clusters induces a collection of paths; the notion of a linear saturator for a paths-independent c-graph is the same as that of an independent c-graph.
We now define the {\sc Clustered Planarity with Linear Saturators Completion} (for short, \CPLSC) and the {\sc Fixed Embedding Clustered Planarity with Linear Saturators Completion} (for short, \CPLSFC) problems as follows.

\begin{longdefinition}[{\bf \CPLSC\ Problem}]
    Given a paths-independent c-graph $(G, \VVV)$, does there exist a linear saturation of $G$?
\end{longdefinition}

\begin{longdefinition}[{\bf \CPLSFC\ Problem}]
    Given a paths-independent embedded c-graph $(G_{\cal E}, \VVV)$, does there exist a linear saturation $H$ of $G$ such that there exists an embedding $\cal E'$ of $H$ for which ${\cal E}'(G)$ is same as $ \cal{E}$?
\end{longdefinition}

Clearly, \CPLSC\ (resp., \CPLSFC) is a generalization of  \CPLS\ (resp., \CPLSF). 
Moreover, observe that in case of \CPLS\ (resp., \CPLSF) $H[V_i] = H[Z_i]$ as every cluster induces an independent set in $G$ which is not true in the case of \CPLSC\ (resp., \CPLSFC).

\section{Exact Algorithms for \CPLS\ and \CPLSF}\label{se:exactAlgo}
This section details our exact single- and sub-exponential algorithms for \CPLS\ and \CPLSF.
We first deal with the variable-embedding case.

\smallskip \noindent
\textbf{Proof Ideas.}
We aim to solve the
problem via a divide-and-conquer approach, where at each iteration, we
``split'' the current instance of the problem into simpler (and, in
particular, substantially smaller) sub-instances of the problem. To
understand how to perform the split, consider an (unknown) solution
$Z$, and the graph $H=(V(G),E(G)\cup Z)$. As this graph is planar,
there exists a noose $N$ that intersects only $\bigoh(\sqrt{|V(G)|})$
vertices of $H$ and does not intersect any edges of $H$, such that
both the interior and the exterior of $N$ contain a constant fraction
(roughly between 1/3 to 2/3) of the vertices of $G$ (\cref{prop:planarSeparator}). Thus, naturally, we would like to split our problem instance into two instances, one corresponding to the interior and boundary of $N$, and the other corresponding to the exterior and boundary of $N$ (so, the boundary is common to both).
However, two issues arise, which we describe next.

The first (simpler) issue is that we do not know $N$ since we do not know $Z$. However, since $N$ intersects only few vertices, we can simply ``guess'' $N$ by guessing the set $U$ of intersected vertices, and the cyclic order $\rho$ in which $N$ intersects them. By guessing, we mean that we iterate over all possible options, and aim to find at least one that yields a solution (if a solution exists). Having $U$ and $\rho$ at hand, we still do not know the interior and exterior of $N$, and therefore, we still do not know how to create the two simpler sub-instances. Thus, we perform additional guesses: We guess the set $I$ of vertices drawn (with respect to the planar drawing of $H$) strictly inside $N$ and thereby also the set $O$ of vertices drawn strictly outside $N$. Additionally, for the set of edges having both endpoints in $U$, we guess a partition $\{E_\mathsf{in}, E_\mathsf{out}\}$ that encodes which of them are drawn inside $N$ and which of them are drawn outside $N$. Specifically for the fixed embedding case, we non-trivially exploit the given embedding to perform the guesses of $I,O$ and $\{E_\mathsf{in},E_\mathsf{out}\}$ in a \mbox{more sophisticated manner that yields only subexponentially many guesses.} 

The second (more complicated) issue is that we cannot just create two instances: One for the subgraph of $G$ induced by $I\cup U$ (and without the edges in $E_\mathsf{out}$) and the other for the subgraph of $G$ induced by $O\cup U$ (and without the edges in $E_\mathsf{in}$) and solve them independently. The two main concerns are the following: First, we need a single planar drawing for the entire (unknown) graph $H$ and so the drawings of the two graphs in the two sub-instances should be ``compatible''. Second, we may not need to (and in some cases, in fact, must not) add edges to a graph in any of the two sub-instances to connect all vertices in the same cluster in that graph into a single path. Instead, we need to create a collection of paths, so that the two collections that we get, one for each of the \mbox{two sub-instances, will together yield a single path.}

To handle the second concern, we perform additional guesses. Specifically, we guess some information on how the (solution) cluster paths in $H$ ``behave'' when they are restricted to the interior of~$N$--- we guess a triple $(M,P,D)$ which encodes, roughly speaking, a pairing~$M$ between some vertices in $U$ that are connected by the cluster paths only using the interior of~$N$, the set of vertices $P$ through which the cluster paths enter the interior of~$N$ and ``do not return'', and the set of vertices $D$ that are incident to two edges in $Z$ drawn in the interior. Now, having such a guess at hand, we handle both concerns by defining a special graph that augments the graph induced by $G[U\cup O]$ (with the edges in $E_\mathsf{in}$ removed) so that the solutions returned for the corresponding sub-instance will have to be, in some sense, ``compatible'' with $(M,P,D)$ as well as draw $O$ and $E_\mathsf{out}$ outside $N$ while preserving the order~$\rho$. The definition of this augmented graph is, perhaps, the most technical definition required for the proof, as it needs to handle both concerns simultaneously. Among other operations performed to obtain this augmented graph, for the first concern, we add extra edges between vertices in~$M$, attach pendants on $P$, and treat vertices in $D$ as if they belong to their own clusters, and for the second concern, we triangulate the result in a~careful~manner.

\subsection{The Variable-Embedding Case}\label{se:exactAlgoVar}
Towards solving \CPLS, we recursively solve the more general problem mentioned earlier, namely, \CPLSC. Additionally, we suppose that some of the vertices of the input graph can be marked, and that we are not allowed to add edges incident to marked vertices. To avoid confusion, we will denote this annotated version by {\sc CPLS-Completion*}.
The rest of this section is devoted to the proof of the following.

\begin{theorem}\label{thm:exactVarialbe}
Let $\mathcal{C}=(G,\mathcal{V})$ be an $n$-vertex paths-independent c-graph. It can be tested whether $\mathcal{C}$ is a positive instance of {\sc CPLS-Completion*} in $8^{n+\bigoh(\sqrt{n}\log n)}=2^{\bigoh(n)}$ time.
\end{theorem}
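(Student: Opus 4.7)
The plan is to design a recursive algorithm for {\sc CPLS-Completion*} that, at each call on an $n$-vertex instance, splits the problem across a hypothetical planar separator of the unknown solution graph and recurses on two sub-instances each of size at most $\tfrac{2}{3}n + O(\sqrt{n})$. For $n$ below a fixed constant I would solve the problem by brute force. For larger $n$, I invoke the planar separator property (\cref{prop:planarSeparator}) on the hypothetical linear saturation $H=(V(G),E(G)\cup Z)$: there exists a noose $N$ meeting $H$ in a vertex set $U$ of size $O(\sqrt{n})$ whose interior and exterior each contain between $n/3$ and $2n/3$ of the vertices of $V(G)$. Since neither $Z$ nor $N$ is known, the algorithm iterates over all candidate tuples $(U,\rho,I,\{E_{\mathsf{in}},E_{\mathsf{out}}\},M,P,D)$, where $\rho$ is a cyclic ordering of $U$, $I\subseteq V(G)\setminus U$ designates the interior vertices (so $O := V(G)\setminus(U\cup I)$), $\{E_{\mathsf{in}},E_{\mathsf{out}}\}$ partitions the edges of $G[U]$ into those drawn inside versus outside $N$, and $(M,P,D)$ encodes the cluster-path behaviour at $U$ as in the Proof Ideas.

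For each guess I would build two augmented paths-independent c-graphs $\mathcal{C}^{\mathsf{in}}$ and $\mathcal{C}^{\mathsf{out}}$, each with marked vertices, to be passed into recursive {\sc CPLS-Completion*} calls. To form $\mathcal{C}^{\mathsf{in}}$: start from $G[I\cup U]$; delete the edges in $E_{\mathsf{out}}$; for each pair in $M$ add a new edge inside the corresponding cluster, so that the cluster path is forced to realize this connection inside $N$; attach a pendant in the same cluster to each vertex of $P$, forcing the cluster path to terminate at $P$ on the interior side; for each vertex of $D\cap(I\cup U)$ use a small marking gadget (e.g., move it to a fresh singleton cluster and reattach the correct number of stub edges) so that its incident saturating edges drawn inside $N$ are accounted for exactly; and, to enforce the separator shape, adjoin a fresh apex $\alpha$ with edges from $\alpha$ to every vertex of $U$ placed so that any planar embedding must have $U$ appearing around the outer boundary of $\mathcal{C}^{\mathsf{in}}$ in the cyclic order $\rho$. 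All auxiliary vertices (and the endpoints of the $M$-edges whose saturation is already realized) are then marked so that {\sc CPLS-Completion*} cannot add further saturating edges through them. The construction of $\mathcal{C}^{\mathsf{out}}$ is fully symmetric.

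Correctness will be argued in both directions. In the forward direction, from a hypothetical solution $Z$ and its planar drawing of $H$, the planar separator property yields a suitable noose $N$, and reading off $(U,\rho,I,\{E_{\mathsf{in}},E_{\mathsf{out}}\},M,P,D)$ directly makes both augmented instances yes-instances: the interior restriction of $Z$, together with the $M$-edges and the pendant edges, witnesses a linear saturation of $\mathcal{C}^{\mathsf{in}}$, and symmetrically for $\mathcal{C}^{\mathsf{out}}$. In the reverse direction, from linear saturations of $\mathcal{C}^{\mathsf{in}}$ and $\mathcal{C}^{\mathsf{out}}$ the apex-triangulations guarantee that $U$ lies in the common cyclic order $\rho$ on a shared face of both drawings, so the two drawings can be glued along $U$ into a planar drawing on $V(G)$; discarding the auxiliary edges and vertices and concatenating the interior and exterior cluster subpaths using the guess $(M,P,D)$ produces exactly one path per cluster, i.e., a linear saturation of $G$ respecting the markings.

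For the running time, let $T(n)$ be the worst-case cost on an $n$-vertex instance. I bound the number of tuples by $2^n$ for $U$, $2^n$ for $I$, and $2^n$ for $D$, together with $2^{O(\sqrt{n}\log n)}$ for $\rho$ and for $M$, and $2^{O(\sqrt{n})}$ for $P$ and for the bipartition of $E(G[U])$ (which has at most $3|U|-6$ edges by planarity); this totals $8^{n+O(\sqrt{n}\log n)}$. Each guess spawns two recursive calls on instances of size at most $\tfrac{2}{3}n + O(\sqrt{n})$, giving $T(n)\le 8^{\,n+O(\sqrt{n}\log n)}\cdot 2\,T\!\bigl(\tfrac{2}{3}n + O(\sqrt{n})\bigr)$, which geometrically telescopes to $T(n) = 8^{\,n+O(\sqrt{n}\log n)} = 2^{O(n)}$. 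The hard part will be verifying that the augmented-graph construction is truly faithful: the various gadgets (the $M$-edges, $P$-pendants, $D$-handling, the apex-triangulation, and the markings) must interact so that solutions of $\mathcal{C}^{\mathsf{in}}$ and $\mathcal{C}^{\mathsf{out}}$ compatible with the guess exist exactly when they piece together along $U$ into a planar drawing realizing a linear saturation of $G$, without introducing spurious non-planarity, forbidden saturating edges at marked vertices, or inconsistent cluster-path endpoints.
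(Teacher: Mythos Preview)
Your high-level plan matches the paper's approach (recursive divide-and-conquer via a guessed planar separator of the hypothetical saturation, with augmented sub-instances encoding boundary behaviour), but two concrete points go wrong.

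\textbf{Running time.} You count $2^n$ choices each for $U$, $I$, and $D$, giving $8^{n+O(\sqrt n\log n)}$ tuples per recursion level. But $|U|\le 2\sqrt{2n}$, so there are only $\binom{n}{O(\sqrt n)}=2^{O(\sqrt n\log n)}$ choices for $U$, and $D\subseteq U$ contributes only $2^{O(\sqrt n)}$ further choices. Only the choice of $I$ genuinely costs $2^n$, so the per-level enumeration is $2^{\,n+O(\sqrt n\log n)}$, not $8^{\,n+O(\sqrt n\log n)}$. This matters for the precise bound in the theorem: with your per-level factor $8^n$, the recurrence $T(n)\le 8^{n}\cdot 2\,T(\tfrac{2}{3}n)$ telescopes to $8^{3n}$, not $8^{n}$. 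With the correct per-level factor $2^{n}$, the geometric exponent sum $1+\tfrac{2}{3}+\tfrac{4}{9}+\cdots=3$ gives exactly $2^{3n}=8^{n}$, which is where the $8^{\,n+O(\sqrt n\log n)}$ of the statement comes from.

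\textbf{Correctness of the two sub-instances.} The triple $(M,P,D)$ you guess encodes the cluster-path behaviour on the \emph{inside} of $N$; the outside behaviour is not the same triple but a \emph{complementary} one (for instance, a vertex in $D$ has both incident cluster-path edges drawn inside and hence none outside, and a pair in $M$ is joined by an inside subpath but typically not by an outside subpath). Building $\mathcal{C}^{\mathsf{out}}$ ``fully symmetrically'' from the same $(M,P,D)$ therefore imposes the wrong constraints, and solutions to your two sub-instances need not glue into a single path per cluster. The paper avoids this by a sequential scheme: it solves the inside augmented instance first, extracts $T_{\mathsf{out}}=\MS(U,\mathcal{S}_{\mathsf{in}})$ from the inside partial solution actually found, and only then builds and solves the outside augmented instance from $T_{\mathsf{out}}$ (see \cref{alg:gammaone} and \cref{lem:unionYieldsSolution,lem:propertiesOfAug}). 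An alternative fix would be to also guess a second triple for the outside and verify complementarity (\cref{def:complement}), which costs only an extra $2^{O(\sqrt n)}$ factor; without one of these two fixes the reverse direction of your correctness argument does not go through.
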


\smallskip
\noindent
\textbf{Main Definitions and Intermediate Results.}\quad
We start with the following definition.

\begin{definition}[{\bf Non-Crossing Matchings and the Partition \NCM}]\label{def:NCM}
Let $\mathcal{C}=(G,\mathcal{V})$ be a paths-independent c-graph. Let $\rho$ be a cyclic ordering of some subset $U$ of $V(G)$. A matching $M$ is a {\em non-crossing matching} of $\rho$ if it is a matching in the graph $H=(U,\{\{a,b\}: a,b\in U\})$ such that for every pair of edges $\{a,b\},\{c,d\}\in M$, when we traverse $U$ in the cyclic order $\rho$, starting with $a$, we either encounter $b$ before both $c$ and $d$, or we encounter both $c$ and $d$ before $b$.
Denote by $\NCM(\VVV,\rho)$ (which stands for {\em Matching, Pendants and Deleted}) the set of all triples $(M,P,D)$ such that:
\begin{itemize}
\item $M$ is a non-crossing matching of $\rho$ such that each edge of $M$ matches only vertices in the same cluster, and 
\item $P,D \subseteq U\setminus V(M)$ are disjoint sets, where $V(M)$ is the set of vertices matched by $M$.
\end{itemize}
\end{definition}

Intuitively, $\rho$ will represent a cyclic balanced separator of $G$,
i.e., a noose in a drawing of the solution graph that separates the solution graph into two almost
equally sized subgraphs, $M$ will represent path segments between pairs of 
vertices on $\rho$, $P$ (``pendants'') will represent vertices through which the paths leave $\rho$ never to return,\footnote{Thus, for a (non-partial) solution, $P$ contains at most $2$ vertices per cluster, though we do not need to formally demand this already in~\cref{def:extractTriple}.} and $D$ will represent degree-$2$ vertices on the aforementioned path segments (which will be, in a sense, deleted when we ``complement'' the triple). This will be formalized in \cref{def:extractTriple} ahead.

\begin{observation}
\label{obs:computeNCM}
Let $\mathcal{C}=(G,\mathcal{V})$ be a paths-independent c-graph. Let
$\rho$ be a cyclic ordering of some subset $U$ of $V(G)$. Then,
$|\NCM(\VVV,\rho)|=2^{\bigoh(|U|)}$. Moreover, the set
$\NCM(\VVV,\rho)$ can be computed in time
$2^{\bigoh(|U|)}$.
\end{observation}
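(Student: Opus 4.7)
The plan is to bound $|\NCM(\VVV,\rho)|$ by separately bounding (i) the number of candidate non-crossing matchings $M$ of $\rho$ and (ii) the number of choices for the pair $(P,D)$ given $M$, and then show that an analogous enumeration procedure runs in time proportional to the output size.

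For (i), I would argue that the number of non-crossing perfect/partial matchings on a cyclic sequence of $|U|$ points is at most the Motzkin-style count of chord diagrams on $|U|$ points, which is well known to be $2^{\bigoh(|U|)}$ (for instance, bounded by the Catalan number $C_{|U|} \le 4^{|U|}$). The extra constraint that each matching edge joins two vertices of the same cluster only restricts this set, so the count remains $2^{\bigoh(|U|)}$. For (ii), each vertex of $U\setminus V(M)$ independently lies in $P$, in $D$, or in neither, giving at most $3^{|U\setminus V(M)|}\le 3^{|U|}=2^{\bigoh(|U|)}$ options. Multiplying these two bounds yields $|\NCM(\VVV,\rho)|=2^{\bigoh(|U|)}$.

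For the algorithmic part, I would enumerate all non-crossing matchings of $\rho$ by a standard recursion: fix an arbitrary starting vertex $u$ in the cyclic order; either leave $u$ unmatched and recurse on the remaining cyclic sequence, or, for every vertex $v$ that lies in the same cluster as $u$, pair $u$ with $v$ (this splits the remaining points of $U$ into the two arcs of $\rho$ determined by $\{u,v\}$, which can be handled independently). Memoising on sub-arcs (intervals of $\rho$) ensures the recursion outputs each non-crossing matching in amortised polynomial time, so the total running time for this step is $2^{\bigoh(|U|)}$. Given each such $M$, I would iterate over all $3^{|U\setminus V(M)|}$ assignments of the unmatched vertices to $\{P,D,\text{neither}\}$ and emit the triple $(M,P,D)$. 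The total running time is thus $2^{\bigoh(|U|)}$.

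The only delicate point I foresee is the non-crossing matching count: a naive $\binom{|U|}{2}^{|U|/2}$ bound is too weak, so one must invoke (or re-derive) the Catalan/chord-diagram bound. Everything else is routine combinatorial counting and standard enumeration.
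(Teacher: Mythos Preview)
Your proposal is correct and follows essentially the same approach as the paper: bound the non-crossing matchings via the Catalan/Motzkin count $2^{\bigoh(|U|)}$, and then bound the choices for $(P,D)$ by a per-vertex assignment among a constant number of options. The paper merely swaps the order (first choosing the $4$-way partition of $U$ into $V(M),P,D,\text{rest}$, then enumerating non-crossing perfect matchings on $V(M)$), but this is an inessential reordering of the same argument.
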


\begin{proof}
The number of possibilities for choosing a partition $(V(M),P,F,U\setminus (V(M)\cup P\cup D))$ of $U$ is bounded by $4^{|U|}$. Clearly, they can be generated within time $2^{\bigoh(|U|)}$. Now, fix such a partition. Then, it is well-known that the number of non-crossing matchings of $\rho$ is the $t$-th Catalan number, where $t=|M|$, which is bounded by $2^t$, and that all of these matchings can be generated within time $2^{\bigoh(t)}$ (see, e.g., \cite{dorn2012catalan}).  Among these, we sieve those matching only vertices in the same cluster within time $2^{\bigoh(t)}$. Thus, the observation follows.
\end{proof}

We formalize the notion of a {\em partial solution} as follows.

\begin{definition}[{\bf Partial Solution}]
Let $\mathcal{C}=(G,\mathcal{V})$ be a paths-independent c-graph. A {\em cluster path} is a path all of whose vertices belong to the same cluster in $\cal V$. 
A {\em partial solution for $\cal C$} is a set $\cal S$ of vertex-disjoint cluster paths. Note that $\cal S$ can contain several cluster paths whose vertices belong to the same cluster. We say that $\cal S$ is {\em  properly marked} if every edge in $\cal S$ incident to a marked vertex also belongs to $G$.
\end{definition}

Next, we formalize how a triple in \NCM\ captures information on a
partial solution (see \cref{fig:defExtractTriple}). For intuition, think of $\cal S$ as if it consisted
of paths that contain vertices only from the exterior (or only from
the interior) of a cyclic separator. 

\begin{figure}[t!]
\begin{center}
\begin{subfigure}{.49\textwidth}
\centering
\includegraphics[scale=1]{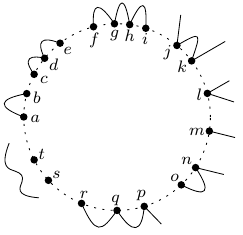}
\vspace{-6mm}
\caption{}
\label{fig:defExtractTriple}
\end{subfigure}
\hfil
\begin{subfigure}{.49\textwidth}
\centering
\includegraphics[scale=1]{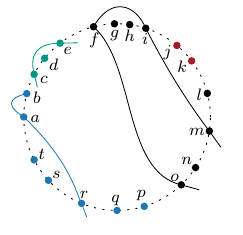}
\vspace{-6mm}
\caption{}
\label{fig:defComplementaryTriples}
\end{subfigure}
\end{center}
\vspace{-0.5cm}
\caption{(a) Example for \cref{def:extractTriple}: The paths in $\cal S$ are drawn as black curves, and the vertices in $U$ are marked by disks. We have $M=\{\{a,b\},\{c,e\},\{f,i\}\}, P=\{m,o,r\}, D=\{d,g,h,j,k,l,n,p,q\}$ and $U\setminus (V(M)\cup P\cup D)=\{s,t\}$.
(b) Example for \cref{def:complement}: The vertices in $U$ are marked by disks, and their association with the clusters is indicated by~colors (explicitly, the clusters are $\{\{a,b,p,q,r,s,t\},\{c,d,e\},\{f,g,h,i,l,m,n,o\},\{j,k\}\}$). Suppose $M_{\mathsf{in}}=\{\{a,b\},\{c,e\},\{f,i\}\}, P_{\mathsf{in}}=\{m,o,r\}, D_{\mathsf{in}}=\{d,g,h,j,k,l,n,p,q\}$ and $U\setminus (V(M_{\mathsf{in}})\cup P_{\mathsf{in}}\cup D_{\mathsf{in}})=\{s,t\}$, and $M_{\mathsf{out}}=\{\{f,o\},\{i,m\},\{r,a\}\}$, $P_{\mathsf{out}}=\{c,e\}$, $D_{\mathsf{out}}=\{s,t\}$ and $U\setminus (V(M_{\mathsf{out}})\cup P_{\mathsf{out}}\cup D_{\mathsf{out}})=\{b,d,g,h,j,k,l,n,p,q\}$. Then, the triples $T_{\mathsf{in}}=(M_{\mathsf{in}},P_{\mathsf{in}},D_{\mathsf{in}})$ and $T_{\mathsf{out}}=(M_{\mathsf{out}},P_{\mathsf{out}},D_{\mathsf{out}})$ are complementary, and $G_{T_{\mathsf{in}},T_{\mathsf{out}}}$ is a subgraph of the illustrated graph induced by $\{a,b,c,e,f,i,m,o,r\}$. The pendants are the endpoints of the edges going inside or outside the cycle (the vertex $b$, for example, is not adjacent to a pendant, while the vertex $c$ is).}
\end{figure}

\begin{definition}[{\bf Extracting a Triple from a Partial Solution}]\label{def:extractTriple}
Let $\mathcal{C}=(G,\mathcal{V})$ be a paths-independent c-graph. Let
$U\subseteq V(G)$ and let ${\cal S}$ be a partial solution. Then, $\MS(U,{\cal S})=(M,P,D)$ is defined as follows: 
\begin{itemize}
\item $M$ has an edge between the endpoints of every path in $\cal S$ that has both endpoints in $U$,
\item $P\subseteq U$ consists of the vertices of degree $1$ in $\cal S$ belonging to $U$ that are
  not in $V(M)$,\footnote{In other words, for every path in $\cal S$ having precisely one endpoint in $U$, $P$ contains the endpoint in $U$.} and
\item $D\subseteq U$ consists of the vertices of degree $2$ in $\cal S$ belonging to $U$. 
\end{itemize}
Further, ${\cal S}$ is {\em compatible with} a cyclic ordering $\rho$ of $U$ if $M$ is a non-crossing matching of $\rho$.
\end{definition}

We have the following immediate observation, connecting \cref{def:NCM,def:extractTriple}.

\begin{observation}\label{obs:inNCM}
Let $\mathcal{C}=(G,\mathcal{V})$ be a paths-independent c-graph. Let $\rho$ be a cyclic ordering of some subset $U$ of $V(G)$.     Let $\cal S$ be a partial solution compatible with $\rho$. Then, $\MS(U,{\cal S}) \in \NCM(\VVV,\rho)$.
\end{observation}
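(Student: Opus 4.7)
The statement is essentially a sanity check that the triple produced by $\MS$ satisfies every clause of the definition of $\NCM(\VVV,\rho)$, so my plan is just to verify each clause in turn. Write $(M,P,D) = \MS(U,{\cal S})$.

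First, I would confirm that $M$ is a non-crossing matching of $\rho$. The fact that it is a matching follows from the observation that, by \cref{def:extractTriple}, $M$ contains exactly one edge per path of $\cal S$ whose two endpoints lie in $U$, and since the paths in a partial solution are vertex-disjoint, no vertex of $U$ can appear as an endpoint of two such paths. That this matching is non-crossing with respect to $\rho$ is precisely the compatibility assumption on $\cal S$ stated in \cref{def:extractTriple}.

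Next, I would verify that each edge of $M$ connects two vertices from the same cluster. By definition, a partial solution consists of \emph{cluster paths}, i.e., paths whose vertices all belong to one cluster of $\VVV$. Hence the two endpoints of any such path lie in the same cluster, and in particular the edge of $M$ joining them is intra-cluster.

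Finally, I would check the conditions on $P$ and $D$: that both are subsets of $U\setminus V(M)$ and that they are pairwise disjoint. The inclusion $P\subseteq U\setminus V(M)$ is literally part of the definition of $\MS$. For $D$, I would argue that a vertex $v\in V(M)$ is by construction an endpoint of some path in $\cal S$ and therefore has degree $1$ in $\cal S$, whereas every vertex of $D$ has degree $2$ in $\cal S$; thus $D\cap V(M)=\emptyset$. The same degree mismatch (degree $1$ for vertices of $P$ vs.\ degree $2$ for vertices of $D$) immediately gives $P\cap D=\emptyset$.

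I do not anticipate any real obstacle here; the entire argument is a definition-chase, and the only slightly non-obvious point worth spelling out is that $V(M)$, $P$, and $D$ are pairwise disjoint because they correspond to three distinct degree conditions of vertices of $U$ inside the partial solution $\cal S$.
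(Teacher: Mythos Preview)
Your proposal is correct and matches the paper's treatment: the paper states this as an ``immediate observation'' with no proof, and your definition-chase is exactly the routine verification that makes it immediate. There is nothing to add.
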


We will be interested, in particular, in triples which are
complementary (see \cref{fig:defComplementaryTriples}), which
intuitively means that partial solutions for the inside and the
outside described by the two triples can be
combined to a solution for the whole graph:

\begin{definition}[{\bf Complementary Triples in \NCM}]\label{def:complement}
Let $\mathcal{C}=(G,\mathcal{V})$ be a paths-independent c-graph. Let $\rho$ be a cyclic ordering of some subset $U$ of $V(G)$.
  Then, $T_\mathsf{in}=(M_\mathsf{in},P_\mathsf{in},D_\mathsf{in}),T_\mathsf{out}=(M_\mathsf{out},P_\mathsf{out},D_\mathsf{out})\in \NCM(\VVV,\rho)$ are {\em complementary} if:
   \begin{enumerate}
    \item\label{def:complement1} $D_\mathsf{in}\subseteq U\setminus (V(M_{\mathsf{out}})\cup P_{\mathsf{out}}\cup D_{\mathsf{out}})$, and $D_\mathsf{out}\subseteq U\setminus (V(M_{\mathsf{in}})\cup P_{\mathsf{in}}\cup D_{\mathsf{in}})$.\footnote{The reason why we write $\subseteq$ rather than $=$ is that some vertices in, e.g., $U\setminus (V(M_{\mathsf{out}})\cup P_{\mathsf{out}}\cup D_{\mathsf{out}})$ can be the endpoints of solution paths. For example, consider the vertex $b$ in \cref{fig:defComplementaryTriples}.}
    \item\label{def:complement2} Let $G_{T_\mathsf{in},T_\mathsf{out}}$ denote the graph on vertex set $V(M_{\mathsf{out}})\cup P_{\mathsf{out}}\cup V(M_{\mathsf{in}})\cup P_{\mathsf{in}}$ and edge set $M_\mathsf{in}\cup M_{\mathsf{out}}$, such that for every vertex in $P_\mathsf{in}$, and similarly, for every vertex in $P_{\mathsf{out}}$, we add a new vertex attached to it and belonging to the same cluster.\footnote{So, if a vertex belongs to both $P_\mathsf{in}$ and $P_\mathsf{out}$, we attach two new vertices to it.} Then, this graph is a collection of paths, such that all vertices in its vertex set that belong to the same cluster in $\VVV$ also belong to the same (single) path, and all vertices that belong to the same path also belong to the same cluster.
   \end{enumerate}
\end{definition}

The utility of complementary triples is in the following definition,
which specifies when a partial solution ${\cal S}$ for the
inner part satisfying $\MS(U,{\cal S})=T_\mathsf{out}$
can be combined with any partial solution ${\cal S}'$ for the outer part \mbox{that
satisfies $\MS(U,{\cal S}')=T_\mathsf{in}$.}

\begin{definition}[{\bf Partial Solution Compatible with $(T_\mathsf{in},I,E_\mathsf{in})$}]\label{def:compatiblePartSol}
Consider $(G,\VVV)$, a cyclic ordering $\rho$ of some $U\subseteq V(G)$, $T_\mathsf{in}=(M_{\mathsf{in}},P_{\mathsf{in}},D_{\mathsf{in}}) \in \NCM(\VVV,\rho)$,
  $I \subseteq V(G)\setminus U$, and $E_\mathsf{in}\subseteq E(G[U])$. Then, a partial solution $\cal S$ is {\em compatible} with $(T_{\mathsf{in}},I,E_{\mathsf{in}})$ if:
  \begin{enumerate}
    \item\label{def:compatiblePartSol1} $\MS(U,{\cal S})=T_\mathsf{out}$ and $T_\mathsf{in}$ are complementary.
    \item\label{def:compatiblePartSol2} Let $G'_\mathsf{in}=G[I\cup U]-(E(G[U])\setminus E_\mathsf{in})$ and $G_\mathsf{in}={G'}_\mathsf{in}-D_{\mathsf{in}}$. We have that $\cal S$ contains all and only the vertices in $G_{\mathsf{in}}$, and all (but not necessarily only)\footnote{As the paths in $\cal S$ can contain edges that are not edges in $G$.} edges in $G_{\mathsf{in}}$ between vertices in the same~cluster.
    \item\label{def:compatiblePartSol3} There exists a planar drawing $\varphi_\mathsf{in}$ of $G'_\mathsf{in}\cup E({\cal S})$ with an inner-face whose boundary contains $U$ (with, possibly, other vertices) ordered as by $\rho$.
    \item\label{def:compatiblePartSol4} Each path in ${\cal S}$ satisfies one of the following conditions: (a) it consists of all vertices of a cluster in $\VVV$ that belong to $G_\mathsf{in}$, and has no endpoint in~$U$, or (b) it has an endpoint in $U$.
  \end{enumerate}
\end{definition}

Towards the statement that will show the utility of compatibility (in
\cref{lem:unionYieldsSolution} ahead), we need one more definition,
which intuitively provides necessary conditions for obtaining a
solution for $(G,\VVV)$ from a partial solution that is compatible with $(T_\mathsf{in},I,E_\mathsf{in})$.

\begin{definition}[{\bf Sensibility of $(T_\mathsf{in},I,E_\mathsf{in})$}]
\label{def:sensible}
Consider $(G,\VVV)$, a cyclic ordering $\rho$ of some $U\subseteq V(G)$, $T_\mathsf{in}=(M_{\mathsf{in}},P_{\mathsf{in}},D_{\mathsf{in}}) \in \NCM(\VVV,\rho)$,
  $I \subseteq V(G)\setminus U$, and $E_\mathsf{in}\subseteq E(G[U])$. Then, $(T_\mathsf{in},I,E_\mathsf{in})$ is {\em sensible} if:
  \begin{enumerate}
    \item\label{def:sensibleItem1} There is no edge $\{u,v\}\in E(G)$ with  $v\in I$ and $u\in O$ for $O=V(G)\setminus(I\cup U)$.
    \item\label{def:sensibleItem2}  No vertex in $D_\mathsf{in}$ is adjacent in $G'_\mathsf{in}=G[I\cup U]-(E(G[U])\setminus E_\mathsf{in})$ to a vertex in the same cluster in $\VVV$. Additionally, no vertex in $U\setminus (V(M_\mathsf{in})\cup P_\mathsf{in}\cup D_\mathsf{in})$ is adjacent in $G'_\mathsf{out}=G[O\cup U]-E_\mathsf{in}$ for $O=V(G)\setminus (I\cup U)$ to a vertex in the same cluster in $\VVV$.
      \item\label{def:sensibleItem3}  No cluster in $\VVV$ has non-empty intersection with both $I\cup (U\setminus D_\mathsf{in})$ and $O\cup D_\mathsf{in}$ but not with $V(M_\mathsf{in})\cup P_\mathsf{in}$.
  \end{enumerate}
\end{definition}

We show that compatible solutions yield solutions to {\sc CPLS-Completion*} in $\NCM$.

\begin{lemma}\label{lem:unionYieldsSolution}
Consider $(G,\VVV)$, a cyclic ordering $\rho$ of some $U\subseteq V(G)$, $T_\mathsf{in}=(M_{\mathsf{in}},P_{\mathsf{in}},D_{\mathsf{in}}) \in \NCM(\VVV,\rho)$,
  $I \subseteq V(G)\setminus U$, and $E_\mathsf{in}\subseteq E(G[U])$. Suppose that $(T_\mathsf{in},I,E_\mathsf{in})$ is sensible. Additionally, consider
  \begin{itemize}
      \item a properly marked partial solution ${\cal S}_\mathsf{in}$ compatible with $(T_{\mathsf{in}},I,E_{\mathsf{in}})$, and
      \item a properly marked partial solution ${\cal S}_\mathsf{out}$ compatible with $\MS(U,{\cal S}_\mathsf{in})=(T_{\mathsf{out}},O,E_\mathsf{out})$ where $O=V(G)\setminus (I\cup U)$ and $E_\mathsf{out}=E(G[U])\setminus E_\mathsf{in}$.
  \end{itemize}
Then, $Z=E({\cal S}_\mathsf{in}\cup{\cal S}_\mathsf{out})\setminus E(G)$ is a solution to $(G,\VVV)$ as an instance of {\sc CPLS-Completion*}.
\end{lemma}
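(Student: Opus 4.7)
I will verify the four requirements of \textsc{CPLS-Completion*}: (i) $H=(V(G),E(G)\cup Z)$ is planar; (ii) every edge of $Z$ is a non-$G$ edge between two vertices in a common cluster; (iii) $H[V_i]$ is a single path for every $V_i\in\VVV$; and (iv) no edge of $Z$ is incident to a marked vertex. Items (ii) and (iv) come almost for free: by construction $Z\cap E(G)=\emptyset$, and $Z\subseteq E(\mathcal{S}_\mathsf{in})\cup E(\mathcal{S}_\mathsf{out})$, so its edges lie on cluster paths (hence join same-cluster vertices) and avoid marked vertices by the properly-marked hypothesis. The real content is in (i) and (iii).

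\smallskip\noindent\textit{Planarity by gluing along $\rho$.} By sensibility item~\ref{def:sensibleItem1}, $G$ has no edge between $I$ and $O$; together with the partition $E(G[U])=E_\mathsf{in}\cup E_\mathsf{out}$ this yields $E(G)=E(G'_\mathsf{in})\cup E(G'_\mathsf{out})$, so $E(H)=E(G'_\mathsf{in}\cup E(\mathcal{S}_\mathsf{in}))\cup E(G'_\mathsf{out}\cup E(\mathcal{S}_\mathsf{out}))$. Compatibility condition~\ref{def:compatiblePartSol3} supplies planar drawings $\varphi_\mathsf{in},\varphi_\mathsf{out}$ of these two subgraphs, each having an inner face whose boundary meets $U$ in cyclic order $\rho$. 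Reflecting $\varphi_\mathsf{out}$ if necessary so that the cyclic order on $U$ seen from \emph{outside} its inner face matches the order of $\varphi_\mathsf{in}$ seen from \emph{inside} its inner face, and then embedding the result into the inner face of $\varphi_\mathsf{in}$ with the two copies of each $u\in U$ identified, gives a planar drawing of $H$.

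\smallskip\noindent\textit{Cluster paths.} Fix a cluster $V_i$ and let $T'=\MS(U,\mathcal{S}_\mathsf{out})$, which is complementary to $T_\mathsf{out}$ by the compatibility of $\mathcal{S}_\mathsf{out}$. I first show $H[V_i]=\mathcal{S}_\mathsf{in}^i\cup\mathcal{S}_\mathsf{out}^i$ as edge sets, where the superscript $i$ denotes restriction to $V_i$: every intra-$V_i$ $G$-edge lies fully in $G'_\mathsf{in}$ or $G'_\mathsf{out}$ by sensibility~\ref{def:sensibleItem1}; by sensibility~\ref{def:sensibleItem2} neither endpoint can lie in $D_\mathsf{in}$ nor in $U\setminus(V(M_\mathsf{in})\cup P_\mathsf{in}\cup D_\mathsf{in})\supseteq D_\mathsf{out}$, so the edge lies in $G_\mathsf{in}$ or $G_\mathsf{out}$ and is captured by compatibility~\ref{def:compatiblePartSol2}. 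I then analyse $\mathcal{S}_\mathsf{in}^i\cup\mathcal{S}_\mathsf{out}^i$ in two cases. If $V_i\cap(V(M_\mathsf{in})\cup P_\mathsf{in})=\emptyset$, sensibility~\ref{def:sensibleItem3} forces $V_i\subseteq I\cup(U\setminus D_\mathsf{in})$ or $V_i\subseteq O\cup D_\mathsf{in}$; on the ``responsible'' side compatibility~\ref{def:compatiblePartSol4} delivers a single cluster path covering $V_i$, while sensibility~\ref{def:sensibleItem2} rules out any intra-$V_i$ edges on the opposite side. Otherwise $V_i$ meets $V(M_\mathsf{in})\cup P_\mathsf{in}$, and the cluster condition in the complementary definition (Definition~\ref{def:complement}, item~\ref{def:complement2}) applied to $(T_\mathsf{out},T')$ implies that $G_{T_\mathsf{out},T'}$ contains a single path through all $V_i$-vertices of $V(M_\mathsf{out})\cup P_\mathsf{out}\cup V(M')\cup P'$. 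Each edge of $M_\mathsf{out}$ in that skeleton path is realized by an $\mathcal{S}_\mathsf{in}$-path through $I$ (with internal $U$-vertices in $D_\mathsf{out}$), each edge of $M'$ by an $\mathcal{S}_\mathsf{out}$-path through $O$, and each pendant by a path ending inside $I$ or $O$; stitching these pieces along the $U$-vertices of $V_i$ yields a single path in $H[V_i]$ on $V_i$.

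\smallskip\noindent\textit{Main obstacle.} The delicate step is the second (boundary) case of the cluster-path analysis: verifying that the stitching produces \emph{exactly} one path---maximum degree~$2$, no cycle, and a single connected component---requires tracking, for each $V_i$-vertex $u\in U$, the role-pair $(r_\mathsf{out}(u),r'(u))$ where $r_\mathsf{out}(u)\in\{V(M_\mathsf{out}),P_\mathsf{out},D_\mathsf{out},\text{none}\}$ and similarly for $r'$. The disjointness constraints of the complementary definition (item~\ref{def:complement1}) together with the no-same-cluster-neighbour clauses of sensibility~\ref{def:sensibleItem2} are precisely what make the degree contributions from the two sides sum to $2$ at every such $u$, and what prevent the stitched pieces from closing into a cycle; the bookkeeping is routine but must be laid out explicitly to close the proof.
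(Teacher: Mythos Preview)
Your plan is essentially the paper's proof: both glue the two drawings along $U$ in cyclic order $\rho$ for planarity, use sensibility together with compatibility item~\ref{def:compatiblePartSol2} to show that $H[V_i]=\mathcal{S}_\mathsf{in}^i\cup\mathcal{S}_\mathsf{out}^i$, and invoke complementarity item~\ref{def:complement2} (for the pair $(T_\mathsf{out},T')$ with $T'=\MS(U,\mathcal{S}_\mathsf{out})$) to stitch the pieces into a single path in the boundary case. Your case split (``$V_i$ disjoint from $V(M_\mathsf{in})\cup P_\mathsf{in}$'' vs.\ not) is exactly the one the paper derives from sensibility item~\ref{def:sensibleItem3}.

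One imprecision to fix when you flesh out Case~1: you write that ``sensibility item~\ref{def:sensibleItem2} rules out any intra-$V_i$ edges on the opposite side'', but that clause only forbids $G$-edges in $G'_\mathsf{out}$ (resp.\ $G'_\mathsf{in}$), not saturating edges that $\mathcal{S}_\mathsf{out}$ might add between vertices of $V_i\cap U$. In the sub-case $V_i\subseteq I\cup(U\setminus D_\mathsf{in})$ with $V_i\cap U\neq\emptyset$, any such $u\in V_i\cap U$ that is not in $D_\mathsf{out}$ lands in $V(M_\mathsf{out})\cup P_\mathsf{out}$, and you then need the complementarity machinery you already reserved for Case~2 to finish; compatibility item~\ref{def:compatiblePartSol4} alone does not hand you a single path here. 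The paper's write-up is comparably terse at this spot, so this is a matter of tightening the exposition rather than a missing idea.
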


\begin{proof}
We first comment that, since ${\cal S}_\mathsf{in}$ and ${\cal S}_\mathsf{out}$ are properly marked, it is clear that $Z$ does not contain any edge incident to a marked vertex.

\medskip
\noindent{\bf Verifying Planarity.} We first argue that $G\cup Z$ is a planar graph. Since ${\cal S}_\mathsf{in}$ compatible with $(T_{\mathsf{in}},I,E_{\mathsf{in}})$, \cref{def:compatiblePartSol3} in \cref{def:compatiblePartSol} implies that there exists a planar drawing $\varphi_\mathsf{in}$ of $G'_\mathsf{in}\cup E({\cal S}_\mathsf{in})$ with an inner-face whose boundary contains $U$ (with, possibly, other vertices) ordered as by $\rho$, where $G'_\mathsf{in}=G[I\cup U]-E_\mathsf{out}$. 
Similarly, since ${\cal S}_\mathsf{out}$ compatible with $(T_{\mathsf{out}},I,E_{\mathsf{out}})$, \cref{def:compatiblePartSol3} implies there exists a planar drawing $\varphi_\mathsf{out}$ of $G'_\mathsf{out}\cup E({\cal S}_\mathsf{out})$ with an inner-face whose boundary contains $U$ (with, possibly, other vertices) ordered as by $\rho$, where $G'_\mathsf{out}=G[O\cup U]-E_\mathsf{in}$. Clearly, we can flip $\varphi_\mathsf{out}$ to obtain another planar drawing $\varphi'_\mathsf{out}$ of $G'_\mathsf{out}\cup E({\cal S}_\mathsf{out})$ whose outer-face's boundary contains $U$ (with, possibly, other vertices) ordered as by $\rho$. Due to \cref{def:sensibleItem1} in \cref{def:sensible}, we know that the (non-disjoint) union of $G'_\mathsf{in}$ and $G'_\mathsf{out}$ yields $G$. 
Thus, by taking the (non-disjoint) union of the drawings $\varphi_\mathsf{in}$ and $\varphi'_\mathsf{out}$, we obtain a planar drawing of $G\cup Z$. So, $G\cup Z$ is a planar graph.

\medskip
\noindent{\bf Verifying That All Edges Are Covered.} Consider some cluster $V_i\in \VVV$, and let $Z_i$ denote the subset of edges in $Z$ with both endpoints in $V_i$. We need to prove that $G[V_i]\cup Z_i$ is a path. Let $Z_{i,\mathsf{in}}=Z_i\cap E(S_\mathsf{in})$ and $Z_{i,\mathsf{out}}=Z_i\cap E(S_\mathsf{out})$. Let ${\cal S}_{i,\mathsf{in}}$ (resp., ${\cal S}_{i,\mathsf{out}}$) denote the collection of paths in ${\cal S}_\mathsf{in}$ (resp., ${\cal S}_\mathsf{out}$) between vertices in $V_i$.
Since ${\cal S}_\mathsf{in}$ is compatible with
$(T_{\mathsf{in}},I,E_{\mathsf{in}})$,
\cref{def:compatiblePartSol2} in
\cref{def:compatiblePartSol} implies that
$G_\mathsf{in}[V_i]\cup Z_{i,\mathsf{in}}$, where
$G_\mathsf{in}=G'_\mathsf{in}-D_\mathsf{in}$, is exactly the
collection of paths ${\cal S}_{i,\mathsf{in}}$. Similarly, since
${\cal S}_\mathsf{out}$ is compatible with $(T_{\mathsf{out}},I,E_{\mathsf{out}})$, \cref{def:compatiblePartSol2} implies that $G_\mathsf{out}[V_i]\cup Z_{i,\mathsf{out}}$, where $G_\mathsf{out}=G'_\mathsf{out}-D_\mathsf{out}$, is exactly the collection of paths~${\cal S}_{i,\mathsf{out}}$. 

Recall that we have already argued that the (non-disjoint) union of $G'_\mathsf{in}$ and $G'_\mathsf{out}$ equals $G$.
Next, we further argue that the (non-disjoint) union of $G_\mathsf{in}[V_i]$ and $G_\mathsf{out}[V_i]$ equals $G[V_i]$. To this end, it suffices to show that:  (I) $D_\mathsf{in}\cap D_{\mathsf{out}}=\emptyset$ (and hence $D_\mathsf{in}\cap V_i\subseteq V(G_\mathsf{out})$ and $D_\mathsf{out}\cap V_i\subseteq V(G_\mathsf{in})$), and (II) there does not exist an edge between a vertex in $D_\mathsf{in}\cap V_i$ (resp., $D_\mathsf{out}\cap V_i$) and a vertex in $V(G'_\mathsf{in})\cap V_i$ (resp., $V(G'_\mathsf{out})\cap V_i$). Because ${\cal S}_\mathsf{in}$ is compatible with $(T_{\mathsf{in}},I,E_{\mathsf{in}})$, we have that $T_\mathsf{out}$ and $T_\mathsf{in}$ are complementary, and hence, by \cref{def:complement1} in \cref{def:complement}, we get that $D_\mathsf{in}\cap D_\mathsf{out}=\emptyset$, therefore Item (I) holds. Because $(I_\mathsf{in},I,E_\mathsf{in})$ is sensible, Item (II) directly follows from \cref{def:sensibleItem2} in \cref{def:sensible}.

Overall, we conclude, so far, that $G[V_i]\cup Z_i$ is exactly the (non-disjoint) union of the  two collections of paths ${\cal S}_{i,\mathsf{out}}$ and ${\cal S}_{i,\mathsf{in}}$. Next, we verify that this union yields a single path.

\medskip
\noindent{\bf Verifying That the Union Yields a Single Path.} Recall that $\MS(U,{\cal S}_\mathsf{in})=T_\mathsf{out}=(M_\mathsf{out},P_\mathsf{out},D_\mathsf{out})$, and let $\MS(U,{\cal S}_\mathsf{out})=T'_\mathsf{in}=(M'_\mathsf{in},P'_\mathsf{in},D'_\mathsf{in})$. Note that, possibly, $T_\mathsf{in}\neq T'_\mathsf{in}$. By the definition of $\MS$:
\begin{enumerate}
\item\label{condition:MS1} $M_\mathsf{out}$ (resp., $M'_\mathsf{in}$) has an edge between the endpoints of every path in ${\cal S}_{\mathsf{in}}$ (resp., ${\cal S}_{\mathsf{out}}$) that has both endpoints in $U$, 
\item\label{condition:MS2} $P_\mathsf{out}\subseteq U$ (resp., $P'_\mathsf{in}\subseteq U$) consists of the vertices of degree $1$ in ${\cal S}_\mathsf{out}$ (resp., ${\cal S}_\mathsf{in}$) belonging to $U$ that are not in $V(M_\mathsf{out})$ (resp., $V(M'_\mathsf{in})$), and
\item\label{condition:MS3} $D_\mathsf{out}\subseteq U$ (resp., $D'_\mathsf{in}\subseteq U$) consists of the vertices of degree $2$ in ${\cal S}_{\mathsf{in}}$ (resp., ${\cal S}_{\mathsf{out}}$) belonging to $U$.
\end{enumerate}

Further, notice that $S_{\mathsf{in}}$ does not contain vertices in $D_\mathsf{in}$, and ${\cal S}_\mathsf{out}$ does not contain vertices in $D_\mathsf{out}$. In particular, the latter implies that $D_\mathsf{out}\subseteq U\setminus (V(M_\mathsf{in}')\cup P_\mathsf{in}'\cup D_\mathsf{in}')$. 
Since ${\cal S}_\mathsf{out}$ is compatible with $(T_{\mathsf{out}},I,E_{\mathsf{out}})$, by \cref{def:compatiblePartSol1} in \cref{def:compatiblePartSol},  $T_\mathsf{in}'$ and $T_\mathsf{out}$ are complementary. 
 So, by \cref{def:complement1} in \cref{def:complement}, $D_\mathsf{in}'\subseteq U\setminus(V(M_\mathsf{out})\cup P_\mathsf{out}\cup D_\mathsf{out})$. Thus, all vertices in $D_\mathsf{out}\cup D_\mathsf{in}'$ are of degree $2$ in ${\cal S}_{i,\mathsf{out}}\cup {\cal S}_{i,\mathsf{in}}$.

Due to \cref{def:sensibleItem3} in \cref{def:sensible} and \cref{def:compatiblePartSol4} in \cref{def:compatiblePartSol}, we know that if $V_i$ contains vertices only from $G_\mathsf{in}$ or only from $G_\mathsf{out}$, then ${\cal S}_{i,\mathsf{out}}\cup {\cal S}_{i,\mathsf{in}}$ is indeed a path (more precisely, one of these two sets is a path and the other is empty), and that, otherwise, every path in ${\cal S}_{i,\mathsf{in}}$ or in ${\cal S}_{i,\mathsf{out}}$ has an endpoint in $U\setminus (D_\mathsf{in}\cup D_\mathsf{out})$. In this latter case, by the conditions implied by the definition $\MS$, which are specified above, and by \cref{def:complement2} in \cref{def:complement}, we again derive that ${\cal S}_{i,\mathsf{in}}\cup {\cal S}_{i,\mathsf{out}}$ is a path. This completes the proof.
\end{proof}

The key player in partitioning an instance of the {\sc CPLS-Completion*} problem into two smaller instances is that of the augmented graph, defined as follows. 

\begin{SCfigure}[3.7][t]
\includegraphics[scale=1.3]{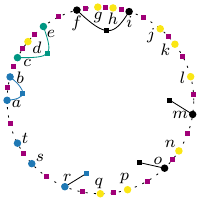}
\centering
\caption{Example for \cref{def:augmented}. Here, the vertices in $U$ are marked by disks, and the clusters in $\VVV$ are $\{\{a,b,p,q,r,s,t\},\{c,d,e\},\{f,g,h,i,l,m,n,o\},\{j,k\}\}$. Suppose $M=\{\{a,b\},\{c,e\},\{f,i\}\}, P=\{m,o,r\}, D=\{d,g,h,j,k,l,n,p,q\}$ and $U\setminus (V(M)\cup P\cup D)=\{s,t\}$. Regarding $\widetilde{\cal C}$, the new vertices are marked by squares, vertices belonging to their own singleton clusters are drawn in yellow, and the other clusters are: (i) $a,b,r,s,t$ and the two neighboring squares drawn inside the cycle (blue); (ii) $\{c,e\}$ and the neighboring square drawn inside the cycle (green); (iii) $\{f,i,m,o\}$ and the three neighboring squares drawn inside the cycle (black).}
\label{fig:augment}
\end{SCfigure}

\begin{longdefinition}[{\bf Augmented Graph}]\label{def:augmented}
Let $\mathcal{C}=(G,\mathcal{V})$ be a paths-independent c-graph. Let $\rho$ be a cyclic ordering of some subset $U$ of $V(G)$.
Let $(M,P,D) \in \NCM(\VVV,\rho)$. 
Towards the definition of the {\em augmented graph} \Call{AugmentGraph}{$(G,\VVV),\rho,(M,P,D)$}, we define an ``intermediate'' paths-independent c-graph $\widetilde{\mathcal{C}}=(\widetilde{G},\widetilde{\mathcal{V}})$ as follows:\footnote{Here, vertices in $V(M)\cup P$ will have degree 3, new vertices will have degree $2$ or $1$, and the remaining vertices in $U$ (some of which belong to $D$) will have degree $2$.}
\begin{enumerate}
  \item\label{augment1} Start with $\widetilde{G}$ being the edge-less graph on $U$ and with $\widetilde{\cal V}$ being the restriction of $\cal V$ to $U$. Add edges to obtain the cycle given by $\rho$. Subdivide each one of these edges, and make every new vertex be in its own new cluster. As $\widetilde{G}$ is a cycle, it has a unique embedding in the plane.
  \item\label{augment2} For every $e \in M$, add a new vertex adjacent to the endpoints of $e$ and belonging to the same cluster. Considering $\rho$ as the outer face, the added vertex and its associated edges have a unique face they can be inserted in. Thus, the resulting graph has a unique embedding with $\rho$ as its outer face.
  \item\label{augment3} For every $p \in P$, add a new vertex adjacent to $p$ and belonging to the same cluster. By an analogous argument as the one in the previous step, the resulting graph has a unique embedding with $\rho$ as its outer face.
  \item\label{augment4} Remove every vertex in $D$ from its cluster, and create a new one-vertex cluster for it.
 \end{enumerate}
Let $\Gamma$ be the unique planar embedding of $\widetilde{G}$ having $\rho$ as its outer face. Augment $\widetilde{G}$ to an ``inner triangulation'' as follows:
  \begin{enumerate}
  \item For every inner face $f$, add a vertex $v_f$ (belonging to its own new cluster) and make it adjacent to all vertices on the boundary of $f$. This yields an embedding $\Gamma'$ of a $2$-connected graph $G'$.
  \item For every inner face $f'$ in $\Gamma'$, we add a vertex $v_f'$ (belonging to its own new cluster) and make it adjacent to all vertices on the boundary of $f'$. This yields a unique embedding of a $3$-connected graph $G^*$.
  \end{enumerate}  
  The augmented graph is the (non-disjoint) union of $G^*$ and $G$. The vertices that are marked are those marked in $G$ as well as the new vertices made adjacent to vertices in $P$ while belonging to the same cluster.
\end{longdefinition}

We have the following immediate observation.
\begin{longobservation}\label{obs:restrictFaces}
Let $\mathcal{C}=(G,\mathcal{V})$ be a paths-independent c-graph. Let $\rho$ be a cyclic ordering of some subset $U$ of $V(G)$.
Let $(M,P,D) \in \NCM(\VVV,\rho)$. Then, the boundary of every
inner-face in $G^\star$ (see \cref{def:augmented}) consists of exactly three vertices
, such that to at most one of them new edges can be added,\footnote{Since the other two are either marked, or belong to singleton clusters, or already have two neighbors in the same cluster.} and the one to which edges can be added (if it exists) belongs to $U$, which is part of the cycle constructed in Step~\ref{augment1} in \cref{def:augmented}.
\end{longobservation}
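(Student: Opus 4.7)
My plan is to (i) classify the vertices of $G^\star$ by whether they can receive a new edge, (ii) verify that the two-round triangulation makes every inner face a triangle, and (iii) argue that each such triangle meets $U$ in at most one vertex. For the classification in (i), I would go through each category of vertex introduced in \cref{def:augmented}. The pendants attached in Step~\ref{augment3} are declared marked in the last sentence of the definition, so no new edges can be incident to them. The subdivision vertices from Step~\ref{augment1}, the vertices in $D$ after Step~\ref{augment4}, the Round~1 apices $v_f$, and the Round~2 apices $v_{f'}'$ are each placed in their own singleton cluster, so any cluster-internal edge at them would violate the singleton constraint. Finally, every matching apex $w_e$ from Step~\ref{augment2} already has its two neighbors in its own cluster, so its cluster-path degree budget is exhausted. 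Consequently, the only vertices of $G^\star$ that could possibly receive a new edge belong to $U$ (in fact, to $U\setminus D$), all of which sit on the cycle constructed in Step~\ref{augment1}.

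For step (ii), I would invoke the usual wheel-triangulation argument. In Round~1, inserting $v_f$ inside an inner face $f$ of $\Gamma$ and drawing one arc from $v_f$ into each angular sector of $f$ subdivides $f$ into exactly one triangle per edge of the facial walk of $f$, each bounded by $v_f$ together with the two endpoints of that walk-edge. Round~2 then inserts $v_{f'}'$ inside each such triangle and adds the three edges from $v_{f'}'$ to its vertices, splitting it into three sub-triangles. Hence every inner face of $G^\star$ is bounded by exactly three vertices.

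For step (iii), every inner triangle of $G^\star$ has one of the forms $\{v_{f'}', v_f, x\}$ or $\{v_{f'}', x, y\}$, where $\{x,y\}$ is an edge of $\widetilde G$ on the facial walk of the corresponding Round~1 face. Since $v_f$ and $v_{f'}'$ are not in $U$, it suffices to note that every edge of $\widetilde G$ has at most one endpoint in $U$: it is either half of a subdivided cycle edge from Step~\ref{augment1}, a matching edge $(u,w_e)$ from Step~\ref{augment2}, or a pendant edge from Step~\ref{augment3}, and in each case the non-$U$ endpoint is new to that step. Combining this with (i) yields both the ``at most one'' and the ``belongs to $U$'' parts of the conclusion.

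The main obstacle I anticipate is in step (ii): confirming that the Round~1 construction truly triangulates even when the facial walk of $f$ is non-simple, as can happen because the pendants from Step~\ref{augment3} act as bridges whose boundary walk revisits a vertex. I would address this by specifying that $v_f$'s incident arcs are drawn one per occurrence of a vertex on the walk, routed into the appropriate angular sector; each walk-edge then corresponds to a single triangular subregion, and the rest of the argument is routine bookkeeping from the classification.
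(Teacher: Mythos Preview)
Your proposal is correct, and in fact goes well beyond what the paper does: the paper states this observation as ``immediate'' and gives no proof at all. Your three-step plan (classify vertices, verify triangulation, check that each triangle meets $U$ at most once) is exactly the right unpacking.

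One small correction to your handling of the pendant obstacle in step~(ii). Your proposed fix---drawing one arc from $v_f$ per \emph{occurrence} of a vertex on the facial walk---would introduce parallel edges, which is not what \cref{def:augmented} says (``make it adjacent to all vertices on the boundary'' is a simple-graph operation). The cleaner resolution is already built into the definition: Round~1 is only claimed to produce a $2$-connected graph $G'$, not a triangulation. Because $G'$ is $2$-connected, every face of $\Gamma'$ is bounded by a simple cycle, and then Round~2 (adding $v_{f'}'$ into each such face) genuinely triangulates. Your step~(iii) dichotomy then goes through unchanged: every inner face of $G^\star$ is a triangle $\{v_{f'}',a,b\}$ with $\{a,b\}\in E(G')$, and every edge of $G'$ is either an edge of $\widetilde G$ (at most one endpoint in $U$, by your enumeration) or of the form $\{v_f,x\}$ (at most the single endpoint $x$ can lie in $U$). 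So you never needed Round~1 to triangulate; you only needed it to make the graph $2$-connected so that Round~2 does.
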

The central statement about augmented graphs is the following.

\begin{lemma}\label{lem:propertiesOfAug}
Consider $(G,\VVV)$, a cyclic ordering $\rho$ of some $U\subseteq V(G)$, $T_\mathsf{in}=(M_{\mathsf{in}},P_{\mathsf{in}},D_{\mathsf{in}}) \in \NCM(\VVV,\rho)$,
  $I \subseteq V(G)\setminus U$, and $E_\mathsf{in}\subseteq E(G[U])$.  Let $Z_\mathsf{in}\neq$NULL\footnote{We use NULL to algorithmically represent the non-existence of a solution (particularly in pseudocode).} be a solution to the instance $(G_A,\VVV_A)=$\Call{AugmentGraph}{$G'_\mathsf{in},\rho, T_\mathsf{in}$} of {\sc CPLS-Completion*} (where $G'_\mathsf{in}=G[I\cup U]-(E(G[U])\setminus E_\mathsf{in})$). Then, in polynomial time, we can compute a partial solution compatible with $(T_{\mathsf{in}},I,E_{\mathsf{in}})$ that is marked properly.     
\end{lemma}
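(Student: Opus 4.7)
My plan is to construct the partial solution ${\cal S}$ directly by ``restricting to the inside'' of the cycle used in the augmentation. Starting from any planar drawing $\varphi_A$ of $H_A = G_A \cup Z_\mathsf{in}$, for each cluster $V_i \in \VVV$ meeting $V(G_\mathsf{in})$ I take the path $\Pi_i$ that $H_A$ induces on the corresponding $\VVV_A$-cluster---namely $V_i \cap V(G_\mathsf{in})$ together with the matching-gadget vertices added in Step~\ref{augment2} of \cref{def:augmented} for the pairs of $M_\mathsf{in}$ inside $V_i$ and the pendants added in Step~\ref{augment3} for the vertices of $P_\mathsf{in} \cap V_i$---and I delete from $\Pi_i$ every matching-gadget and every pendant vertex. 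The resulting collection of sub-paths lies on $V_i \cap V(G_\mathsf{in})$ and forms the contribution of $V_i$ to ${\cal S}$. Proper marking is immediate: no edge of $Z_\mathsf{in}$ touches a marked vertex of the augmented graph (marked vertices include all pendants), while the surviving edges from $G$ are permitted to touch marked vertices.

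The conditions of \cref{def:compatiblePartSol} can then be dispatched one by one. For \cref{def:compatiblePartSol2}, the path $\Pi_i$ carries every $G_A$-edge between vertices of the $\VVV_A$-cluster, and since $G_\mathsf{in} \subseteq G_A$, every $G_\mathsf{in}$-edge between two $V_i$-vertices survives the deletion and hence lies in ${\cal S}$. For \cref{def:compatiblePartSol4}, the endpoints of each surviving sub-path are either adjacent to a deleted matching/pendant vertex (so they lie in $V(M_\mathsf{in}) \cup P_\mathsf{in} \subseteq U$, giving case~(b)), or they are original endpoints of $\Pi_i$; if both endpoints are of the latter type then $\Pi_i$ contains no matching or pendant vertex at all, so the sub-path equals $\Pi_i$ and covers exactly $V_i \cap V(G_\mathsf{in})$, yielding case~(a).

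For \cref{def:compatiblePartSol1} I set $T_\mathsf{out} = \MS(U, {\cal S})$ and check that $T_\mathsf{in}$ and $T_\mathsf{out}$ are complementary. The inclusions in \cref{def:complement1} of \cref{def:complement} hold because $D_\mathsf{in}$-vertices never enter ${\cal S}$, and because every vertex of $V(M_\mathsf{in}) \cup P_\mathsf{in}$ loses an adjacency upon deletion of its matching gadget or pendant and therefore has degree at most $1$ in ${\cal S}$, barring it from $D_\mathsf{out}$. For the structural condition in \cref{def:complement2} of \cref{def:complement}, I trace $\Pi_i$ linearly: each matching gadget $a - v_{ab} - b$ corresponds to the $M_\mathsf{in}$-edge $\{a,b\}$, each maximal sub-path with both endpoints in $U$ gives an $M_\mathsf{out}$-edge, each sub-path with precisely one $U$-endpoint contributes that endpoint to $P_\mathsf{out}$, and each pendant attachment becomes the corresponding pendant in $G_{T_\mathsf{in},T_\mathsf{out}}$. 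Stitching these pieces together along $\Pi_i$ yields exactly one path per cluster in $G_{T_\mathsf{in},T_\mathsf{out}}$, as required.

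The main technical hurdle is \cref{def:compatiblePartSol3}. I plan to obtain $\varphi_\mathsf{in}$ by restricting $\varphi_A$ to $G'_\mathsf{in} \cup E({\cal S})$. Because $G^*$ is $3$-connected its planar embedding is unique up to reflection and choice of outer face, so I may assume that in $\varphi_A$ the cycle $C$ from Step~\ref{augment1} of \cref{def:augmented} bounds a face of $G^*$ on one side; by \cref{obs:restrictFaces}, every inner face of $G^*$ on that side is a triangle offering at most one flexible $U$-vertex, which forces every vertex of $I$, every edge of $E_\mathsf{in}$, and every $Z_\mathsf{in}$-edge between two $U$-vertices to be drawn on the opposite side of $C$. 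Consequently, deleting from $\varphi_A$ all vertices and edges absent from $G'_\mathsf{in} \cup E({\cal S})$ collapses the previously triangulated side of $C$ into a single empty face whose boundary visits the vertices of $U$ in exactly the cyclic order $\rho$, providing the required inner face. The whole construction is clearly polynomial in the input size.
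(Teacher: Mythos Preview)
Your construction of ${\cal S}$ and the verifications of \cref{def:compatiblePartSol1}, \cref{def:compatiblePartSol2}, and \cref{def:compatiblePartSol4} track the paper's proof closely and are essentially correct. The gap is in your treatment of \cref{def:compatiblePartSol3}.

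You assert that \cref{obs:restrictFaces} ``forces every vertex of $I$ \dots\ to be drawn on the opposite side of $C$.'' This is not true. The observation only limits which boundary vertices of an inner triangle of $G^\star$ can receive \emph{new} edges; it does not constrain where the already-present edges of $G'_\mathsf{in}$ (and hence the $I$-vertices they carry) are placed in $\varphi_A$. Concretely, take any connected component $Q$ of $(G_A\cup Z_\mathsf{in})-V(G^\star)$ that is attached in $G_A\cup Z_\mathsf{in}$ to a single vertex $u\in U$ (for instance, a pendant $v\in I$ with unique neighbour $u$ and forming its own cluster). Since $u$ lies on the boundary of some inner triangle of $G^\star$, nothing prevents $Q$ from being drawn inside that triangle in $\varphi_A$. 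After you delete the auxiliary vertices of $G^\star$, such a component sits on the ``triangulated side'' of where $C$ used to be, so that side does not collapse to a single empty face as you claim, and your argument for the required inner face with $U$ on its boundary in the order $\rho$ breaks down.

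The paper confronts exactly this issue with an explicit ``cleaning'' step: it identifies every such component $Q$ drawn inside $C$, observes (via \cref{obs:restrictFaces}) that $Q$ is attached to the rest of the drawing through a single vertex $v\in U$, and redraws $Q$ together with its connecting edge into a face outside $C$ incident to $v$. Only after this redrawing does the restriction produce a drawing with the desired inner face. Your proof becomes correct once you insert this step before restricting $\varphi_A$.
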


\begin{proof}
Let ${\cal S}'_\mathsf{in}$ denote the subgraph of $G_A\cup Z_\mathsf{in}$ on all vertices, and on the edges in $G_A$ between vertices in the same cluster (w.r.t.~$\VVV_A$) and the edges in $Z_\mathsf{in}$. Further, let $\varphi'$ be a planar drawing of $G_A\cup Z_\mathsf{in}$.  Clearly, ${\cal S}'_\mathsf{in}$ is a partial solution that is marked properly, and $\varphi'$ coincides with the unique drawing of the $3$-connected graph $G^*$ from \cref{def:augmented}.

In what follows, we first show how to modify ${\cal S}'_\mathsf{in}$ and $\varphi'$ so that the only vertices drawn inside the cycle $C$ constructed in Step~\ref{augment1} in \cref{def:augmented} will be the new vertices (existing in $G_A$ but not in $G$). 

\medskip
\noindent{\bf Ensuring That the Inside of $C$ is ``Clean''.} Let $\cal Q$ be the connected components of $(G_A\cup Z_\mathsf{in})-V(G^*)$ drawn inside $C$. Due to \cref{obs:restrictFaces}, and since  $Z_\mathsf{in}$ is a solution to $(G_A,\VVV_A)$, we know that each component $Q\in{\cal Q}$ satisfies the following: $Q$ has a vertex adjacent in $G_A$ to a vertex $v\in V(M_\mathsf{in})\cup P_\mathsf{in}$, where $v\in V(C)$; further, $Q$ has no other vertex adjacent in $G_A\cup Z_\mathsf{in}$ to $v$ or to any other vertex in $(G_A\cup Z_\mathsf{in})-V(Q)$. Then, we simply redraw each such component $Q$ (along with the edge connecting it to $v$) in some face outside $C$ whose boundary contains $v$ (at least one such face exists). We let $\widetilde{\cal S}_\mathsf{in}$ and $\widetilde{\varphi}$ denote the results of these modifications, which can trivially be done in polynomial time. It is clear that $\widetilde{\cal S}_\mathsf{in}$ is a partial solution that is marked properly. Additionally, we let ${\cal S}_\mathsf{in}$ denote the properly marked partial solution obtained from ${\cal S}_\mathsf{in}$ by: (i) removing, from each of its paths having one or two endpoints not in $G$, these one or two endpoints (which must be vertices adjacent to vertices in $P_\mathsf{in}$ belonging to the same cluster as them); (ii) removing every path consisting of a single vertex on its own new singleton cluster;\footnote{Thus, we altogether remove all the vertices not in $G$ as well as all the vertices in $D_\mathsf{in}$.} (iii) removing, from all paths, all edges .belonging to $M_\mathsf{in}$, which might split some paths into several paths. No additional modifications will be made.  So, in what follows, we verify that ${\cal S}_\mathsf{in}$ is compatible with $(T_\mathsf{in},I,E_\mathsf{in})$, which will complete the proof.

\medskip\noindent{\bf Verifying Compatibility.} Here, we verify each of the conditions in \cref{def:compatiblePartSol}.

\smallskip
\noindent{\bf\em \cref{def:compatiblePartSol1}.} We need to show that $\MS(U,{\cal S}_\mathsf{in})=T_\mathsf{out}=(M_\mathsf{out},P_\mathsf{out},D_\mathsf{out})$ and $T_\mathsf{in}$ are complementary. To this end, we need to verify the satisfaction of the two conditions in \cref{def:complement}. By the construction of the augmented graph, all vertices in $D_\mathsf{in}$ belong to their own clusters, and so ${\cal S}_\mathsf{in}$ cannot include newly added edges (i.e., which do not already belong to $G'_\mathsf{in}$) incident to them, implying that $D_\mathsf{in}\subseteq U\setminus (V(M_\mathsf{out})\cup P_\mathsf{out}\cup D_\mathsf{out})$. Furthermore, by its construction, every vertex in $V(M_\mathsf{in})\cup P_\mathsf{in}$ is incident to one edge that is removed from $\widetilde{\cal S}_\mathsf{in}$ when constructing ${\cal S}_\mathsf{in}$, which implies that $D_\mathsf{out}\subseteq U\setminus (V(M_\mathsf{in})\cup P_\mathsf{in}\cup D_\mathsf{in})$.
Thus, \cref{def:complement1} in \cref{def:complement} is satisfied. For the second condition, notice that in $\widetilde{\cal S}_\mathsf{in}$, every cluster in $\VVV_A$ corresponds to a single path. Now, contract the edges in $\widetilde{\cal S}_\mathsf{in}$ belonging to $G'_\mathsf{in}$ in the augmented graph, and keep only edges connecting vertices belonging to the same cluster---then, we obtain the graph $G_{T_\mathsf{in},T_\mathsf{out}}$, excluding, possibly, for every vertex in $P_\mathsf{out}$, the newly added edge incident to it when constructing $G_{T_\mathsf{in},T_\mathsf{out}}$.
So, the graph we created is a collection of paths as required in \cref{def:complement2} in \cref{def:complement}, and it stays so also after adding the ``missing'' edges (if any) since by the definition of $P_\mathsf{out}$, the vertices within it are endpoints of paths in $\widetilde{\cal S}_\mathsf{in}$. Thus, by the definition of ${\cal S}_\mathsf{in}$ based on $\widetilde{\cal S}_\mathsf{in}$, we conclude that \cref{def:complement2} in \cref{def:complement} is satisfied.

\smallskip
\noindent{\bf\em \cref{def:compatiblePartSol2}.} Because the augmented graph contains $G_\mathsf{in}$, and when creating ${\cal S}_\mathsf{in}$, we removed all vertices and edges that do not belong to this graph, we derive the satisfaction of this condition.

\smallskip
\noindent{\bf\em \cref{def:compatiblePartSol3}.} The drawing $\widetilde{\varphi}$ certifies the satisfaction of this condition.

\smallskip
\noindent{\bf\em \cref{def:compatiblePartSol4}.} Recall that every path in $\widetilde{\cal S}_\mathsf{in}$ consists of all and only the vertices of a single cluster in $\VVV_A$. Since all of the edges that are removed when we create ${\cal S}_\mathsf{in}$ from $\widetilde{\cal S}_\mathsf{in}$ are incident to vertices in $U$, we derive the satisfaction of this condition.
\end{proof}

\smallskip
\noindent
\textbf{The Algorithm and Its Correctness.}\quad
We present the pseudocode of the algorithm as
\cref{alg:gammaone}. Here, \Call{BruteForce}{$G,\VVV$} is a
procedure that iterates over all edge subsets, and for each one of
them, checks whether it is a solution (which, in particular, means
that it does not include edges incident to marked vertices). It either returns such a subset (if one exists) or NULL. Notice that the call made in Lines~\ref{line:callAug2} and~\ref{line:fix2} are valid (i.e., we make the calls with triples in $\NCM$) due to \cref{obs:inNCM}.
\algrenewcommand\algorithmicindent{1.0em}%
\begin{algorithm}[b!]
    \caption{}
    \label{alg:gammaone}
    \begin{algorithmic}[1]
        \Procedure{Solve}{$G,\VVV$}
          \If{$|V(G)| \leq 100$}\Comment{The number $100$ represents
            an arbitrary constant.}
          \State \Return \Call{BruteForce}{$G,\VVV$}
          \EndIf
          \For{$U \subseteq V(G)$ with $|U|\leq 2 \sqrt{2|V(G)|}$}
            \For{each cyclic ordering $\rho$ of $U$}
              \For{$I \subseteq V(G)\setminus U$ with $|I|\leq
                2|V(G)\setminus U|/3$ and $|I|\geq |V(G)\setminus U|/3$}
                \For{each partition $\{E_\mathsf{in},E_\mathsf{out}\}$ of $E(G[U])$}                  
                  \For{$T_\mathsf{in}=(M_\mathsf{in},P_\mathsf{in},D_\mathsf{in}) \in \NCM(\VVV,\rho)$ (enumerate by Obs.~\ref{obs:computeNCM})}
                    \If{$(T_\mathsf{in},I,E_\mathsf{in})$ is not sensible (\cref{def:sensible})}\label{line:isSensible}               
                    \State break
                    \EndIf  
                
                    \State $(G_A,\VVV_A) \leftarrow$ \Call{AugmentGraph}{$G[I\cup U]-E_\mathsf{out},\rho, T_\mathsf{in}$} 
                    \State $Z_\mathsf{in} \leftarrow$ \Call{Solve}{$G_A,\VVV_A$}

                    \If{$Z_\mathsf{in} \neq \NULL$}
                    \State Use the algorithm in \cref{lem:propertiesOfAug} w.r.t.~$T_\mathsf{in},I,E_\mathsf{in},Z_\mathsf{in}$ to get ${\cal S}_\mathsf{in}$.
                
                      \State $T_\mathsf{out}=(M_\mathsf{out},P_\mathsf{out},D_\mathsf{out}) \leftarrow \MS(U,{\cal S}_\mathsf{in})$
                      \State $O=V(G)\setminus (I\cup U)$
                      \State\label{line:callAug2} $(G_B,\VVV_B) \leftarrow$ \Call{AugmentGraph}{$G[O\cup U]-E_\mathsf{in},\rho,T_\mathsf{out}$}
                      \State$ Z_\mathsf{out} \leftarrow$ \Call{Solve}{$G_B,\VVV_B$}

\If{$Z_\mathsf{out} \neq \NULL$}
                      \State\label{line:fix2} Use the algorithm in \cref{lem:propertiesOfAug} w.r.t.~$T_\mathsf{out},O,E_\mathsf{out},Z_\mathsf{out}$ to get ${\cal S}_\mathsf{out}$.
                      
                      \State \Return $E({\cal S}_\mathsf{in}\cup {\cal S}_\mathsf{out})\setminus E(G)$
                      \EndIf
                      \EndIf
                  \EndFor
                \EndFor
              \EndFor
            \EndFor
          \EndFor  
          \Return $\NULL$
        \EndProcedure        
        \algrenewcommand\algorithmicindent{1.0em}%
    \end{algorithmic}
\end{algorithm}
\algrenewcommand\algorithmicindent{2.0em}%

For the proof of correctness, we will use \cref{lem:unionYieldsSolution,lem:propertiesOfAug}, and the following proposition.

\begin{longproposition}[\cite{lipton1979separator}]\label{prop:planarSeparator}
Let $G$ be a graph with a planar drawing $\varphi$. Then, there exists
a noose $N$ (intersection $\varphi$ only at the vertices of $G$) such that:
\begin{itemize}
    \item The set of vertices  of $G$ that $N$ intersects, say, $U$, satisfies $|U|\leq 2\sqrt{2|V(G)|}$.
    \item The set of vertices contained in the strict interior of $N$,
      say, $I$, satisfies $|I|\leq 2|V(G)\setminus U|/3$ and $|I|\geq
      |V(G)\setminus U|/3$.
\end{itemize}
\end{longproposition}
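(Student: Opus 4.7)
The plan is to reduce the statement to the classical simple-cycle separator theorem for 2-connected planar triangulations due to Miller, and then interpret the resulting separator cycle as a noose in $\varphi$. The argument proceeds in three stages: triangulation, separator extraction, and translation.

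First, I would extend $(G,\varphi)$ to a 2-connected planar triangulation $G^+$ on the same vertex set, by adding chords inside every non-triangular face of $\varphi$ (and, if necessary, first adding edges to eliminate cut vertices before triangulating). The resulting embedding $\varphi^+$ refines $\varphi$ and involves no new vertices, so any simple cycle of $G^+$ induces a partition of $V(G)=V(G^+)$ into cycle vertices, strict interior, and strict exterior that is also a valid separation of $G$.

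Second, I would invoke Miller's cycle separator theorem as a black box: in a 2-connected planar triangulation on $n$ vertices, there exists a simple cycle $C$ with $|V(C)|\leq 2\sqrt{2n}$ such that neither the strict interior nor the strict exterior of $C$ contains more than $2n/3$ of the vertices. The standard proof performs BFS from an arbitrary root, locates a BFS level containing the weighted median of the vertices, closes this level into a simple cycle by combining two BFS-tree paths with a carefully chosen non-tree edge, and uses a centroid-style balancing argument to achieve the $2/3$-balance; the triangulation hypothesis ensures that the relevant fundamental cycles are short and simple. I would cite Miller (1986) rather than reproduce this argument.

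Third, I would translate $C$ into a noose of $\varphi$. The edges of $C$ trace a simple closed curve in $\varphi^+$; perturbing this curve slightly off each edge into one of the two adjacent triangular faces of $\varphi^+$ yields a simple closed curve $N$ that intersects $G^+\supseteq G$ only at the vertices $V(C)$. Since every face of $\varphi^+$ is a triangle contained in some face of $\varphi$, the perturbed curve $N$ traverses each face of $\varphi$ at most once, hence is a noose. Setting $U:=V(C)$ and letting $I$ be the set of vertices of $G$ strictly inside $N$, the bound $|U|\leq 2\sqrt{2n}$ is immediate, and Miller's balance bound on $G^+$ gives both $|I|\leq 2n/3$ and $|V(G)\setminus(U\cup I)|\leq 2n/3$; combined with $|I|+|V(G)\setminus(U\cup I)|=n-|U|$, this yields the two-sided bound stated in the proposition (the exact constants can be matched either by a trivial rescaling or by invoking the weighted form of Miller's theorem with weight $1$ on $V(G)\setminus U$ and $0$ on $U$, obtained via the same BFS-and-centroid argument). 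The main obstacle is Miller's cycle separator theorem itself, combining the BFS-level analysis, fundamental-cycle closure, and centroid balancing; given that result, the triangulation and noose-translation steps are routine.
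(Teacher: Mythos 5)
The paper does not prove this proposition: it cites it directly from Lipton--Tarjan as a black box, so there is no internal proof to compare against. Your route---triangulate, invoke Miller's simple cycle separator theorem on the triangulation, and project the cycle back as a noose in the original drawing---is the standard way to obtain a \emph{cyclic} (as opposed to merely vertex-set) planar separator, and matches the constant $2\sqrt{2n}$ that the proposition states. That part is sound. However, there are two concrete gaps.

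First, the balance derivation does not go through as written. From $|I|\leq 2n/3$, $|V(G)\setminus(U\cup I)|\leq 2n/3$, and $|I|+|V(G)\setminus(U\cup I)|=n-|U|$ you obtain only $|I|\geq n/3-|U|$, which is strictly weaker than the claimed $|I|\geq (n-|U|)/3$ whenever $|U|>0$; indeed $n/3-|U|\geq (n-|U|)/3$ forces $|U|\leq 0$. Your suggested repair---apply the weighted form with weight $1$ on $V(G)\setminus U$ and $0$ on $U$---is circular, since $U$ is exactly the output of the separator theorem and is not known when the weights must be chosen. To actually obtain a balance relative to $n-|U|$ rather than $n$, one must either argue from inside the BFS-and-fundamental-cycle proof (where the choice of closing edge is made to balance the non-separator vertices) or settle for a weaker constant and adjust the loop bounds in the algorithm accordingly.

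Second, the translation step silently assumes the perturbed cycle is a noose of $\varphi$, i.e.\ that it traverses each face of the \emph{original} drawing at most once. That does not follow from perturbing $C$ into adjacent triangles of $\varphi^+$: if a single face $f$ of $\varphi$ was triangulated by several chords and the cycle $C$ uses two non-consecutive such chords, the resulting curve enters and leaves $f$ twice, even though it crosses each face of $\varphi^+$ at most once. You need an additional re-routing argument inside each face of $\varphi$ (shortcutting the arcs of the curve within $f$ so its intersection with $f$ becomes a single arc) together with a check that this re-routing does not move vertices of $G$ across the curve and hence preserves the balance.
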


We are now ready to conclude the correctness of the algorithm.

\begin{longlemma}\label{lem:algoCorrect}
\cref{alg:gammaone} solves the {\sc CPLS-Completion*} problem.
 \end{longlemma}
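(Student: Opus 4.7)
My plan is induction on $n = |V(G)|$. The base case $n \leq 100$ is immediate because \Call{BruteForce}{$G,\VVV$} enumerates all edge subsets and hence finds a valid solution whenever one exists. For the inductive step I would argue soundness and completeness separately, assuming the algorithm is correct on all instances with strictly fewer vertices and postponing the size analysis of the recursive calls until the end.

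For soundness, suppose the algorithm returns $Z = E({\cal S}_\mathsf{in} \cup {\cal S}_\mathsf{out}) \setminus E(G)$. The explicit check on line~\ref{line:isSensible} ensures that $(T_\mathsf{in}, I, E_\mathsf{in})$ is sensible. The inductive hypothesis then gives that $Z_\mathsf{in}$ and $Z_\mathsf{out}$ are valid solutions to the two augmented instances, so \cref{lem:propertiesOfAug} yields properly marked partial solutions ${\cal S}_\mathsf{in}$ compatible with $(T_\mathsf{in}, I, E_\mathsf{in})$ and ${\cal S}_\mathsf{out}$ compatible with $(T_\mathsf{out}, O, E_\mathsf{out}) = (\MS(U, {\cal S}_\mathsf{in}), O, E_\mathsf{out})$. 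Soundness then follows by a direct application of \cref{lem:unionYieldsSolution}.

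For completeness, let $Z^*$ be a solution and $\varphi^*$ a planar drawing of $H^* = (V(G), E(G) \cup Z^*)$. By \cref{prop:planarSeparator} applied to $H^*$, there exists a noose $N^*$ whose vertex set $U^*$ has size at most $2\sqrt{2n}$ and whose strict interior $I^*$ meets the balance bounds used in the algorithm's loops. From $\varphi^*$ I extract the cyclic ordering $\rho^*$ along $N^*$, the partition $\{E_\mathsf{in}^*, E_\mathsf{out}^*\}$ of $E(G[U^*])$ by side, and the cluster-path fragments ${\cal S}_\mathsf{in}^*, {\cal S}_\mathsf{out}^*$ drawn on each side. Setting $T_\mathsf{in}^* = \MS(U^*, {\cal S}_\mathsf{out}^*)$, I would verify each condition in \cref{def:sensible}: \cref{def:sensibleItem1} follows because $N^*$ crosses no edges of $H^*$; \cref{def:sensibleItem2} because any same-cluster $G$-edge at a ``deleted'' vertex would be a path edge in $H^*$ drawn on the opposite side, contradicting the definition of $T_\mathsf{in}^*$; and \cref{def:sensibleItem3} because a cluster with vertices on both sides of $N^*$ must have its path in $H^*$ traverse $U^*$. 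Since the algorithm enumerates all such tuples, it will at some iteration try exactly $(U^*, \rho^*, I^*, E_\mathsf{in}^*, T_\mathsf{in}^*)$.

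The main technical obstacle, which I expect to be the subtlest part of the argument, is to show that both recursive calls then succeed. For the inner call I would construct a solution to $(G_A, \VVV_A)$ by realizing the augmentation gadget of \cref{def:augmented} in a thin annulus lying just outside $N^*$ within $\varphi^*$ and combining it with the inner restriction of $\varphi^*$; the cluster paths inherited from ${\cal S}_\mathsf{in}^*$ together with the matching/pendant vertices introduced by the construction saturate all augmented clusters, and the IH then delivers some $Z_\mathsf{in} \neq \NULL$. For the outer call the difficulty is that the ${\cal S}_\mathsf{in}$ returned via \cref{lem:propertiesOfAug} need not equal ${\cal S}_\mathsf{in}^*$, so the derived $T_\mathsf{out} = \MS(U^*, {\cal S}_\mathsf{in})$ may differ from $\MS(U^*, {\cal S}_\mathsf{in}^*)$; I would handle this by observing that \cref{def:compatiblePartSol1} in \cref{def:compatiblePartSol} forces $T_\mathsf{out}$ to be complementary to $T_\mathsf{in}^*$, and then rerouting the outer part of $\varphi^*$ to realize the non-crossing matching encoded by $T_\mathsf{out}$ within the exterior disk of $N^*$---this produces a partial solution on the outer side from which, by the analogue of the inner construction, I get a valid solution to $(G_B, \VVV_B)$ and invoke the IH once more. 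Termination finally holds because both augmented graphs have at most $\max(|I^*|,|O^*|) + |U^*| + O(|U^*|) \leq 2n/3 + O(\sqrt{n})$ vertices (each step of \cref{def:augmented} adds only $O(|U^*|)$ new vertices), which is strictly less than $n$ past the base-case threshold.
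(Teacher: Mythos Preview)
Your proof follows the same inductive skeleton as the paper: brute force in the base case, soundness via \cref{lem:propertiesOfAug} and \cref{lem:unionYieldsSolution}, completeness by extracting a witness iteration from a hypothetical solution via \cref{prop:planarSeparator}. Two points deserve comment.

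First, you set $T_\mathsf{in}^*=\MS(U^*,{\cal S}_\mathsf{out}^*)$ from the \emph{outer} partial solution, which is the correct choice; the paper's text literally defines its ${\cal S}^Z_\mathsf{out}$ as the subgraph drawn \emph{inside} $N$, which appears to be a slip---with that reading $Z^\star_\mathsf{in}$ would not in general solve $(G_A,\VVV_A)$ (e.g.\ a cluster path $a\text{--}b\text{--}c\text{--}d\text{--}e$ with $a,e\in I$, $b,d\in U$, $c\in O$ already breaks it). Your formulation matches the intended semantics that the triple used to augment the inner instance encodes the behaviour of the \emph{outer} side.

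Second, you are right that the paper simply asserts ``$Z^\star_\mathsf{out}$ is a solution to $(G_B,\VVV_B)$'' without arguing why this holds for the algorithm's $T_\mathsf{out}=\MS(U,{\cal S}_\mathsf{in})$ rather than the ``true'' one. However, the rerouting you propose is more than you need: the complementarity you already invoke does all the work. By \cref{def:compatiblePartSol1}, $T_\mathsf{out}$ is complementary to $T_\mathsf{in}^*=\MS(U,{\cal S}_\mathsf{out}^*)$. Then \cref{def:complement1} forces every $d\in D_\mathsf{out}$ to lie in $U\setminus(V(M_\mathsf{in}^*)\cup P_\mathsf{in}^*\cup D_\mathsf{in}^*)$, i.e.\ $d$ is isolated in ${\cal S}_\mathsf{out}^*$, so deleting $D_\mathsf{out}$ from the outer clusters loses no edges; the same condition keeps the degree of every vertex in $V(M_\mathsf{out})\cup P_\mathsf{out}$ at most $1$ in ${\cal S}_\mathsf{out}^*$, so adding the gadget edges keeps all degrees at most $2$. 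Finally \cref{def:complement2} says $G_{T_\mathsf{in}^*,T_\mathsf{out}}$ is a single path per cluster, and since the outer subpaths of ${\cal S}_\mathsf{out}^*$ realise exactly the $M_\mathsf{in}^*$-edges and $P_\mathsf{in}^*$-pendants of that graph, gluing them to the $T_\mathsf{out}$-gadget yields one path per augmented cluster. Hence $Z^\star_\mathsf{out}$ itself solves $(G_B,\VVV_B)$, and no modification of $\varphi^*$ is required.
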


\begin{proof}
The proof is by induction on $|V(G)|$. When $|V(G)|\leq 100$, correctness is trivial since the algorithm performs brute-force. So, next suppose that $|V(G)|>100$ and that the algorithm is correct whenever it is called with a graph having fewer vertices than $|V(G)|$.

\medskip
\noindent{\bf Reverse Direction.} For the proof of this direction, suppose that the algorithm returns $Z=E({\cal S}_\mathsf{in}\cup{\cal S}_\mathsf{out})\setminus E(G)$. Then, we need to prove that $Z$ is a solution. Let $U,\rho,I,E_\mathsf{in},\mathsf{out},T_\mathsf{in},Z_\mathsf{in},T_\mathsf{out},O,Z_\mathsf{out}$ be the entities corresponding to the iteration when $Z$ is returned.
Due to Line~\ref{line:isSensible} in \cref{alg:gammaone}, $(T_\mathsf{in},I,E_\mathsf{in})$ is sensible. Moreover, due to \cref{lem:propertiesOfAug}, we have that ${\cal S}_\mathsf{in}$ is a partial solution compatible with $(T_{\mathsf{in}},I,E_{\mathsf{in}})$,  and
 ${\cal S}_\mathsf{out}$ is a partial solution compatible with $(T_{\mathsf{out}},O,E_\mathsf{out})$, with both being marked properly.
So, due to \cref{lem:unionYieldsSolution}, the proof of this direction is complete.

\medskip
\noindent{\bf Forward Direction.} For the proof of this direction, suppose a solution $Z^\star$ exists. We need to prove that the algorithm does not return NULL (and hence, by the correctness of the reverse direction, this would mean that it returns a solution). Let $G^\star$ denote the graph obtained from $G$ by adding the edges in $Z^\star$. Let ${\cal S}^\star$ denote the paths corresponding to the different clusters in $G^\star$. Let $\varphi$ be some planar drawing of $G^\star$. By \cref{prop:planarSeparator}, there exists a noose $N$ such that 
\begin{itemize}
    \item The set of vertices  of $G^\star$ that $N$ intersects, say, $U$, satisfies $|U|\leq 2\sqrt{2|V(G)|}$, and $N$ does not intersect any edge of $G^\star$.
    \item The set of vertices of $G^\star$ contained in the strict interior of $N$, say, $I$, satisfies $|I|\leq 2|V(G)|/3$ and $|I|\geq |V(G)\setminus S|/3$.
\end{itemize}
Let $\rho$ denote the cyclic ordering in which $N$ traverses $U$, and let $\{E_\mathsf{in},E_\mathsf{out}\}$ denote the partition of $E(G[U])$ according to which edges are drawn (by $\varphi$) on the inside and on the outside of $N$. 
Let $\{Z^\star_\mathsf{in},Z^\star_\mathsf{out}\}$ denote the partition of $Z^\star$ according to which edges are drawn on the inside and on the outside of $N$. Let ${\cal S}^Z_\mathsf{out}$ denote the partial solution obtained from the subgraph of $G^\star$ drawn inside and on the boundary of $N$ (i.e., having $N$ on the ``outside'' in its outer-face) when we only keep edges connecting vertices in the same cluster. Since $G^\star$ is planar, ${\cal S}^Z_\mathsf{out}$ is compatible with $\rho$. Let $T_\mathsf{in}=\MS(U,{\cal S}^Z_\mathsf{out})$. By \cref{obs:inNCM}, $T_\mathsf{in}\in\NCM(\VVV,\rho)$. In what follows, consider the iteration of \cref{alg:gammaone} corresponding to these $U,I,\{E_\mathsf{in},E_\mathsf{out}\}$ and $T_\mathsf{in}$.

First, we claim that $(T_\mathsf{in},I,E_\mathsf{in})$ is sensible. Here, \cref{def:sensibleItem1,def:sensibleItem3} follow from the choice of $N$, ensuring that $U$ separates $I$ from $O=V(G)\setminus (U\cup I)$ in $G^\star$ (and, so, also in particular in $G$). Also, from this, and since in ${\cal S}^\star$, every vertex in $D_\mathsf{in}$ (resp., $U\setminus V(M_\mathsf{in}\cup P_\mathsf{in}\cup D_\mathsf{in})$) only has neighbors in $G'_\mathsf{in}$ (resp., $G'_\mathsf{out}$) while ${\cal S}^\star$ includes all edges between vertices in the same cluster, we get that \cref{def:sensibleItem2} is satisfied as well. 

Second, we claim that $Z_\mathsf{in}$ computed in the iteration under consideration must not be NULL. To this end, it suffices to note that  $Z^\star_\mathsf{in}$ is a solution to the instance $(G_A,\VVV_A)$ of {\sc CPLS-Completion*} constructed by the algorithm. Similarly, $Z_\mathsf{out}$ computed in the iteration under consideration must not be NULL, since  $Z^\star_\mathsf{out}=Z^\star\setminus Z^\star_\mathsf{in}$ is a solution to the instance $(G_B,\VVV_B)$ of {\sc CPLS-Completion*} constructed by the algorithm. Hence, in the iteration under consideration, the algorithm will return a set of edges, which, in particular, means that it does not return NULL.
\end{proof} 

\subparagraph{Running Time Analysis.}

\begin{longlemma}\label{lem:runtime}
    \cref{alg:gammaone} runs in time $8^{n+\bigoh(\sqrt{n}\log n)}=2^{\bigoh(n)}$.
\end{longlemma}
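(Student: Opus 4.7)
The plan is to set up a recurrence $T(n)$ for the running time of \textsc{Solve} on an $n$-vertex instance, bound the branching factor per recursive level by counting the combinations enumerated in the nested \textbf{for}-loops of \cref{alg:gammaone}, bound the size of the two recursive subinstances (which are augmented graphs), and finally solve the recurrence by unrolling it and bounding the resulting geometric sums. I would first note that I may assume $G$ is planar: otherwise the algorithm can detect this in polynomial preprocessing and return \NULL, since a non-planar instance has no linear saturation.

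The first step is to bound the branching factor $B(n)$ at level $n$. The number of sets $U$ of size at most $2\sqrt{2n}$ is bounded by $\binom{n}{\le 2\sqrt{2n}} \le n^{2\sqrt{2n}+1} = 2^{\bigoh(\sqrt{n}\log n)}$, and for each such $U$ the cyclic orderings $\rho$ contribute an additional factor of at most $|U|! = 2^{\bigoh(\sqrt{n}\log n)}$. The choice of $I\subseteq V(G)\setminus U$ contributes the dominant factor $2^{|V(G)\setminus U|}\le 2^n$. Planarity of $G$ gives $|E(G[U])|\le 3|U|-6=\bigoh(\sqrt{n})$, so the partitions $\{E_\mathsf{in},E_\mathsf{out}\}$ contribute $2^{\bigoh(\sqrt{n})}$, and by \cref{obs:computeNCM}, $|\NCM(\VVV,\rho)|=2^{\bigoh(|U|)}=2^{\bigoh(\sqrt{n})}$. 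Multiplying yields $B(n)=2^{n+\bigoh(\sqrt{n}\log n)}$.

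The second step is to bound the size of each recursive call. I would verify directly from \cref{def:augmented} that the augmentation adds only $\bigoh(|U|)$ new vertices: $\bigoh(|U|)$ subdivision vertices in Step~\ref{augment1}, at most $|U|/2$ matching vertices in Step~\ref{augment2}, at most $|U|$ pendant vertices in Step~\ref{augment3}, and $\bigoh(|U|)$ triangulation centers (since the skeleton built in Steps~\ref{augment1}--\ref{augment3} is a planar graph with $\bigoh(|U|)$ vertices and hence $\bigoh(|U|)$ faces). Consequently each augmented instance has at most $\max\{|I|,|O|\}+\bigoh(|U|)\le 2n/3+\bigoh(\sqrt{n})$ vertices, where we use $|I|,|O|\le 2|V(G)\setminus U|/3$. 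This yields the recurrence
\[
T(n) \;\le\; 2\cdot B(n)\cdot T\!\left(\tfrac{2n}{3}+\bigoh(\sqrt{n})\right) + n^{\bigoh(1)},
\]
with the base case $T(n)=n^{\bigoh(1)}$ once $n\le 100$.

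Finally, I would solve the recurrence by unrolling. Letting $n_k$ denote the instance size at depth $k$, the relation $n_{k+1}\le 2n_k/3 + c\sqrt{n_k}$ implies $n_k$ decays geometrically (with the $\bigoh(\sqrt{n})$ slack also forming a geometric series in $(2/3)^{k/2}$), so $\sum_k n_k \le 3n + \bigoh(\sqrt{n})$ and $\sum_k \sqrt{n_k}\log n_k = \bigoh(\sqrt{n}\log n)$. Multiplying the per-level branching and absorbing the $2^{\bigoh(\log n)}$ from the $\bigoh(\log n)$ recursion depth gives
\[
T(n) \;\le\; 2^{\sum_k n_k\,+\,\bigoh(\sum_k \sqrt{n_k}\log n_k)} \;=\; 2^{3n+\bigoh(\sqrt{n}\log n)} \;=\; 8^{n+\bigoh(\sqrt{n}\log n)}.
\]
The main obstacle I anticipate is the bookkeeping in this last step: verifying that (i) the $\bigoh(\sqrt{n})$ additive slack introduced at each augmentation step does not accumulate beyond the claimed additive error across the $\bigoh(\log n)$ levels, and (ii) the planarity assumption on $G$ (needed to obtain $|E(G[U])|=\bigoh(\sqrt{n})$ rather than $\bigoh(n)$) remains valid after recursion---which holds because the augmented graph is planar by construction, so every recursive call is also on a planar instance.
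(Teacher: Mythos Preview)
Your proposal is correct and follows essentially the same approach as the paper's proof: bound the branching factor by $2^{n+\bigoh(\sqrt{n}\log n)}$ via the same per-loop counts, bound each recursive subinstance by $2n/3+\bigoh(\sqrt{n})$ vertices, and unroll the recurrence to obtain the geometric sum $\sum_k (2/3)^k n \le 3n$ in the exponent. You are in fact slightly more careful than the paper in two places---explicitly verifying that \textsc{AugmentGraph} adds only $\bigoh(|U|)$ vertices, and observing that planarity (needed for the $|E(G[U])|=\bigoh(\sqrt{n})$ bound) is preserved under recursion---both of which the paper leaves implicit.
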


\begin{proof}
Let $T(n)$ denote the running time of the algorithm when called with a graph on $n$ vertices. When $n\leq 100$, this is bounded by a constant.

When $n>100$, the algorithm iterates over $\binom{n}{\leq 2 \sqrt{2n}}=2^{\bigoh(\sqrt{n}\log n)}$ sets $U$. For each set $U$, the algorithm iterates over $|U|!=(2\sqrt{2n})!=2^{\bigoh(\sqrt{n}\log n)}$ orderings $\rho$.
For each ordering $\rho$, since $G[U]$ is a planar graph on $|U|\leq 2\sqrt{2n}$ vertices, which, hence, has at most $\bigoh(\sqrt{n})$ edges, we have that the algorithm iterates over $2^{\bigoh(\sqrt{n})}$ partitions $\{E_\mathsf{in},E_\mathsf{out}\}$ of $E(G[U])$. For each partition $\{E_\mathsf{in},E_\mathsf{out}\}$, the algorithm iterates over at most $\sum_{i=n/3}^{2n/3}\binom{n}{i}\leq 2^n\cdot n$ sets $I$. For each set $I$, \cref{obs:computeNCM} implies that the algorithm iterates over $2^{\bigoh(\sqrt{n})}$ triples $T_\mathsf{in}$. So, in total, the algorithm performs $2^{n+\bigoh(\sqrt{n}\log n)}$ iterations. For each triple, the algorithm performs two recursive calls on graphs with at most $2n/3 + \bigoh(\sqrt{n})$ vertices each, and other internal calculations in polynomial time. So,

    $$T(n) = 2^{n+\bigoh(\sqrt{n}\log n)}\cdot (2\cdot T(\frac{2n}{3}+\bigoh(\sqrt{n})) + n^{\bigoh(1)}) = 2^{\frac{2^0}{3^0}n+\frac{2^1}{3^1}n+\frac{2^2}{3^2}n+\frac{2^3}{3^3}n+\cdots}\cdot 2^{\bigoh(\sqrt{n}\log n)},$$

which evaluates to $8^{n+\bigoh(\sqrt{n}\log n)}$. This completes the proof. 
\end{proof}

From \cref{lem:algoCorrect,lem:runtime}, we conclude the correctness of \cref{thm:exactVarialbe}.

\subsection{Fixed Embedding}\label{se:exactAlgoFix}
In this section, we provide a single-exponential algorithm for \CPLSF{}, which becomes
subexponential if additionally the input graph is connected.

\begin{theorem}\label{thm:solvefixed}
  \CPLSFC{} (and thus also \CPLSF{}) can be solved in time~$2^{\bigoh(n)}$. Moreover, it can be solved in time~$2^{\bigoh(\sqrt{n}\log n)}$ if the input graph is connected.
\end{theorem}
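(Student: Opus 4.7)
The plan is to adapt the divide-and-conquer framework of \cref{thm:exactVarialbe} to the fixed-embedding setting, exploiting the input embedding $\mathcal{E}$ to shrink the per-level search space of \cref{alg:gammaone}. The key observation is that, once $\mathcal{E}$ is fixed and we select a noose $N$ in a drawing of $\mathcal{E}$, the cyclic ordering $\rho$ of $U = V(G)\cap N$, the partition $\{I,O\}$ of $V(G)\setminus U$ into the interior and exterior of $N$, and the partition $\{E_\mathsf{in},E_\mathsf{out}\}$ of the edges of $G[U]$ into those drawn inside versus outside $N$ are \emph{all determined} by the pair $(\mathcal{E},N)$. Hence, unlike in \cref{alg:gammaone}, we no longer need to guess $(U,\rho,I,\{E_\mathsf{in},E_\mathsf{out}\})$ independently: enumerating nooses suffices.

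For the single-exponential bound I would run a variant of \cref{alg:gammaone} that iterates over candidate nooses of $\mathcal{E}$. Each such noose is a cyclic sequence alternating between vertices and faces of $\mathcal{E}$, so enumerating them costs $2^{\bigoh(n)}$, which is within budget. For a chosen $N$ we read $(U,\rho,I,E_\mathsf{in},E_\mathsf{out})$ off $\mathcal{E}$, enumerate $T_\mathsf{in}\in\NCM(\VVV,\rho)$ via \cref{obs:computeNCM}, and recurse on an embedded version of the augmented graph. The definition of the augmented graph from \cref{def:augmented} needs only a light adaptation: the restriction of $\mathcal{E}$ to $G'_\mathsf{in}=G[I\cup U]-E_\mathsf{out}$, re-rooted so that the face corresponding to the exterior of $N$ becomes the outer face, together with the canonical placement of the gadget vertices added in Steps~\ref{augment1}--\ref{augment4} inside the inner face bounded by $\rho$, determines a unique embedding of the augmented graph; this is the fixed embedding handed to the recursive call. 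The analogues of \cref{lem:unionYieldsSolution,lem:propertiesOfAug} then carry over essentially verbatim, once one checks that this embedded augmentation is compatible with $\mathcal{E}$.

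For the subexponential bound under the connectivity assumption, I would couple the above with a sphere-cut-style separator theorem: every connected embedded planar graph on $n$ vertices admits a balanced noose of length $\bigoh(\sqrt{n})$ consistent with its embedding (obtainable, e.g., from a sphere-cut decomposition of width $\bigoh(\sqrt{n})$). Thus it suffices to enumerate nooses of length $\bigoh(\sqrt{n})$, and the number of cyclic alternating vertex-face sequences of that length is $\binom{n}{\bigoh(\sqrt{n})}\cdot n^{\bigoh(\sqrt{n})}=2^{\bigoh(\sqrt{n}\log n)}$. Combined with $|\NCM(\VVV,\rho)|=2^{\bigoh(\sqrt{n})}$ from \cref{obs:computeNCM}, the per-iteration cost is $2^{\bigoh(\sqrt{n}\log n)}$, and the recurrence $T(n)=2^{\bigoh(\sqrt{n}\log n)}\cdot\bigl(2\,T(\tfrac{2n}{3}+\bigoh(\sqrt{n}))+n^{\bigoh(1)}\bigr)$ solves to $2^{\bigoh(\sqrt{n}\log n)}$.

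The main obstacle, in my view, is arguing that the augmented graph can be endowed with a fixed embedding that both turns the recursive call into a bona fide \CPLSFC{} instance and preserves the connectivity needed for the subexponential analysis to propagate down the recursion. Fortunately, the augmented graph is $3$-connected by construction, hence connected and embedded uniquely up to reflection, so the recursive subinstances inherit well-defined embeddings and remain connected throughout the recursion; the bulk of the routine work is then to verify that, under this embedded augmentation, the analogues of \cref{lem:unionYieldsSolution,lem:propertiesOfAug} and of the correctness argument in \cref{lem:algoCorrect} remain valid in the fixed-embedding setting.
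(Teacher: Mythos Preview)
Your proposal matches the paper's approach: enumerate candidate nooses in $\mathcal{E}$ (which determines $U,\rho,I,O,E_\mathsf{in},E_\mathsf{out}$ for free), hand the recursive calls an explicitly embedded augmented graph, and use connectivity of $G$ to bound the number of short nooses by $n^{\bigoh(\sqrt{n})}=2^{\bigoh(\sqrt{n}\log n)}$; the paper carries out precisely this plan via a procedure \textsc{getNooses} and a lightly modified embedded augmented-graph construction. One small correction to your ``main obstacle'' paragraph: the full augmented graph is not $3$-connected---only the gadget subgraph $G^*$ of \cref{def:augmented} is, and gluing in $G'_\mathsf{in}$ can introduce $1$- or $2$-cuts---but this is harmless, since you already construct its embedding explicitly, and mere connectivity (which is all the recursion needs) follows because when $G$ is connected every component of $G[I]$ must reach $U$.
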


\newcommand{\curv}{\textsf{curv}}
\newcommand{\FB}{\textsf{FB}}
\newcommand{\inside}[2]{G_{\mathsf{in}}(#1,#2)}
\newcommand{\outside}[2]{G_{\mathsf{out}}(#1,#2)}
\newcommand{\insideR}[2]{I_#1(#2)}
\newcommand{\outsideR}[2]{O_#1(#2)}

\newcommand{\bl}[2]{B_{#1}(#2)}
\newcommand{\blI}[2]{B_{#1}(#2)}

\algrenewcommand\algorithmicindent{1.0em}%
\begin{algorithm}[t!]
    \caption{The algorithm for \CPLSFC{}.}
    \label{alg:solvefixed}
    \begin{algorithmic}[1]
        \Procedure{SolveCPLSFC}{$G_{\mathcal{E}}$,$\VVV$}
          \If{$|V(G)| \leq 100$}
          \State \Return \Call{BruteForceCPLSFC}{$G_{\mathcal{E}},\VVV$}
          \EndIf
          \For{$U \subseteq V(G)$ with $|U|\leq 2 \sqrt{2|V(G)|}$}
          \For{each cyclic ordering $\rho$ of $U$}
          \For{$N \in $\Call{getNooses}{$G_{\mathcal{E}}$, $\VVV$, $U$, $\rho$,
              $\emptyset$}}\label{line:getnooses}
              \State $G_{\mathsf{in}}\leftarrow \inside{\mathcal{E}}{N}$
              \State $G_{\mathsf{out}}\leftarrow \outside{\mathcal{E}}{N}$
              \If{$|V(G_{\mathsf{in}})\setminus U|\leq 2|V(G)\setminus U|/3$ and
                $|V(G_{\mathsf{in}})\setminus U|\geq|V(G)\setminus U|/3$}
                  \For{$T_\mathsf{in}=(M_\mathsf{in},P_\mathsf{in},D_\mathsf{in}) \in \NCM(\VVV,\rho)$ (enumerate by Obs.~\ref{obs:computeNCM})}
                    \If{$(T_\mathsf{in},N)$ is not sensible (\cref{def:sensibleF})}\label{line:isSensibleF}
                    \State break
                    \EndIf  
                
                    \State $(G^A_{\mathcal{E}^A},\VVV^A) \leftarrow$ \Call{AugmentGraph}{$G_{\mathsf{in}},N, T_\mathsf{in}$} 
                    \State $Z_\mathsf{in} \leftarrow$ \Call{SolveCPLSFC}{$(G^A_{\mathcal{E}^A},\VVV^A)$}

                    \If{$Z_\mathsf{in} \neq \NULL$}
                    \State Use \Cref{lem:propertiesOfAugF} w.r.t.~$T_\mathsf{in},G_{\textsf{in}},N,Z_\mathsf{in}$ to obtain ${\cal S}_\mathsf{in}$.
                    
                      \State $T_\mathsf{out}=(M_\mathsf{out},P_\mathsf{out},D_\mathsf{out}) \leftarrow \MS(U,{\cal S}_\mathsf{in})$
                      \State\label{line:callAug2F} $(G^B_{\mathcal{E}^B},\VVV^B) \leftarrow$ \Call{AugmentGraph}{$G_{\mathsf{out}},N,T_\mathsf{out}$}
                      \State$ Z_\mathsf{out} \leftarrow$
                      \Call{Solve}{$G^B_{\mathcal{E}^B},\VVV^B$}

\If{$Z_\mathsf{out} \neq \NULL$}
                      \State\label{line:fix2F} Use \Cref{lem:propertiesOfAugF} w.r.t.~$T_\mathsf{out},G_{\mathsf{out}},N,Z_\mathsf{out}$ to obtain ${\cal S}_\mathsf{out}$.
                      
                      \State \Return $E({\cal S}_\mathsf{in}\cup {\cal S}_\mathsf{out})\setminus E(G)$
                      \EndIf
                      \EndIf
                  \EndFor
                \EndIf
              \EndFor
            \EndFor
          \EndFor  
          \Return $\NULL$
        \EndProcedure        
        \algrenewcommand\algorithmicindent{1.0em}%
    \end{algorithmic}
\end{algorithm}
\algrenewcommand\algorithmicindent{2.0em}%

The algorithm behind \Cref{thm:solvefixed}, which is illustrated in
\cref{alg:solvefixed},  works similarly as \cref{alg:gammaone} for
\CPLSC{} given in
\cref{se:exactAlgoVar}. Probably the main difference is that we
can now use the given embedding to obtain a noose $N$ as opposed to merely
the pair $(U,\rho)$ with $U \subseteq V(G)$ and $\rho$ being a cyclic
ordering of $U$. This has two main advantages: (1) the noose already
defines the separation between the inside and outside regions, and (2) if
the input graph is connected the number of possible nooses (and therefore the number of
possible partitions of the instance by the solution separator) for a
given pair $(U,\rho)$ can be shown to be at most
$2^{\bigoh(\sqrt{n}\log n)}$~instead~of~$2^{\bigoh(n)}$. While the
former merely allows us to simplify some parts of the algorithm and
its analysis, the latter is the crucial ingredient to obtain the
subexponential run-time in the connected case.

In what follows we will provide a complete exposition of
the algorithm while focusing specifically on the main differences to
\cref{alg:gammaone} for \CPLSC{}.
In the following let $(G_{\cal E},\mathcal{V})$ be an instance of
\CPLSFC{} and let $U \subseteq V(G)$ and $\rho$ be a cyclic order of $U$.
We say that $N$ is \emph{a $(U,\rho)$-noose}
if $N$ is a noose in $\mathcal{E}$ that intersects $\mathcal{E}$ only at the vertices in
$U$ according to the order given by $\rho$. We say that $N$ is \emph{a
  $(U,\rho)$-subnoose} if $N$ can be obtained from a $(U,\rho)$-noose
by deleting sub-curves between consecutive vertices~in~$\rho$.

We start by providing the procedure \Call{getNooses}{$G_{\cal E}$, $\VVV$, $U$, $\rho$,
  $\emptyset$}, which is used in Line~\ref{line:getnooses} of the
  \cref{alg:solvefixed} to enumerate all possible nooses that can
serve as solution separators and represents the main
addition to the algorithm compared to \cref{alg:gammaone}. 
Note that there are two variants of \Call{getNooses}{$G_{\cal E}$, $\VVV$, $U$, $\rho$,
  $\emptyset$} depending on whether or not $G$ is connected.
If $G$ is not connected, then \Call{getNooses}{$G_{\cal E}$, $\VVV$, $U$, $\rho$,
  $\emptyset$} simply branches over all pairs
$(G_{\mathsf{in}},G_{\mathsf{out}})$ of subgraphs of $G$ with
$V(G_{\mathsf{in}})\cap V(G_{\mathsf{out}})=U$ and
$E(G_{\mathsf{in}})\cap E(G_{\mathsf{out}})=\emptyset$ and then checks
whether there is a noose $N$ whose inside is $G_{\mathsf{in}}$ and
whose outside is $G_{\mathsf{out}}$.
Therefore, if $G$ is not
connected, we end up with $2^{\bigoh(n)}$ possible solution separators
and thus do not obtain any run-time improvement over the case of
\CPLS{}.
More interesting is therefore the case that $G$ is
connected, since in this case the number of possible solution
separators reduces to $2^{\bigoh(\sqrt{n}\log n)}$.
Let $(u,v) \in \rho$, i.e.,
$u,v \in U$ such that $v$ is the successor of $u$
in $\rho$. Let $F$ be the set of all
faces in $\cal E$ that contain both $u$ and $v$ on their boundary. Moreover, for a
face $f \in F$ and $x \in \{u,v\}$, let $\tau(\mathcal{E},f,x)$ be the clockwise
sequence of all but the (clockwise) last neighbor of $x$ in $G$ that lie on
$f$'s boundary.
Informally, because $G$ is connected, the possible choices for drawing
a noose between $u$
and $v$ in $\mathcal{E}$ are given by specifying two indices, specifically the index of
$v$ relative to the neighbors of $u$ and the index of $u$ relative to
the neighbors of $v$. Altogether this leaves us with $\sum_{f \in
  F}(|\rho(\mathcal{E},f,v)|)$ choices for $v$ and similarly with $\sum_{f \in
  F}(|\rho(\mathcal{E},f,u)|)$ choices for $u$. Therefore, there are altogether
$\sum_{f \in F}(|\rho(\mathcal{E},f,v)|)(|\rho(\mathcal{E},f,u)|)\leq |V(G)|^2$ choices
for drawing a curve between $u$ and $v$. For $n_u \in \rho(\mathcal{E},f,u)$
and $n_v \in \rho(\mathcal{E},f,v)$, we denote by $\curv(\mathcal{E},u,n_u,v,n_v)$ a curve
drawn in $\mathcal{E}$ between $u$ and $v$ that starts below $n_u$ and ends
below $n_v$ and intersects $\mathcal{E}$ only at $u$ and $v$.
These observations lead to the algorithm illustrated
in \cref{alg:computesep}, which defines the function
\Call{getNooses}{$G_{\mathcal{E}}$, $\VVV$, $U$, $\rho$, $N$} that
given an instance $(G_{\mathcal{E}},\VVV)$ of \CPLSFC{}, where $G$ is connected,
together with a subset $U
\subseteq V(G)$, a cyclic ordering $\rho$ of $U$, and a
$(U,\rho)$-subnoose $N$, outputs all $(U,\rho)$-noose extending $N$. The function is
initially called with $N=\emptyset$ and in this case returns all
$(U,\rho)$-nooses.

\begin{algorithm}[tb!]
    \caption{The algorithm for enumerating all $(U,\rho)$-nooses for
      when the input is connected.}
    \label{alg:computesep}
    \begin{algorithmic}[1]
      \Procedure{getNooses}{$G_{\mathcal{E}}$,$\VVV$, $U$, $\rho$,
        $N$}
      \If{$N$ is a noose}\label{line:checknoose}
        \State \Return $N$
      \EndIf
      \State $\mathcal{P} \leftarrow \emptyset$
      \State $(u,v) \leftarrow $ $(u,v)\in \rho$ and $N$ has no subcurve between $u$ and $v$\label{line:beforeenum}
      \For{faces $f$ of $\mathcal{E}\cup N$ having $u$ and $v$ on its boundary}
        \For{$n_v \in \rho_v(f)$ and $n_u \in \rho_u(f)$}
          \State $N' \leftarrow \curv(\mathcal{E}, u, n_u, v, n_v)$
          \State $\mathcal{P} \leftarrow \mathcal{P} \cup$
          \Call{getNooses}{$G_{\mathcal{E}}$, $\VVV$, $U$,
          $\rho$, $N \cup N'$}\label{line:reccall}
        \EndFor   
      \EndFor   
      \Return $\mathcal{P}$
      \EndProcedure        
    \end{algorithmic}
\end{algorithm}

  \begin{longlemma}
  \label{lem:algnooses}
  \cref{alg:computesep} is correct, runs in time
  $\bigoh(|V(G)|^{2|U|}|V(G)|^{\bigoh(1)})$ and outputs at most
  $|V(G)|^{2|U|}$ $(U,\rho)$-nooses.
\end{longlemma}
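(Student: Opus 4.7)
The plan is to argue correctness by induction on the number of consecutive pairs in $\rho$ that the current $N$ does not yet connect, and to read the quantitative bounds directly off the recursion tree of \cref{alg:computesep}. Soundness is immediate: a curve is emitted only after the test on line~\ref{line:checknoose} certifies that it is a $(U,\rho)$-noose. For completeness, one fixes a hypothetical $(U,\rho)$-noose $N^\star$ extending the current $N$ and shows that it appears (up to isotopy relative to $\mathcal{E}\cup N$) as an output of the call. Let $(u,v)$ be the pair chosen on line~\ref{line:beforeenum}. The sub-arc $\gamma^\star$ of $N^\star$ joining $u$ to $v$ meets $\mathcal{E}$ only at its endpoints and is internally disjoint from $N$, so $\gamma^\star\setminus\{u,v\}$ lies in a unique face $f$ of $\mathcal{E}\cup N$ whose boundary contains both $u$ and $v$. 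At $u$, the germ of $\gamma^\star$ in $f$ is wedged between two consecutive edges of $G$ at $u$ that border $f$, uniquely determining some $n_u\in\tau(\mathcal{E},f,u)$; symmetrically this yields a unique $n_v\in\tau(\mathcal{E},f,v)$. Since $G$ is connected, $f$ is an open topological disc, so $\gamma^\star$ is ambient-isotopic relative to $\mathcal{E}\cup N$ to the curve $\curv(\mathcal{E},u,n_u,v,n_v)$ inserted on the matching branch; replacing $\gamma^\star$ by this representative still yields a noose extending $N\cup\curv(\mathcal{E},u,n_u,v,n_v)$, and the induction hypothesis closes the argument.

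For the quantitative bounds, each call made on line~\ref{line:reccall} strictly decreases the number of missing consecutive pairs in $\rho$, pinning the recursion depth to $|U|$. The branching factor at each internal node will be bounded by
\[
\sum_{f} |\tau(\mathcal{E},f,u)|\cdot|\tau(\mathcal{E},f,v)| \;\leq\; \deg_G(u)\cdot\deg_G(v) \;\leq\; |V(G)|^2,
\]
because as $f$ ranges over the faces around $u$ in $\mathcal{E}\cup N$ the sequences $\tau(\mathcal{E},f,u)$ jointly list each neighbour of $u$ in $G$ at most once. The recursion tree therefore has at most $|V(G)|^{2|U|}$ leaves, which bounds the number of distinct output nooses; multiplied by the polynomial per-node work, this delivers the claimed overall running time.

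The hard part will be the completeness step: one must verify that the parameterisation $(f,n_u,n_v)$ really catalogues \emph{every} combinatorial way to route a single arc from $u$ to $v$ in $\mathcal{E}\cup N$, and that the ambient isotopy sliding $\gamma^\star$ onto $\curv(\mathcal{E},u,n_u,v,n_v)$ can be performed without dragging any other arc of $N^\star$ across $\mathcal{E}$. Both points hinge on connectivity of $G$, which guarantees that the faces of $\mathcal{E}\cup N$ are open discs and therefore makes any two arcs inside such a disc sharing endpoints and local entry/exit data isotopic relative to the boundary; this is exactly where the lemma implicitly relies on the connected regime in which the algorithm is designed to operate.
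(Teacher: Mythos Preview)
Your proof is correct and follows essentially the same approach as the paper: both argue correctness via the connectivity of $G$ (you make explicit the disc property of faces and the resulting isotopy, which the paper compresses into a single sentence) and bound the recursion tree by $|V(G)|^2$ branches per level over $|U|$ levels. Your treatment is more detailed---particularly the completeness induction and the isotopy argument---but the conceptual skeleton is the same.
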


\begin{proof}
  The correctness of the algorithm follows because $G$ is connected
  and therefore every component of $G-\{v\}$ for any vertex $v$ of $G$
  is incident to $v$. Concerning the run-time of the algorithm,
  first note that all operations inside one recursive call of the
  algorithm can be achieved in polynomial-time
  ($\bigoh(|V(G)|^{\bigoh(1)}$).
  Moreover, as stated above there are at most $|V(G)|^2$ choices to draw a curve
  between $u$ and $v$ for every $(u,v) \in \rho$ and therefore for
  every $(u,v) \in \rho$, the algorithm calls itself recursively at
  most $|V(G)|^2$ times (in Line~\ref{line:reccall}). Since there are
  at most $|U|$ many pairs $(u,v) \in \rho$, the total number of
  recursive calls of the algorithm is therefore at most
  $(|V(G)|^2)^{|U|}=|V(G)|^{2|U|}$, which shows the stated run-time.
\end{proof}

We are now ready to show the correctness of
\cref{alg:solvefixed}. Since the proof is rather similar to
the proof for \cref{alg:gammaone} from
\cref{se:exactAlgoVar}, we will focus on outlining the main
differences in the proof. We start with some additional notation for nooses.
For a noose $N$ in $\mathcal{E}$ we denote by $\inside{\mathcal{E}}{N}$ and
$\outside{\mathcal{E}}{N}$ the subgraph of $G$ that is inside and outside of $N$ in $\mathcal{E}$,
respectively; here we define inside as the region that does not
contain the outer-face of $\mathcal{E}$. Similarly, we denote by $\insideR{\mathcal{E}}{N}$ and
$\outsideR{\mathcal{E}}{N}$, the embedding $\mathcal{E}$ restricted to the inside and outside of $N$,
respectively.

The first definition that we need to adapt to the fixed embedding case
is \cref{def:compatiblePartSol} for compatibility between a
partial solution and a pair $(T_{\mathsf{in}},N)$ for $T_\mathsf{in} \in
\NCM(\VVV,\rho)$ and a $(U,\rho)$-noose $N$.

\begin{longdefinition}[{\bf Partial Solution IN/OUT-Compatible with $(T_\mathsf{in},N)$}]\label{def:compatiblePartSolF}
  Consider $(G_{\mathcal{E}},\VVV)$, a cyclic ordering $\rho$ of some $U\subseteq
  V(G)$,
  $T_\mathsf{in}=(M_{\mathsf{in}},P_{\mathsf{in}},D_{\mathsf{in}}) \in
  \NCM(\VVV,\rho)$, and a $(U,\rho)$-noose $N$.
  Then, a partial solution $\cal S$ is {\em IN-compatible} ({\em
    OUT-compatible}) with $(T_{\mathsf{in}},N)$ if:
  \begin{enumerate}
    \item\label{def:compatiblePartSol1F} $\MS(U,{\cal S})=T_\mathsf{out}$ and $T_\mathsf{in}$ are complementary.
    \item\label{def:compatiblePartSol2F} Let
      $G'=\inside{\mathcal{E}}{N}-D_{\mathsf{in}}$
      ($G'=\outside{\mathcal{E}}{N}-D_{\mathsf{out}}$). Then, $\cal S$ contains
      all and only the vertices in $G'$, and all (but not
      only) edges in $G'$ between vertices in the
      same cluster.
    \item\label{def:compatiblePartSol3F} There exists a planar drawing
      of $G'\cup E({\cal S})$ that extends
      $\insideR{\mathcal{E}}{N}$ ($\outsideR{\mathcal{E}}{N}$) with an inner-face whose boundary contains $U$
      (with, possibly, other vertices) ordered as by $\rho$.
    \item\label{def:compatiblePartSol4F} Each path in ${\cal S}$ satisfies one of the following conditions:
        \begin{enumerate}
        \item It consists of all vertices of a cluster in $\VVV$ that belong to $G'$, and has no endpoint in~$U$.
        \item It has an endpoint in $U$.
        \end{enumerate}
  \end{enumerate}
\end{longdefinition}

We also need the following slightly adapted version of
\cref{def:sensible} for sensibility. The main difference to
the its counterpart for the variable embedding case is that we no
longer require \cref{def:sensibleItem1} since this is already
guaranteed by the existence of the $(U,\rho)$-noose $N$.

\begin{longdefinition}[{\bf Sensibility of $(T_\mathsf{in},N)$}]
  \label{def:sensibleF}
  Consider $(G_{\mathcal{E}},\VVV)$, a cyclic ordering $\rho$ of some $U\subseteq
  V(G)$,
  $T_\mathsf{in}=(M_{\mathsf{in}},P_{\mathsf{in}},D_{\mathsf{in}}) \in
  \NCM(\VVV,\rho)$,
  a $(U,\rho)$-noose $N$, and $G_{\mathsf{in}}=\inside{\mathcal{E}}{N}$. Then, $(T_\mathsf{in},N)$ is {\em sensible} if:
  \begin{enumerate}
    \item\label{def:sensibleItem2F}  No vertex in $D_\mathsf{in}$ is adjacent in $G_\mathsf{in}$ to a vertex that belongs to the same cluster in $\VVV$. Additionally, no vertex in $U\setminus (V(M_\mathsf{in})\cup P_\mathsf{in}\cup D_\mathsf{in})$ is adjacent in $G_\mathsf{out}=G[O\cup U]-E(G_\mathsf{in})$ for $O=V(G)\setminus (V(G_{\mathsf{in}})$ to a vertex that belongs to the same cluster in $\VVV$.
      \item\label{def:sensibleItem3F}  No cluster in $\VVV$ has non-empty intersection with both $V(G_{\mathsf{in}})\setminus D_\mathsf{in})$ and $O\cup D_\mathsf{in}$ but not with $V(M_\mathsf{in})\cup P_\mathsf{in}$.
  \end{enumerate}
\end{longdefinition}

We now show how using compatible solutions in $\NCM$ yields solutions to \CPLSFC.
We now need the following slightly adapted and mostly simplified
version of \Cref{lem:unionYieldsSolution}, whose proof is basically
identically to the proof of \Cref{lem:unionYieldsSolution} and is
provided here merely for completeness.
\begin{longlemma}\label{lem:unionYieldsSolutionF}
  Consider $(G_{\mathcal{E}},\VVV)$, a cyclic ordering $\rho$ of some $U\subseteq
  V(G)$,
  $T_\mathsf{in}=(M_{\mathsf{in}},P_{\mathsf{in}},D_{\mathsf{in}}) \in
  \NCM(\VVV,\rho)$, and a $(U,\rho)$-noose $N$. Suppose that $(T_\mathsf{in},N)$ is sensible. Additionally, consider
  \begin{itemize}
  \item a partial solution ${\cal S}_\mathsf{in}$ IN-compatible with $(T_{\mathsf{in}},N)$, and
  \item a partial solution ${\cal S}_\mathsf{out}$ OUT-compatible with $\MS(U,{\cal S}_\mathsf{in})=(T_{\mathsf{out}},N)$.
  \end{itemize}
  Then, $Z=E({\cal S}_\mathsf{in}\cup{\cal S}_\mathsf{out})\setminus
  E(G)$ is a solution to $(G_{\mathcal{E}},\VVV)$ as an instance of \CPLSFC{}.
\end{longlemma}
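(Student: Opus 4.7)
The plan is to mirror the proof of \cref{lem:unionYieldsSolution} verbatim, exploiting two simplifications afforded by having a concrete noose $N$: (i) the sets $I$, $O$ and the partition $\{E_\mathsf{in},E_\mathsf{out}\}$ of $E(G[U])$ no longer need to be guessed, since they are determined by $\inside{\mathcal{E}}{N}$ and $\outside{\mathcal{E}}{N}$; and (ii) \cref{def:sensibleItem1} of the variable-embedding sensibility definition is automatic, because every edge of $G$ is drawn either entirely inside or entirely outside $N$. On top of the three tasks of the original proof (planarity, edge coverage, single path per cluster), I would additionally need to verify that the planar drawing produced realizes the embedding $\mathcal{E}$ when restricted to $G$.

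I would carry out the argument in the same order as before. First, I invoke \cref{def:compatiblePartSol3F} twice to obtain a planar drawing $\varphi_\mathsf{in}$ of $\inside{\mathcal{E}}{N} \cup E({\cal S}_\mathsf{in})$ extending $\insideR{\mathcal{E}}{N}$, and a planar drawing $\varphi_\mathsf{out}$ of $\outside{\mathcal{E}}{N} \cup E({\cal S}_\mathsf{out})$ extending $\outsideR{\mathcal{E}}{N}$. Since both drawings agree on $N$ and on the rotations at the vertices of $U$ inherited from $\mathcal{E}$, gluing them along $N$ yields a planar drawing of $G \cup Z$ whose restriction to $G$ is exactly $\mathcal{E}$; in particular $G \cup Z$ is planar and admits an embedding extending $\mathcal{E}$, as required by \CPLSFC. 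Next I fix a cluster $V_i \in \VVV$, split $Z_i = Z_{i,\mathsf{in}} \cup Z_{i,\mathsf{out}}$ according to which of ${\cal S}_\mathsf{in}, {\cal S}_\mathsf{out}$ contributes each edge, and use \cref{def:compatiblePartSol2F} to conclude that the cluster edges on $V_i$ inside $N$ reduce to the collection of paths ${\cal S}_{i,\mathsf{in}}$ and symmetrically on the outside. The disjointness $D_\mathsf{in} \cap D_\mathsf{out} = \emptyset$ from \cref{def:complement1}, together with \cref{def:sensibleItem2F}, rules out cluster-internal edges of $G$ being lost when $D_\mathsf{in}$ or $D_\mathsf{out}$ are removed, so $G[V_i] \cup Z_i$ equals the (non-disjoint) union ${\cal S}_{i,\mathsf{in}} \cup {\cal S}_{i,\mathsf{out}}$. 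Finally, using the path structure of $G_{T_\mathsf{in}, T_\mathsf{out}}$ granted by \cref{def:complement2}, together with \cref{def:compatiblePartSol4F} and \cref{def:sensibleItem3F}, I conclude that this union is a single path.

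The only genuinely new step, and the main (albeit mild) obstacle, is the gluing of $\varphi_\mathsf{in}$ and $\varphi_\mathsf{out}$: at each $u \in U$, one must check that the rotation induced by $\varphi_\mathsf{in}$ on the edges of $G$ incident to $u$ from the inside concatenates with the rotation induced by $\varphi_\mathsf{out}$ on the edges of $G$ incident to $u$ from the outside so as to reproduce the rotation of $u$ in $\mathcal{E}$. This is exactly what is packaged into the requirement ``extends $\insideR{\mathcal{E}}{N}$'' (respectively $\outsideR{\mathcal{E}}{N}$) in \cref{def:compatiblePartSol3F}, so the verification is essentially by definition. Every other step is a line-by-line transcription of the argument for \cref{lem:unionYieldsSolution}, with $V(\inside{\mathcal{E}}{N}) \setminus U$ playing the role of $I$, $V(\outside{\mathcal{E}}{N}) \setminus U$ playing the role of $O$, and the edges of $G[U]$ contained in $\inside{\mathcal{E}}{N}$ playing the role of $E_\mathsf{in}$.
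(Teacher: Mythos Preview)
Your proposal is correct and follows essentially the same route as the paper's proof, which is explicitly described there as ``basically identical'' to that of \cref{lem:unionYieldsSolution}. If anything, you are slightly more careful than the paper in making explicit that the glued drawing not only is planar but restricts to $\mathcal{E}$ on $G$, which is what \CPLSFC{} requires; the paper leaves this implicit in the phrase ``extends $\insideR{\mathcal{E}}{N}$ / $\outsideR{\mathcal{E}}{N}$''.
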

 \begin{proof}

\medskip
\noindent{\bf Verifying Planarity.} We first argue that $G\cup Z$ is a
planar graph. Since ${\cal S}_\mathsf{in}$ is IN-compatible with
$(T_{\mathsf{in}},N)$,
\cref{def:compatiblePartSol3F} in
\cref{def:compatiblePartSolF} implies that there exists a
planar drawing $\varphi_{\mathsf{in}}$ of $G_{\mathsf{in}}\cup E({\cal S}_{\mathsf{in}})$ that extends
$\insideR{\mathcal{E}}{N}$ ($\outside{\mathcal{E}}{N}$) with an inner-face whose boundary contains $U$
(with, possibly, other vertices) ordered as by $\rho$, where $G_{\mathsf{in}}=\inside{\mathcal{E}}{N}-D_{\mathsf{in}}$.

Similarly, since ${\cal S}_\mathsf{out}$ is OUT-compatible with
$(T_{\mathsf{out}},N)$, \cref{def:compatiblePartSol3F}
implies there exists a planar drawing $\varphi_\mathsf{out}$
of $G_{\mathsf{out}}\cup E({\cal S}_{\mathsf{out}})$ that extends
$\insideR{\mathcal{E}}{N}$ ($\outside{\mathcal{E}}{N}$) with an inner-face whose boundary contains $U$
(with, possibly, other vertices) ordered as by $\rho$, where $G_{\mathsf{out}}=\outside{\mathcal{E}}{N}-D_{\mathsf{out}}$.
Clearly, we can flip $\varphi_\mathsf{out}$ to obtain another planar
drawing $\varphi'_\mathsf{out}$ of $G_\mathsf{out}\cup E({\cal
  S}_\mathsf{out})$ whose outer-face's boundary contains $U$ (with,
possibly, other vertices) ordered as by $\rho$. Since $N$ is a noose
of $\mathcal{E}$, we know that the (non-disjoint) union of $\inside{\mathcal{E}}{N}$ and
$\outside{\mathcal{E}}{N}$ yields $G$.
Thus, by taking the (non-disjoint) union of the drawings
$\varphi_\mathsf{in}$ and $\varphi'_\mathsf{out}$, we obtain a planar
drawing of $G\cup Z$. So, $G\cup Z$ is a planar graph.

\medskip
\noindent{\bf Verifying That All Edges Are Covered.} Consider some
cluster $V_i\in \VVV$, and let $Z_i$ denote the subset of edges in $Z$
with both endpoints in $V_i$. We need to prove that $G[V_i]\cup Z_i$
is a path. Let $Z_{i,\mathsf{in}}=Z_i\cap E(S_\mathsf{in})$ and
$Z_{i,\mathsf{out}}=Z_i\cap E(S_\mathsf{out})$. Let ${\cal
  S}_{i,\mathsf{in}}$ (resp., ${\cal S}_{i,\mathsf{out}}$) denote the
collection of paths in ${\cal S}_\mathsf{in}$ (resp., ${\cal
  S}_\mathsf{out}$) between vertices in $V_i$.
Since ${\cal S}_\mathsf{in}$ is IN-compatible with
$(T_{\mathsf{in}},N)$, \cref{def:compatiblePartSol2F} in
\cref{def:compatiblePartSolF} implies that
$G_\mathsf{in}[V_i]\cup Z_{i,\mathsf{in}}$, where
$G_\mathsf{in}=G'_\mathsf{in}-D_\mathsf{in}$, is exactly the
collection of paths ${\cal S}_{i,\mathsf{in}}$. Similarly, since
${\cal S}_\mathsf{out}$ is OUT-compatible with $(T_{\mathsf{out}},N)$,
\cref{def:compatiblePartSol2F} implies that
$G_\mathsf{out}[V_i]\cup Z_{i,\mathsf{out}}$, where
$G_\mathsf{out}=G'_\mathsf{out}-D_\mathsf{out}$, is exactly the
collection of paths~${\cal S}_{i,\mathsf{out}}$.

Recall that we have already argued that the (non-disjoint) union of
$G_\mathsf{in}$ and $G_\mathsf{out}$ equals~$G$.

Next, we further argue that the (non-disjoint) union of
$G_\mathsf{in}[V_i]$ and $G_\mathsf{out}[V_i]$ equals $G[V_i]$. To
this end, it suffices to show that:  (I) $D_\mathsf{in}\cap
D_{\mathsf{out}}=\emptyset$ (and hence $D_\mathsf{in}\cap V_i\subseteq
V(G_\mathsf{out})$ and $D_\mathsf{out}\cap V_i\subseteq
V(G_\mathsf{in})$), and (II) there does not exist an edge between a
vertex in $D_\mathsf{in}\cap V_i$ (resp., $D_\mathsf{out}\cap V_i$)
and a vertex in $V(G'_\mathsf{in})\cap V_i$ (resp.,
$V(G'_\mathsf{out})\cap V_i$). Because ${\cal S}_\mathsf{in}$ is
IN-compatible with $(T_{\mathsf{in}},N)$, we have that
$T_\mathsf{out}$ and $T_\mathsf{in}$ are complementary, and hence, by
\cref{def:complement1} in \cref{def:complement}, we
get that $D_\mathsf{in}\cap D_\mathsf{out}=\emptyset$, therefore Item
(I) holds. Because $(T_\mathsf{in},N)$ is sensible, Item
(II) directly follows from \cref{def:sensibleItem2F} in
\cref{def:sensibleF}.

Overall, we conclude, so far, that $G[V_i]\cup Z_i$ is exactly the
(non-disjoint) union of the  two collections of paths ${\cal
  S}_{i,\mathsf{out}}$ and ${\cal S}_{i,\mathsf{in}}$. Next, we verify
that this union yields a single path.

\medskip
\noindent{\bf Verifying That the Union Yields a Single Path.} Recall
that $\MS(U,{\cal
  S}_\mathsf{in})=T_\mathsf{out}=(M_\mathsf{out},P_\mathsf{out},D_\mathsf{out})$,
and let $\MS(U,{\cal
  S}_\mathsf{out})=T'_\mathsf{in}=(M'_\mathsf{in},P'_\mathsf{in},D'_\mathsf{in})$. Note
that, possibly, $T_\mathsf{in}\neq T'_\mathsf{in}$. By the definition
of $\MS$:
\begin{enumerate}
\item\label{condition:MS1F} $M_\mathsf{out}$ (resp., $M'_\mathsf{in}$)
  has an edge between the endpoints of every path in ${\cal
    S}_{\mathsf{in}}$ (resp., ${\cal S}_{\mathsf{out}}$) that has both
  endpoints in $U$,
\item\label{condition:MS2F} $P_\mathsf{out}\subseteq U$ (resp.,
  $P'_\mathsf{in}\subseteq U$) consists of the vertices of degree $1$
  in ${\cal S}_\mathsf{out}$ (resp., ${\cal S}_\mathsf{in}$) belonging
  to $U$ that are not in $V(M_\mathsf{out})$ (resp.,
  $V(M'_\mathsf{in})$), and
\item\label{condition:MS3F} $D_\mathsf{out}\subseteq U$ (resp.,
  $D'_\mathsf{in}\subseteq U$) consists of the vertices of degree $2$
  in ${\cal S}_{\mathsf{in}}$ (resp., ${\cal S}_{\mathsf{out}}$) belonging to $U$.
\end{enumerate}

Further, notice that $S_{\mathsf{in}}$ does not contain vertices in $D_\mathsf{in}$, and ${\cal S}_\mathsf{out}$ does not contain vertices in $D_\mathsf{out}$. In particular, the latter implies that $D_\mathsf{out}\subseteq U\setminus (V(M_\mathsf{in}')\cup P_\mathsf{in}'\cup D_\mathsf{in}')$. 
Since ${\cal S}_\mathsf{out}$ is compatible with $(T_{\mathsf{out}},I,E_{\mathsf{out}})$, by \cref{def:compatiblePartSol1F} in \cref{def:compatiblePartSolF},  $T_\mathsf{in}'$ and $T_\mathsf{out}$ are complementary. 
 So, by \cref{def:complement1} in \cref{def:complement}, $D_\mathsf{in}'\subseteq U\setminus(V(M_\mathsf{out})\cup P_\mathsf{out}\cup D_\mathsf{out})$. Thus, all vertices in $D_\mathsf{out}\cup D_\mathsf{in}'$ are of degree $2$ in ${\cal S}_{i,\mathsf{out}}\cup {\cal S}_{i,\mathsf{in}}$.

Due to \cref{def:sensibleItem3F} in
\cref{def:sensibleF} and
\cref{def:compatiblePartSol4F} in
\cref{def:compatiblePartSolF}, we know that if $V_i$
contains vertices only from $G_\mathsf{in}$ or only from
$G_\mathsf{out}$, then ${\cal S}_{i,\mathsf{out}}\cup {\cal
  S}_{i,\mathsf{in}}$ is indeed a path (more precisely, one of these
two sets is a path and the other is empty), and that, otherwise, every
path in ${\cal S}_{i,\mathsf{in}}$ or in ${\cal S}_{i,\mathsf{out}}$
has an endpoint in $U\setminus (D_\mathsf{in}\cup D_\mathsf{out})$. In
this latter case, by the conditions implied by the definition $\MS$,
which are specified above, and by \cref{def:complement2} in
\cref{def:complement}, we again derive that ${\cal
  S}_{i,\mathsf{in}}\cup {\cal S}_{i,\mathsf{out}}$ is a path. \mbox{This
completes the proof.}
\end{proof} 

The following definition of the augmented graph is a simpler version
of the original definition for the case of the variable embedding. The
main difference is that we do no longer need to take into account
marked vertices and that because of the fixed drawing we use a
slightly weaker notion of ``inner triangulation''.
\begin{longdefinition}[{\bf Augmented Graph}]\label{def:augmentedF}
  Let $\mathcal{C}=(G_{\mathcal{E}},\mathcal{V})$ be an instance of \CPLSFC{}. Let
  $\rho$ be a cyclic ordering of some subset $U$ of $V(G)$ and let $N$
  be a $(U,\rho)$-noose such that $\inside{\mathcal{E}}{N}=G$ (or $\outside{\mathcal{E}}{N}=G$).
  Let $(M,P,D) \in \NCM(\VVV,\rho)$. 
  Then, the {\em augmented graph}
  \Call{AugmentGraph}{$G_{\mathcal{E}},\VVV,N,(M,P,D)$} and a embedding
  of it, are obtained from
  $\mathcal{C}=(G_{\mathcal{E}},\mathcal{V})$ after doing the following:
  \begin{enumerate}
  \item for every $(u,v) \in \rho$ subdivide the subcurve of $N$ between
    $u$ and $v$ with a new vertex $n_{u,v}$ taking its own new cluster and add
    the edges $\{u,n\}$ and $\{n,v\}$ to $G$; the drawing of the new
    edges is given by the subcurve of $N$ between $u$ and $v$.
  \item For every $e \in M$, add a new vertex adjacent to the endpoints
    of $e$ and belonging to the same cluster and draw the two edges outside (inside) of $N$.
  \item For every $p \in P$ add a new vertex $n$, drawn outside (inside) close to
    $p$, and make it adjacent to $p$. The new edge between $n$ and $p$
    is drawn outside (inside) of $N$.
  \item  Remove every vertex in $D$ from its cluster, and create a new one-vertex cluster for it.
  \item for every face $f$ outside (inside) of $N$ add a new vertex $n$
    into $f$ and make it adjacent to all vertices on the boundary of
    $f$. This, together with the subdivision vertices introduced earlier, ensures that no edges between
    vertices in the same cluster can be drawn~outside~of~$N$.
  \end{enumerate}
\end{longdefinition}

Similarly, to \Cref{obs:restrictFaces}, the following observation allows us to assume that all edges that are
part of a solution for the augmented graph can be drawn either only outside
or only inside of the noose $N$.
\begin{longobservation}\label{obs:augrestF}\sloppypar
  Let $\mathcal{C}=(G_{\mathcal{E}},\mathcal{V})$ be an instance of \CPLSFC{}. Let
  $\rho$ be a cyclic ordering of some subset $U$ of $V(G)$ and let $N$
  be a $(U,\rho)$-noose such that $\inside{\mathcal{E}}{N}=G$ (or $\outside{\mathcal{E}}{N}=G$).
  Let $T \in \NCM(\VVV,\rho)$ and
  $(G^A_{\mathcal{E}^A},\VVV^A)=$\Call{AugmentGraph}{$G_\mathcal{E},\VVV,\rho,N,T$}. Then,
  every two vertices outside (inside) $N$ that are in the same cluster
  are in different faces in $\mathcal{E}^A$. 
\end{longobservation}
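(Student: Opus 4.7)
My plan is to analyze the structure of faces of $\mathcal{E}^A$ lying outside $N$ (the inside case being symmetric) and show that each such face has at most one strictly-outside vertex that belongs to a non-singleton cluster of $\VVV^A$. I will focus on the case $\inside{\mathcal{E}}{N}=G$; the other case follows by exchanging the roles of inside and outside throughout.

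The first step uses step~5 of \Cref{def:augmentedF}: every face $f'$ of $\mathcal{E}^A$ outside $N$ is a triangle with vertex set $\{n_f, a, b\}$, where $f$ is the face of the embedding obtained after steps 1--4 that contains $f'$, and $a, b$ are two vertices that appear consecutively in the facial walk around $f$. Since $n_f$ is placed in its own singleton cluster of $\VVV^A$, any two distinct cluster-mates on the boundary of $f'$ must form the pair $\{a, b\}$. Thus it suffices to rule out that $a$ and $b$ can both be vertices drawn strictly outside $N$.

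The key technical step will be to establish a flanking property for the embedding obtained after steps 1--4: every vertex drawn strictly outside $N$ is preceded and followed in every facial walk by a vertex of $U$ (which lies on the cycle $N$). Indeed, the only strictly-outside vertices at this stage are the matching vertices added in step~2 and the pendant vertices added in step~3. A matching vertex $m_e$ for $e=\{u,v\}\in M$ has degree exactly $2$ with neighbors $u, v \in U$, forcing any facial walk through $m_e$ to traverse $u, m_e, v$ or $v, m_e, u$; a pendant vertex attached to $p\in P$ has degree exactly $1$ with sole neighbor $p\in U$, so the unique facial walk visits it as $p, \text{pendant}, p$. Note that the cluster reassignments in step~4 leave the drawing unchanged. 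I therefore expect the main obstacle to be merely verifying that the successive additions in steps 2 and~3 do not destroy this flanking structure (for instance, that an earlier matching vertex is not ``isolated'' inside a later face in a way that breaks the walk analysis), which should be straightforward from the non-crossing property of $M$ and the fact that pendants and matching vertices have neighbors only in $U$.

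Putting the pieces together, in the triangle $\{n_f, a, b\}$ at most one of $a, b$ can be strictly outside $N$; combined with $n_f$ having its own singleton cluster, no face of $\mathcal{E}^A$ contains two cluster-mates that are both strictly outside $N$, which proves the observation.
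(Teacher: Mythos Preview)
The paper states this as an observation without proof, treating it as immediate from the construction in \Cref{def:augmentedF} (indeed, step~5 there explicitly says that the face vertices ``together with the subdivision vertices introduced earlier, ensure that no edges between vertices in the same cluster can be drawn outside of $N$''). Your argument makes the intended reasoning explicit and is correct: the key structural fact is that the graph drawn outside $N$ after steps~1--4 is bipartite, with $U$ on one side and the subdivision, matching, and pendant vertices on the other (each of the latter has neighbors only in $U$). Hence any two consecutive vertices in a facial walk include one vertex of $U$, and after step~5 each resulting face $\{n_f,a,b\}$ has at most one of $a,b$ strictly outside $N$; since $n_f$ lies in its own singleton cluster, no face outside $N$ can contain two strictly-outside cluster-mates.

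One small caveat: step~5 adds a single edge from $n_f$ to each \emph{distinct} boundary vertex, so when the pre-step-5 facial walk repeats a vertex (as happens at the $U$-neighbour $p$ of a degree-$1$ pendant), the resulting subdivision of that face is not a pure wheel and some faces may fail to be triangles. Your blanket triangulation claim is therefore slightly stronger than what the construction literally guarantees. This is more a looseness in \Cref{def:augmentedF} than a flaw in your reasoning; the bipartite/flanking argument you give still yields the needed conclusion once one either allows parallel $n_f p$ edges (one per occurrence of $p$ in the walk) or argues directly that matching and pendant vertices have only $U$- and $n_f$-vertices as face-neighbours after step~5.
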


We are ready to obtain the following analog of \Cref{lem:propertiesOfAugF}
\begin{longlemma}\label{lem:propertiesOfAugF}
  Consider $(G_{\mathcal{E}},\VVV)$, a cyclic ordering $\rho$ of some $U\subseteq
  V(G)$, $T_\mathsf{in}=(M_{\mathsf{in}},P_{\mathsf{in}},D_{\mathsf{in}}) \in
  \NCM(\VVV,\rho)$, and a $(U,\rho)$-noose $N$.
  Let $Z_\mathsf{in}\neq$NULL be a solution to the instance
  $(G^A_{\mathcal{E}^A},\VVV^A)=$\Call{AugmentGraph}{$\inside{\mathcal{E}}{N},N,T_\mathsf{in}$}
  of \CPLSFC{}. Then, in polynomial time we can obtain a partial solution
  ${\cal S}_{\mathsf{in}}$ that is IN-compatible with
  $(T_{\mathsf{in}},N)$.
\end{longlemma}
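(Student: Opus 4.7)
The plan follows the template of \Cref{lem:propertiesOfAug}, but is streamlined by two features of the fixed-embedding setting: the drawing of $G^A\cup Z_{\mathsf{in}}$ induced by $\mathcal{E}^A$ is canonical, and \Cref{obs:augrestF} replaces the ``cleaning'' step used in the variable case. First I would define $\mathcal{S}'_{\mathsf{in}}$ as the spanning subgraph of $G^A\cup Z_{\mathsf{in}}$ whose edges are precisely those of $G^A$ between two vertices of the same cluster of $\VVV^A$, together with all edges of $Z_{\mathsf{in}}$. Since $Z_{\mathsf{in}}$ solves $(G^A_{\mathcal{E}^A},\VVV^A)$, each cluster of $\VVV^A$ induces in $\mathcal{S}'_{\mathsf{in}}$ a single path, and $\mathcal{S}'_{\mathsf{in}}$ can therefore be read off in polynomial time.

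Next, I would argue that every edge of $\mathcal{S}'_{\mathsf{in}}$ that survives the deletion of the augmentation is drawn inside $N$ in $\mathcal{E}^A$. A noose meets the graph only at vertices, so no single edge can have one endpoint strictly inside $N$ and the other strictly outside; and by \Cref{obs:augrestF}, no two same-cluster vertices strictly outside $N$ lie in a common face of $\mathcal{E}^A$, so $Z_{\mathsf{in}}$ contains no edge joining two such vertices. Since every augmentation object created by \Cref{def:augmentedF} is placed strictly outside $N$, all same-cluster edges incident to $V(\inside{\mathcal{E}}{N})$ are indeed drawn inside $N$ in $\mathcal{E}^A$.

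I would then obtain $\mathcal{S}_{\mathsf{in}}$ from $\mathcal{S}'_{\mathsf{in}}$ in close analogy with the variable case, adapted to the present augmentation: remove (i) every pendant vertex attached at some $p\in P_{\mathsf{in}}$, (ii) every singleton-cluster vertex, i.e.\ every subdivision vertex $n_{u,v}$ and every vertex of $D_{\mathsf{in}}$, together with all their incident edges, and (iii) for each $e\in M_{\mathsf{in}}$, the auxiliary middle vertex created for $e$ together with its two incident edges, which may split a path into two pieces. The resulting $\mathcal{S}_{\mathsf{in}}$ is a collection of cluster-paths with vertex set $V(\inside{\mathcal{E}}{N})\setminus D_{\mathsf{in}}$ and edge set contained in $E(\inside{\mathcal{E}}{N})\cup Z_{\mathsf{in}}$.

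To conclude, I would verify the four conditions of \Cref{def:compatiblePartSolF}. Conditions~2 and~4 are immediate from the construction of $\mathcal{S}_{\mathsf{in}}$. Condition~1 mirrors the variable case: the singleton clusters assigned to $D_{\mathsf{in}}$ give $D_{\mathsf{in}}\subseteq U\setminus(V(M_{\mathsf{out}})\cup P_{\mathsf{out}}\cup D_{\mathsf{out}})$, while the auxiliary middle vertices at $V(M_{\mathsf{in}})\cup P_{\mathsf{in}}$ yield the symmetric inclusion, and contracting the $\inside{\mathcal{E}}{N}$-edges of $\mathcal{S}'_{\mathsf{in}}$ recovers $G_{T_{\mathsf{in}},T_{\mathsf{out}}}$ up to the canonical pendants at $P_{\mathsf{out}}$ prescribed by \Cref{def:complement}. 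The main obstacle is Condition~3, which asks for a planar drawing of $(\inside{\mathcal{E}}{N}-D_{\mathsf{in}})\cup E(\mathcal{S}_{\mathsf{in}})$ that extends $\insideR{\mathcal{E}}{N}$ and places $U$ on the boundary of an inner face in the order $\rho$. I would obtain this drawing by restricting the planar drawing of $G^A\cup Z_{\mathsf{in}}$ determined by $\mathcal{E}^A$ to the closed inside of $N$ and deleting every augmentation object, using the second paragraph to see that all surviving edges lie inside $N$, and the construction of $\mathcal{E}^A$ from $\mathcal{E}$ to conclude that the restriction agrees with $\insideR{\mathcal{E}}{N}$ while $N$ itself bounds an outer face on which $U$ appears in the order $\rho$.
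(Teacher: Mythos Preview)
Your proposal is correct and follows essentially the same approach as the paper: define $\mathcal{S}'_{\mathsf{in}}$ from the solution to the augmented instance, invoke \Cref{obs:augrestF} to confine the relevant edges to the inside of $N$, strip the augmentation artifacts to obtain $\mathcal{S}_{\mathsf{in}}$, and verify the four conditions of \Cref{def:compatiblePartSolF}. Two cosmetic points: your list of singleton-cluster vertices in step~(ii) should also include the face vertices added in step~5 of \Cref{def:augmentedF}, and in your treatment of Condition~3 you end by placing $U$ on the \emph{outer} face while the definition asks for an \emph{inner} face---this is harmless since one can re-choose the outer face without disturbing the rotation system, but it is worth saying explicitly.
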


 \begin{proof}
  Let ${\cal S'}_\mathsf{in}$ denote the subgraph of $G^A\cup
  Z_\mathsf{in}$ on all vertices, and on the edges in $G^A$ between
  vertices in the same cluster (w.r.t.~$\VVV^A$) and the edges in
  $Z_\mathsf{in}$. Clearly, ${\cal S'}_\mathsf{in}$ is a partial
  solution. Let $\varphi$ be the planar
  drawing of $G^A\cup Z_{\mathsf{in}}$ that extends $\mathcal{E}^A$ and exists
  because $Z_{\mathsf{in}}$ is a solution for $(G^A_{\mathcal{E}^A},\VVV^A)$. Note
  that because of \cref{obs:augrestF} it holds that all edges of
  $Z_{\mathsf{in}}$ are drawn inside of $N$ in $\varphi$.
  Let ${\cal S}_\mathsf{in}$ denote
  the
  partial solution obtained from ${\cal S'}_\mathsf{in}$ by: (i)
  removing, from each of its paths having one or two endpoints not in
  $G$, these one or two endpoints (which must be vertices adjacent to
  vertices in $P_\mathsf{in}$ belonging to the same cluster as them);
  (ii) removing every path consisting of a single vertex on its own new
  singleton cluster;
  (iii) removing, from all paths, all edges belonging to
  $M_\mathsf{in}$, which might split some paths into several paths. No
  additional modifications will be made.  So, in what follows, we verify
  that ${\cal S}_\mathsf{in}$ is IN-compatible with
  $(T_\mathsf{in},N)$, which will complete the proof.
  We do so by verifying each of the conditions in
  \cref{def:compatiblePartSolF}.

\smallskip
\noindent{\bf\em \cref{def:compatiblePartSol1F}.} We need to
show that $\MS(U,{\cal
  S}_\mathsf{in})=T_\mathsf{out}=(M_\mathsf{out},P_\mathsf{out},D_\mathsf{out})$
and $T_\mathsf{in}$ are complementary. To this end, we need to verify
the satisfaction of the two conditions in
\cref{def:complement}. By the construction of the augmented
graph, all vertices in $D_\mathsf{in}$ belong to their own clusters,
and so ${\cal S}_\mathsf{in}$ cannot include newly added edges incident to them,
implying that $D_\mathsf{in}\subseteq U\setminus
(V(M_\mathsf{out})\cup P_\mathsf{out}\cup
D_\mathsf{out})$. Furthermore, by its construction, every vertex in
$V(M_\mathsf{in})\cup P_\mathsf{in}$ is incident to one edge that is
removed from ${\cal S'}_\mathsf{in}$ when constructing ${\cal
  S}_\mathsf{in}$, which implies that $D_\mathsf{out}\subseteq
U\setminus (V(M_\mathsf{in})\cup P_\mathsf{in}\cup D_\mathsf{in})$.

Thus, \cref{def:complement1} in
\cref{def:complement} is satisfied. For the second
condition, notice that in ${\cal S}_\mathsf{in}$, every
cluster in $\VVV_A$ corresponds to a single path. Now, contract the
edges in ${\cal S}_\mathsf{in}$ belonging to
$\inside{\mathcal{E}}{N}$ in the augmented graph, and keep only edges
connecting vertices belonging to the same cluster---then, we obtain
the graph $G_{T_\mathsf{in},T_\mathsf{out}}$, excluding, possibly, for
every vertex in $P_\mathsf{out}$, the newly added edge incident to it
when constructing $G_{T_\mathsf{in},T_\mathsf{out}}$.

So, the graph we created is a collection of paths as required in
\cref{def:complement2} in \cref{def:complement},
and it stays so also after adding the ``missing'' edges (if any) since
by the definition of $P_\mathsf{out}$, the vertices within it are
endpoints of paths in ${\cal S'}_\mathsf{in}$. Thus, by the
definition of ${\cal S}_\mathsf{in}$ based on ${\cal
  S'}_\mathsf{in}$, we conclude that \cref{def:complement2} in
\cref{def:complement} is satisfied.

\smallskip
\noindent{\bf\em \cref{def:compatiblePartSol2F}.} Because the
augmented graph contains $\inside{\mathcal{E}}{N}$, and when creating ${\cal
  S}_\mathsf{in}$, we removed all vertices and edges that do not
belong to this graph, we derive the satisfaction of this condition.

\smallskip
\noindent{\bf\em \cref{def:compatiblePartSol3F}.} The drawing
$\varphi$ certifies the satisfaction of this condition.

\smallskip
\noindent{\bf\em \cref{def:compatiblePartSol4F}.} Recall that
every path in ${\cal S'}_\mathsf{in}$ consists of all and
only the vertices of a single cluster in $\VVV_A$. Since all of the
edges that are removed when we create ${\cal S}_\mathsf{in}$ from
${\cal S'}_\mathsf{in}$ are incident to vertices in $U$, we
derive the satisfaction of this condition.
\end{proof} 

\subparagraph{The Algorithm and Its Correctness.}

We present the pseudocode of the algorithm as
\cref{alg:solvefixed}. Here, \Call{BruteForceFixed}{$G_{\mathcal{E}},\VVV$} is
a procedure that iterates over all edge subsets and extensions of $\mathcal{E}$
by those edges, and for each one of
them, checks whether it is a solution; if at
least one of them is, then it returns it, and otherwise it returns
NULL.

For the proof of correctness, we will use
\Cref{lem:unionYieldsSolutionF,lem:propertiesOfAugF}
and~\Cref{prop:planarSeparator}.
We are now ready to conclude the correctness of the algorithm.

\begin{longlemma}\label{lem:algoCorrectF}
  \cref{alg:solvefixed} solves the \CPLSFC{} problem.
\end{longlemma}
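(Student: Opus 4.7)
The plan is to follow the overall structure of the proof of \cref{lem:algoCorrect}, replacing the ``guess $(U,\rho,I,E_\mathsf{in},E_\mathsf{out})$'' step by ``guess $(U,\rho,N)$'', and using \cref{lem:unionYieldsSolutionF,lem:propertiesOfAugF,lem:algnooses} in place of their variable-embedding analogs. The argument proceeds by induction on $|V(G)|$. The base case $|V(G)|\leq 100$ is trivial because \textsc{BruteForceCPLSFC} exhaustively enumerates every admissible edge-augmentation together with every extension of~$\mathcal{E}$, so it is correct by definition. Fix $|V(G)|>100$ and assume the algorithm is correct on every strictly smaller graph.

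For the reverse direction, suppose the algorithm returns $Z=E({\cal S}_\mathsf{in}\cup{\cal S}_\mathsf{out})\setminus E(G)$; let $U,\rho,N,T_\mathsf{in},Z_\mathsf{in},T_\mathsf{out},Z_\mathsf{out}$ be the entities of the successful iteration. The test on Line~\ref{line:isSensibleF} guarantees that $(T_\mathsf{in},N)$ is sensible in the sense of \cref{def:sensibleF}. By the inductive hypothesis $Z_\mathsf{in}$ and $Z_\mathsf{out}$ are solutions to the two augmented instances $(G^A_{\mathcal{E}^A},\VVV^A)$ and $(G^B_{\mathcal{E}^B},\VVV^B)$, so two applications of \cref{lem:propertiesOfAugF} produce partial solutions ${\cal S}_\mathsf{in}$ and ${\cal S}_\mathsf{out}$ that are IN- and OUT-compatible with $(T_\mathsf{in},N)$ and $(T_\mathsf{out},N)$, respectively. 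Then \cref{lem:unionYieldsSolutionF} yields that $Z$ is a solution to $(G_\mathcal{E},\VVV)$, finishing this direction.

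For the forward direction, assume a solution $Z^\star$ exists, let $G^\star=G\cup Z^\star$, let ${\cal S}^\star$ be the associated cluster paths, and fix any planar drawing $\varphi$ of $G^\star$ that extends $\mathcal{E}$. By \cref{prop:planarSeparator} there is a noose $N^\star$ in $\varphi$ intersecting only a set $U\subseteq V(G)$ with $|U|\leq 2\sqrt{2|V(G)|}$ and placing between $|V(G)\setminus U|/3$ and $2|V(G)\setminus U|/3$ vertices strictly inside. Let $\rho$ be the cyclic order induced on $U$ by $N^\star$. The key new point, which replaces the ``guess $I$ and $\{E_\mathsf{in},E_\mathsf{out}\}$'' step, is that \cref{lem:algnooses} (in the connected case, and the trivial branching over $(G_\mathsf{in},G_\mathsf{out})$-pairs in the disconnected case) guarantees that the restriction of $N^\star$ to $\mathcal{E}$ is enumerated by the call \Call{getNooses}{$G_\mathcal{E},\VVV,U,\rho,\emptyset$} in Line~\ref{line:getnooses}; so some iteration of the outer loops picks a noose $N$ whose inside/outside subgraphs coincide with those cut out by $N^\star$. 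Set $T_\mathsf{in}=\MS(U,{\cal S}^Z_\mathsf{out})$ where ${\cal S}^Z_\mathsf{out}$ is the partial solution obtained from $G^\star$ restricted to the closed inside of $N$ by keeping only intra-cluster edges; by \cref{obs:inNCM} we have $T_\mathsf{in}\in\NCM(\VVV,\rho)$, so this triple is one of the iterates.

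Consider the iteration corresponding to this $(U,\rho,N,T_\mathsf{in})$. Because $N$ separates $\inside{\mathcal{E}}{N}$ from $\outside{\mathcal{E}}{N}$ and because ${\cal S}^\star$ contains every intra-cluster edge, the two conditions of \cref{def:sensibleF} are inherited from this separation together with the definitions of $D_\mathsf{in}$ and $U\setminus(V(M_\mathsf{in})\cup P_\mathsf{in}\cup D_\mathsf{in})$ as the degree-$2$ and endpoint-free vertices of $U$ inside $N^\star$; hence the sensibility test on Line~\ref{line:isSensibleF} does not break. The restriction of $Z^\star$ to the inside of $N$ is a solution of the augmented instance $(G^A_{\mathcal{E}^A},\VVV^A)$ produced on the inner call, so by induction the recursive call returns some $Z_\mathsf{in}\neq\NULL$; \cref{lem:propertiesOfAugF} then yields a valid ${\cal S}_\mathsf{in}$ and, via $\MS(U,{\cal S}_\mathsf{in})$, a triple $T_\mathsf{out}$ complementary to $T_\mathsf{in}$. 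The symmetric argument on the outside shows that the restriction of $Z^\star$ to the outside of $N$ is a solution of $(G^B_{\mathcal{E}^B},\VVV^B)$, so the inner recursive call again returns a non-NULL $Z_\mathsf{out}$ and the algorithm produces an output on this iteration, as required.

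The step I expect to be the most delicate is verifying that the noose chosen by \cref{prop:planarSeparator} really is a $(U,\rho)$-noose \emph{within} the embedding $\mathcal{E}$ rather than merely within the extended embedding $\varphi$ of $G^\star$. This is handled by observing that $N^\star$ crosses no edge of $G^\star\supseteq G$ and intersects vertices only in $U\subseteq V(G)$, so contracting the extra saturating edges (or equivalently restricting $\varphi$ to $G$) gives a curve in $\mathcal{E}$ with exactly the same vertex-hitting pattern and cyclic order, i.e., a genuine $(U,\rho)$-noose of $\mathcal{E}$; this is precisely the object enumerated by \textsc{getNooses}. Once this observation is in place, the rest of the proof is a direct transcription of the reasoning used for \cref{lem:algoCorrect}.
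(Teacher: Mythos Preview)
Your proof follows the paper's approach essentially verbatim, correctly substituting the fixed-embedding analogs (\cref{lem:unionYieldsSolutionF,lem:propertiesOfAugF}) for their variable-embedding counterparts and using the noose-enumeration procedure in place of the explicit guesses of $I$ and $\{E_\mathsf{in},E_\mathsf{out}\}$. Your final paragraph explicitly addressing why the separator noose of $\varphi$ restricts to a $(U,\rho)$-noose in $\mathcal{E}$ is a welcome addition that the paper's own proof leaves implicit.
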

 \begin{proof}
    The proof is by induction on $|V(G)|$. When $|V(G)|\leq 100$,
    correctness is trivial since the algorithm performs brute-force. So,
    next suppose that $|V(G)|>100$ and that the algorithm is correct
    whenever it is called with a graph having fewer vertices than
    $|V(G)|$.

    \medskip
    \noindent{\bf Reverse Direction.} For the proof of this direction,
    suppose that the algorithm returns $Z=E({\cal S}_\mathsf{in}\cup{\cal
      S}_\mathsf{out})\setminus E(G)$. Then, we need to prove that $Z$ is
    a solution. Let
    $U,\rho,N,T_\mathsf{in},Z_\mathsf{in},T_\mathsf{out},Z_\mathsf{out}$
    be the entities corresponding to the iteration when $Z$ is returned.
    Due to Line~\ref{line:isSensibleF} in \cref{alg:solvefixed},
    $(T_\mathsf{in},N)$ is sensible. Moreover, due to
    \cref{lem:propertiesOfAugF}, we have that ${\cal S}_\mathsf{in}$
    is a partial solution IN-compatible with
    $(T_{\mathsf{in}},N)$,  and
    ${\cal S}_\mathsf{out}$ is a partial solution OUT-compatible with
    $(T_{\mathsf{out}},N)$. So, due to \cref{lem:unionYieldsSolutionF}, the proof of this
    direction is complete.

    \medskip
    \noindent{\bf Forward Direction.} For the proof of this direction,
    suppose a solution $Z^\star$ exists. We need to prove that the
    algorithm does not return NULL (and hence, by the correctness of the
    reverse direction, this would mean that it returns a solution). Let
    $G^\star$ denote the graph obtained from $G$ by adding the edges in
    $Z^\star$. Let ${\cal S}^\star$ denote the paths corresponding to the
    different clusters in $G^\star$. Let $\varphi$ be some planar drawing
    of $G^\star$ that extends $\mathcal{E}$. By \cref{prop:planarSeparator}, there exists
    a noose $N$ such that:
    \begin{itemize}
    \item $N$ intersects $\varphi$ only at the vertices in $U$ and $|U|\leq 2\sqrt{2|V(G)|}$.
    \item The set $I=V(\inside{\varphi}{G})\setminus U$ of vertices
      satisfies $|I|\leq 2|V(G)\setminus U|/3$ and $|I|\geq |V(G)\setminus U|/3$.
    \end{itemize}
    Let $\rho$ denote the cyclic ordering in which $N$ traverses $U$.
    
    Let $\{Z^\star_\mathsf{in},Z^\star_\mathsf{out}\}$ denote the
    partition of $Z^\star$ according to which edges are drawn on the
    inside and on the outside of $N$. Let ${\cal S}^Z_\mathsf{out}$
    denote the partial solution restricted to $\inside{\varphi}{G^\star}$. Since $G^\star$ is
    planar, ${\cal S}^Z_\mathsf{out}$ is compatible with $\rho$. Let
    $T_\mathsf{in}=\MS(U,{\cal S}^Z_\mathsf{out})$. By
    \cref{obs:inNCM}, $T_\mathsf{in}\in\NCM(\VVV,\rho)$. In
    what follows, consider the iteration of
    \cref{alg:solvefixed} corresponding to these
    $U,\rho,N$ and $T_\mathsf{in}$.
        
    First, we claim that $(T_\mathsf{in},N)$ is sensible. Here,
    \cref{def:sensibleItem3}
    follows from the choice of $N$, ensuring that $U$ separates
    $I=V(\inside{\varphi}{G})\setminus U$
    from $O=V(G)\setminus (U\cup I)$ in $G^\star$ (and, so, also in
    particular in $G$). Also, from this, and since in ${\cal
      S}^\star$, every vertex in $D_\mathsf{in}$ (resp., $U\setminus
    V(M_\mathsf{in}\cup P_\mathsf{in}\cup D_\mathsf{in})$) only has
    neighbors in $\inside{\mathcal{E}}{N}$ (resp., $\outside{\mathcal{E}}{N}$) while
    ${\cal S}^\star$ includes all edges between vertices in the same
    cluster, we get that \cref{def:sensibleItem2} is
    satisfied as well.
        
    Second, we claim that $Z_\mathsf{in}$ computed in the iteration
    under consideration must not be NULL. To this end, it suffices to
    note that  $Z^\star_\mathsf{in}$ is a solution to the instance
    $(G^A_{\mathcal{E}^A},\VVV^A)$ of \CPLSFC{} constructed by the
    algorithm. Similarly, $Z_\mathsf{out}$ computed in the iteration
    under consideration must not be NULL, since
    $Z^\star_\mathsf{out}=Z^\star\setminus Z^\star_\mathsf{in}$ is a
    solution to the instance $(G^B_{\mathcal{E}^B},\VVV^B)$ of \CPLSFC{}
    constructed by the algorithm. Hence, in the iteration under
    consideration, the algorithm will return a set of edges, which, in
    particular, means that it does not return NULL.
\end{proof} 

\subparagraph{Running Time Analysis.}

\begin{longlemma}\label{lem:runtimeF}
    \cref{alg:solvefixed} runs in time $2^{\bigoh(n)}$ for
    arbitrary instances. Moreover, \cref{alg:solvefixed} runs
    in time $2^{\bigoh(\sqrt{n}\log n)}$ if $G$ is connected.
\end{longlemma}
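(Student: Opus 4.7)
The plan is to analyse \cref{alg:solvefixed} by the same pattern used for \cref{lem:runtime}, isolating the only part that changes between the fixed and variable settings, namely the enumeration of nooses. Let $T(n)$ be the worst-case running time of \Call{SolveCPLSFC}{} on an instance with $n$ vertices. For $n \le 100$ the procedure runs the brute-force subroutine in $\bigoh(1)$ time, so the interesting case is $n > 100$; here I would first bound the number of iterations of the body of the innermost \textbf{for} loop and then set up and solve the resulting recurrence.

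First I would bound the per-call iteration cost $I(n)$. The number of subsets $U \subseteq V(G)$ with $|U| \le 2\sqrt{2n}$ is $\binom{n}{\le 2\sqrt{2n}} = 2^{\bigoh(\sqrt n \log n)}$, and the number of cyclic orderings $\rho$ of $U$ is $(2\sqrt{2n})! = 2^{\bigoh(\sqrt n \log n)}$. By \cref{obs:computeNCM}, the triples $T_{\mathsf{in}} \in \NCM(\VVV,\rho)$ are enumerated in time $2^{\bigoh(\sqrt n)}$. The only step whose cost depends on connectivity is \Call{getNooses}{}: in the general case the procedure branches over all partitions $(G_{\mathsf{in}},G_{\mathsf{out}})$ of $G$ along $U$, contributing $I(n)=2^{\bigoh(n)}$; in the connected case \cref{lem:algnooses} guarantees at most $n^{2|U|} = 2^{\bigoh(\sqrt n \log n)}$ nooses, giving $I(n)=2^{\bigoh(\sqrt n \log n)}$. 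All remaining operations, namely the sensibility check, the two invocations of \Call{AugmentGraph}{}, the extraction of partial solutions via \cref{lem:propertiesOfAugF}, and the computation of $\MS$, are polynomial.

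Next I would write down the recurrence. The balanced-separator condition ensures that each recursive subinstance has at most $2n/3 + \bigoh(\sqrt n)$ vertices, where the additive $\bigoh(\sqrt n)$ accounts for $|U|$ together with the constantly many extra vertices added by \Call{AugmentGraph}{} per face/match/pendant (still $\bigoh(\sqrt n)$ in total). Each iteration triggers at most two recursive calls, yielding
\[
  T(n) \;\le\; I(n)\cdot\bigl(2\,T\bigl(\tfrac{2n}{3}+\bigoh(\sqrt n)\bigr) + n^{\bigoh(1)}\bigr).
\]
Unfolding to depth $D=\bigoh(\log n)$ with $n_d = (2/3)^d n + \bigoh(\sqrt n)$, the total work is bounded by $\sum_{d=0}^{D} 2^d\cdot I(n_d)\cdot n^{\bigoh(1)}$. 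In the general case $I(n_d) = 2^{\bigoh((2/3)^d n)}$, so the exponents form a convergent geometric series, $\sum_{d\ge 0}(2/3)^d n = 3n$, which gives $T(n) = 2^{\bigoh(n)}$ (mirroring the bound $8^{n+\bigoh(\sqrt n\log n)}$ obtained in \cref{lem:runtime}). In the connected case $I(n_d) = 2^{\bigoh(\sqrt{(2/3)^d n}\,\log n)}$ and, because $\sqrt{2/3}<1$, the exponents again decrease geometrically; the $d=0$ term dominates since for every $d\ge 1$ the factor $2^d$ is overwhelmed by the decrease $2^{-\Theta(\sqrt n\log n)\,(1-(2/3)^{d/2})}$ in $I(n_d)/I(n_0)$. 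Hence the sum is $2^{\bigoh(\sqrt n \log n)}$, as claimed.

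The part I expect to be the main technical nuisance is the bookkeeping for the additive $\bigoh(\sqrt n)$ overhead coming from $U$ and the augmentation vertices. Concretely, I need $\sqrt{n_d}\le \sqrt{(2/3)^d n}+\bigoh(n^{1/4})$ so that the overhead contributes only a lower-order term to each level's exponent and the geometric summation goes through; a one-line use of $\sqrt{a+b}\le\sqrt a+\sqrt b$ handles it, but must be stated explicitly so that the subexponential bound is not eroded across the $\Theta(\log n)$ recursion levels.
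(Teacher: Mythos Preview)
Your approach is the same as the paper's: bound the number of choices for $U$, $\rho$, nooses (via \cref{lem:algnooses} in the connected case, brute force otherwise), and triples, then solve the resulting recurrence; the paper in fact defers the disconnected case entirely to \cref{lem:runtime} and only spells out the connected one.

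There is, however, a real slip in how you unfold the recurrence. From $T(n)\le I(n)\cdot\bigl(2T(n')+n^{\bigoh(1)}\bigr)$ the total work is \emph{not} $\sum_{d} 2^d\, I(n_d)$: the branching factor at level $d$ is $2I(n_d)$, not $2$, so the number of depth-$d$ nodes is $\prod_{j<d} 2I(n_j)$ and the cost is, up to polynomial factors, $\prod_{d} I(n_d)=2^{\sum_d \log I(n_d)}$. Your subsequent computation—summing the exponents geometrically to get $3n$ in the general case and $\bigoh(\sqrt{n}\log n)$ in the connected case—is precisely the right computation for this \emph{product}, and yields the correct final bounds; it just does not follow from the sum you wrote. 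The paper uses the product form directly, writing $T(n)=2^{\sqrt{n}\log n\cdot\sum_{d\ge 0}(2/3)^d}\cdot 2^{\bigoh(\sqrt{n}\log n)}$. Replacing your displayed sum by the product (equivalently, moving the summation into the exponent) repairs the argument without changing anything else.
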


 \begin{proof}
    Since the analysis for the general case is literally the same as in
    variable embedding case, i.e., in the proof of \Cref{lem:runtime}, we provide
    here only the proof for the case that $G$ is connected.
    
    Let $T(n)$ denote the running time of the algorithm when called with a
    graph on $n$ vertices. When $n\leq 100$, this is bounded by a
    constant.

    When $n>100$, the algorithm iterates over $\binom{n}{\leq 2
      \sqrt{2n}}=2^{\bigoh(\sqrt{n}\log n)}$ sets $U$. For each set $U$,
    the algorithm iterates over $|U|!=(2\sqrt{2n})!=2^{\bigoh(\sqrt{n}\log
      n)}$ orderings $\rho$. Because of \Cref{lem:algnooses}, for each $\rho$ the algorithm iterates
    over at most $n^{\bigoh(\sqrt{n})}$ $(U,\rho)$-nooses $N$.
    For
    each noose $N$, \cref{obs:computeNCM} implies that the
    algorithm iterates over $2^{\bigoh(\sqrt{n})}$ triples
    $T_\mathsf{in}$. So, in total, the algorithm performs
    $2^{\bigoh(\sqrt{n}\log n)}$ iterations. For each triple, the
    algorithm performs two recursive calls on graphs with at most
    $2n/3 + \bigoh(\sqrt{n})$ vertices each, and other internal
    calculations in polynomial time. So,
    $$T(n) = 2^{\bigoh(\sqrt{n}\log n)}\cdot (2\cdot
    T(\frac{2n}{3}+\bigoh(\sqrt{n})) + n^{\bigoh(1)}) =
    2^{\sqrt{n}\log n(\frac{2^0}{3^0}+\frac{2^1}{3^1}+\frac{2^2}{3^2}+\frac{2^3}{3^3}+\cdots)}\cdot
    2^{\bigoh(\sqrt{n}\log n)},$$

    which evaluates to $2^{\bigoh(\sqrt{n}\log n)}$.
\end{proof} 

From \cref{lem:algoCorrectF,lem:runtimeF}, we conclude the correctness of \cref{thm:solvefixed}.

\section{The Kernels}\label{se:fptAlgo}
In this section we provide kernelization algorithms for \CPLS\ and \CPLSF\ parameterized by the vertex cover number
$k$ of the input graph $G$. Assume that~$X$ is a vertex cover of $G$ of size $k$; we will deal with computing a suitable vertex cover in the proofs of the main theorems of this section.
As our first step, we construct the set $Z$ consisting of the union of $X$ with all vertices of degree at least $3$ in $G$. Since $G$ can be assumed to be planar, we have:

\begin{lemma}[{\cite[Lemma 13.3]{fomin2019kernelization}}]
\label{lem:linearplanarN}
$|Z|\leq 3k$.
\end{lemma}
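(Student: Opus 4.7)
The plan is to separate $Z$ into the vertex cover $X$ and the set $Y := \{v \in V(G) \setminus X : \deg_G(v) \geq 3\}$ of high-degree vertices outside $X$, so that $|Z| = |X| + |Y| = k + |Y|$, and then bound $|Y|$ using planarity.

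The key observation is that since $X$ is a vertex cover, $V(G) \setminus X$ is an independent set, so every edge of $G$ incident to a vertex of $Y$ has its other endpoint in $X$. Consequently, the bipartite subgraph $H$ of $G$ with vertex classes $X$ and $Y$ and with edge set $\{\{u,v\} \in E(G) : u \in X, v \in Y\}$ captures all edges incident to $Y$ in $G$, and in particular every vertex of $Y$ has degree at least $3$ in $H$.

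Since $G$ is planar, so is $H$, and $H$ is bipartite. I would now invoke the standard upper bound on edges of bipartite planar graphs: $|E(H)| \leq 2|V(H)| - 4 = 2(|X|+|Y|) - 4$ (assuming $|V(H)| \geq 3$; the small cases are trivial). Combining with the lower bound $|E(H)| \geq 3|Y|$ coming from the degree condition on $Y$, I obtain
\[
3|Y| \;\leq\; 2(k + |Y|) - 4,
\]
which rearranges to $|Y| \leq 2k - 4$. Therefore $|Z| \leq k + (2k-4) = 3k - 4 \leq 3k$, as claimed.

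There is no real obstacle here: the only step to double-check is the edge-count bound for bipartite planar graphs (girth at least $4$ gives $|E| \leq 2|V| - 4$), and handling the degenerate cases where $|V(H)| < 3$ or $Y = \emptyset$ separately (in which case the bound is immediate). The argument also works if the planarity of $G$ is only assumed for the subgraph $H$, which is automatic since $H$ is a subgraph of $G$.
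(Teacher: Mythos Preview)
Your argument is correct and is precisely the standard proof of the cited result (Lemma~13.3 in Fomin et al.'s kernelization book): the paper does not supply its own proof but simply invokes that lemma, and your bipartite-planar edge count yielding $|Y|\le 2k-4$ is exactly how that lemma is established.
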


Note that each vertex in~$V(G)\setminus Z$ now has $0$, $1$ or $2$ neighbors in $Z$. For each subset~$Q\subseteq Z$ of size at most $2$, let the \emph{neighborhood type} $T_Q$ consist of all vertices in $V(G)\setminus Z$ whose neighborhood in $Z$ is precisely $Q$. Moreover, for $i\in \{0,1,2\}$ we let $T_i=\bigcup_{Q\subseteq Z, |Q|=i}T_Q$ contain all vertices outside of $Z$ with degree $i$.
At this point, our approach for dealing with \CPLS\ and \CPLSF\ will diverge.

\subsection{The Fixed-Embedding Case}
\label{sub:fixembvc}

We will begin by obtaining a handle on vertices with precisely two neighbors in $Z$. However, to do so we first need some specialized terminology. To make our arguments easier to present, we assume w.l.o.g.\ that the input instance $\mathcal{I}$ is equipped with a drawing $D$ of $G$ that respects the given embedding.

Let $G_{2}$ be the subgraph of $G$ induced on $Z\cup T_2$, where $T_2$ is the set of all vertices in $V(G)\setminus Z$ with precisely two neighbors in $Z$. Let~$D_2$ be the restriction of $D$ to $G_2$, and observe that~$D_2$ only differs from $D$ by omitting some pendant and isolated vertices. Let a face in~$D_2$ be \emph{special} if it is incident to more than $2$ vertices of $Z$, and \emph{clean} otherwise; notice that the boundary of a clean face must be a $C_4$ which has precisely two vertices of $Z$ that lie on opposite sides of the $C_4$ and which contains only vertices in $T_0\cup T_1$ in its interior.

\begin{lemma}
\label{lem:specialf}
The number of special faces in $D_2$ is upper-bounded by $9k$, and the number of vertices in $T_2$ that are incident to at least one special face is upper-bounded by $36k$.
\end{lemma}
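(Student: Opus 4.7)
The plan is to pass from $(G_2,D_2)$ to a simpler plane multigraph on $Z$ by suppressing each degree-$2$ vertex in $T_2$. Concretely, let $G_3$ be obtained from $G_2$ (equipped with the drawing inherited from $D_2$) by replacing every $v\in T_2$ (which has two distinct neighbors $z_1,z_2\in Z$) with a single edge $z_1z_2$; this topological operation preserves the face structure, so the faces of $G_3$ are in natural bijection with those of $D_2$. Since $V(G_3)=Z$, a face of $D_2$ is special iff the corresponding face of $G_3$ has at least three distinct boundary vertices. Note that $G_3$ has no loops (as every $v\in T_2$ has distinct neighbors and $G$ is simple), though it may have parallel edges.

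For the first bound, set $n=|V(G_3)|=|Z|\le 3k$ and $m=|E(G_3)|$, and let $f_j$ denote the number of faces of $G_3$ with exactly $j$ distinct boundary vertices, so that $f_1=0$ and $|S|=\sum_{j\ge 3}f_j$ (writing $S$ for the set of special faces). Since each face's boundary walk visits at least as many edge-slots as distinct vertices, $\sum_j j\,f_j\le \sum_F |\partial F|_{\mathrm{walk}} = 2m$, which rearranges to $3|S|+2f_2\le 2m$. Plugging in Euler's formula $m=n+(|S|+f_2)-2$ (the disconnected case yields a strictly smaller bound) and simplifying gives $|S|\le 2n-4\le 6k-4\le 9k$.

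For the second bound, each $v\in T_2$ corresponds to a unique edge $e_v$ of $G_3$, and $v$ is incident to a special face of $D_2$ iff $e_v$ appears on the boundary walk of a special face of $G_3$. Hence the number of such $T_2$-vertices is at most $L:=\sum_{F\in S}|\partial F|_{\mathrm{walk}}$. Writing $L=2m-\sum_{F\notin S}|\partial F|_{\mathrm{walk}}$ and using that every non-special face (loop-free and with $\le 2$ distinct boundary vertices) has boundary walk of length at least $2$, we obtain
\[
L \;\le\; 2m - 2f_2 \;=\; 2(n+|S|-2) \;\le\; 2(3n-6) \;=\; 6n-12 \;\le\; 18k \;\le\; 36k.
\]

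The main subtlety is that $G_3$'s face boundaries need not be simple cycles --- they can be closed walks that traverse bridges, parallel edges, or pendant substructures --- but the combination of the face-degree inequality $\sum_j j\,f_j\le 2m$ with Euler's formula turns out to be robust enough to bound both $|S|$ and the total length of special-face boundary walks directly, without first simplifying $G_3$ to its underlying simple planar graph.
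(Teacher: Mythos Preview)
Your argument is correct, and in fact yields sharper constants ($|S|\le 6k-4$ and at most $18k$ vertices of $T_2$ on special faces) than the stated $9k$ and $36k$. The route, however, differs from the paper's. The paper argues by \emph{augmenting} rather than \emph{contracting}: for the first bound it inserts one new vertex into every special face, makes it adjacent to all $Z$-vertices on that face's boundary, and then reuses the degree-$\ge 3$ counting behind \cref{lem:linearplanarN} (applied with $Z$ playing the role of the vertex cover) to cap the number of new vertices by $3|Z|\le 9k$. For the second bound it observes that each $T_2$-vertex incident to a special face can be realized as a distinct edge in this augmented planar graph on $\le 12k$ vertices, whence the $36k$ edge bound.

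Your approach instead suppresses every $T_2$-vertex to obtain a loop-free plane multigraph $G_3$ on $Z$ and runs a direct Euler/face-degree calculation. This is more elementary and self-contained (no appeal to \cref{lem:linearplanarN}) and gives the tighter constants; the paper's construction is perhaps more reusable, since the same ``add a face-vertex'' trick recurs in \cref{obs:blocks} and \cref{lem:reducing types}. One small point worth making explicit in your write-up: your bound $\#\{v\in T_2:\text{$v$ on a special face}\}\le L$ uses that the map $v\mapsto e_v$ is injective into the edge multiset of $G_3$ (parallel edges are distinguished), and that $L$ also counts edge-sides coming from original $Z$--$Z$ edges of $G$, which is harmless for an upper bound. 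With that clarified, the argument is complete.
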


 \begin{proof}
The hypothetical graph $G'$ obtained by adding a vertex into each special face and making this vertex adjacent to each vertex in $Z$ on the boundary of its face is planar. Since $|Z|\leq 3k$ as per \cref{lem:linearplanarN}, we can repeat the same argument on $G'$ as the one used for \cref{lem:linearplanarN} to obtain that the number of special faces must be upper-bounded by $9k$. 

For the second part of the claim, we observe that the graph $G'$ has at most $12k$ vertices and each vertex in $T_2$ that is incident to a special face can be represented as a unique edge added into $G'$ while preserving planarity. The claim then followed by the well-known fact that a $12k$-vertex planar graph cannot have more than $36k$ edges.
\end{proof} 

For a pair $\{a,b\}\subseteq Z$, we say that a set $P$ of clean faces is an \emph{($ab$)-brick} if (1) the only vertices of $Z$ they are incident to are $a$ and $b$, and (2) $P$ forms a connected region in $D_2$, and (3) $P$ is maximal with the above properties. Since the boundaries of every clean face in~$P$ consists of $a$, $b$, and two degree-$2$ vertices, this in particular implies that the clean faces in $P$ form a sequence where each pair of consecutive faces shares a single degree-$2$ vertex; this sequence may either be cyclical (in the case of $a$ and $b$ not being incident to any special faces; we call such bricks \emph{degenerate}), or the first and last clean face in $P$ are adjacent to special faces. Observe that a degenerate brick may only occur if $|Z|=2$.

\begin{observation}
\label{obs:blocks}
The total number of bricks in $D_2$ is upper-bounded by $24k$.
\end{observation}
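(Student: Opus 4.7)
The plan is to charge each brick to its two ``endpoint'' $T_2$-vertices (one at each end of the sequence of clean faces forming the brick), and then apply the second part of \cref{lem:specialf} to bound the number of such vertices.

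First, I would unpack the structure of a non-degenerate $(ab)$-brick $P=\{f_1,\ldots,f_m\}$. Since consecutive clean faces share exactly one $T_2$-vertex and each clean face is a $C_4$ with two $T_2$-vertices, the brick uses exactly $m+1$ distinct $T_2$-vertices. The two ``exposed'' ones---namely the $T_2$-vertex on $f_1$ not shared with $f_2$ and the $T_2$-vertex on $f_m$ not shared with $f_{m-1}$---will be called the \emph{endpoint vertices} of $P$. By the non-degenerate condition, the first and last clean faces are adjacent to special faces, and since adjacency between a clean face and a special face can only be realized through a shared edge incident to a $T_2$-vertex, each endpoint vertex lies on the boundary of a special face.

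Next, I would observe that each $T_2$-vertex $v$ can serve as the endpoint of at most one brick. Indeed, $v$ has degree exactly $2$ in $G_2$, so it lies on precisely two faces of $D_2$. If $v$ is the endpoint of some brick, then (by construction) one of its two incident faces is a clean face of that brick, and the other is a special face; so $v$ uniquely determines the brick to which it belongs as an endpoint. Combining this with the second part of \cref{lem:specialf}, the total number of endpoints of non-degenerate bricks is at most the number of $T_2$-vertices incident to at least one special face, which is at most $36k$. Since each non-degenerate brick contributes exactly two endpoints, there can be at most $18k$ non-degenerate bricks.

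Finally, for degenerate bricks I would invoke the preceding observation that they require $|Z|=2$, in which case no special face can exist at all (every special face needs at least three vertices of $Z$ on its boundary), so all faces of $D_2$ are clean and wrap cyclically around the unique pair $\{a,b\}\subseteq Z$; a direct argument shows that this forces at most one degenerate brick to exist. Adding this constant to $18k$ yields a total of at most $18k+O(1)\leq 24k$ (the very small cases $|Z|\leq 2$ being handled trivially). The main subtlety in the argument is simply verifying that the endpoint-to-brick map is well-defined and injective; once this is done, the bound follows immediately from \cref{lem:specialf}.
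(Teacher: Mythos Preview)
Your proof is correct and takes a genuinely different route from the paper's. The paper builds an auxiliary planar graph $G''$ on $Z$ together with one new vertex per special face, then adds a further vertex of degree $4$ for each brick (adjacent to $a$, $b$, and the two special-face vertices flanking it), and invokes the argument of \cref{lem:linearplanarN} on this enlarged planar graph to cap the number of brick-vertices at $36k-12k=24k$. Your argument instead charges each non-degenerate brick to its two endpoint $T_2$-vertices, observes that each such vertex lies on exactly one clean and one special face (hence determines its brick uniquely), and then applies the second bound of \cref{lem:specialf} directly to get at most $36k/2=18k$ non-degenerate bricks. Your approach is more elementary---no auxiliary graph, no second appeal to the planarity-based counting lemma---and in fact yields the sharper constant $18k+\bigoh(1)$. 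The paper's approach has the minor advantage of being uniform (it reuses the same planarity machinery throughout), but your direct charging argument is cleaner for this particular observation.
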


 \begin{proof}
The claim clearly holds if $D_2$ contains a degenerate brick. Otherwise, 
consider the graph $G''$ obtained by taking the subgraph of $G$ induced on $Z$, and adding to it a vertex for each special face and making it adjacent to each of that face's incident vertices in $Z$. Observe that at this point, $|V(G'')|\leq 12k$. Now for each $ab$-brick, we add a vertex and make it adjacent to $a$, $b$, and the two vertices representing the special faces that the brick is adjacent to in $D_2$. Since each of the newly created vertices has degree at least $3$ (in fact, they have $4$), we can once again invoke the argument of \cref{lem:linearplanarN} to upper-bound the total number of vertices in $G''$ by $36k$, and in particular the total number of bricks by $24k$.
\end{proof} 

In the next lemma, we use \cref{obs:blocks} to guarantee the existence of a brick with sufficiently many clean faces to support a safe reduction rule.
\begin{lemma}
\label{lem:fixedreducetwo}
Assume $|T_2|\geq 420k+1$, where $k$ is the size of a provided vertex cover of $G$. Then we can, in polynomial time, either correctly determine that $\mathcal{I}$ is a no-instance or find a vertex $v\in  T_2$ with the following property: $\mathcal{I}$ is a yes-instance if and only if so is the instance~$\mathcal{I'}$ obtained from $\mathcal{I}$ by removing $v$. 
\end{lemma}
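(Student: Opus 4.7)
The plan is to combine the face-counting bounds from \cref{lem:specialf} and \cref{obs:blocks} with a structural argument about cluster paths inside a brick in order to isolate a safely removable vertex. The intuitive picture is that a long brick forces the cluster paths to ``travel along the ladder'' in a highly constrained way, which should let us remove a vertex in the middle of a long monochromatic segment.

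First, by \cref{lem:specialf} at most $36k$ vertices of $T_2$ are incident to any special face, so at least $384k+1$ vertices of $T_2$ have both incident faces clean; call these the \emph{interior} brick vertices. Since \cref{obs:blocks} caps the total number of bricks at $24k$, a pigeonhole argument yields a brick $B$ between some $a,b \in Z$ whose interior $T_2$-vertices, listed in order along $B$, are $v_1,\ldots,v_t$ with $t \geq \lceil (384k+1)/(24k) \rceil = 17$. Each $v_i$ has $N_G(v_i)=\{a,b\}$ and sits on the boundary of exactly two clean quadrilateral faces of the form $a, v_{i-1}, b, v_i$ and $a, v_i, b, v_{i+1}$.

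Next, I would establish the following structural property: in any hypothetical solution, a saturating edge incident to $v_i$ must be drawn in one of these two clean faces, and the only other vertices on their boundaries are $a, b, v_{i-1}, v_{i+1}$; hence the only possible saturating endpoints at $v_i$ are $v_{i-1}$ or $v_{i+1}$ (while $a, b$ are reachable only through the pre-existing $G$-edges $v_ia$ and $v_ib$). Consequently, if the cluster $C$ of $v_i$ satisfies $|C|\geq 2$ and contains neither $a$ nor $b$, then the members of $C$ among $v_1,\ldots,v_t$ must form a single contiguous run; otherwise some $v_i\in C$ would be isolated in $P_C$.

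Using this, I would argue as follows. If there exist three consecutive vertices $v_{i-1}, v_i, v_{i+1}$ in the same cluster $C$, I return $v = v_i$. For the forward direction, given a solution of $\mathcal{I}$, if $v_i$ is an endpoint of $P_C$ we simply drop it, and if $v_i$ is internal the sub-path $v_{i-1}-v_i-v_{i+1}$ is replaced by a single saturating edge $v_{i-1}v_{i+1}$, drawable in the face obtained by merging $f_i$ and $f_{i+1}$. For the backward direction, starting from a solution of $\mathcal{I}'$ we re-insert $v_i$ either by subdividing the saturating edge $v_{i-1}v_{i+1}$ (if present), by attaching $v_i$ as a new endpoint of $P_C$ through $v_{i-1}$ or $v_{i+1}$ when one of them is an endpoint, or by rerouting through a $G$-edge to $a$ or $b$. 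If no such 3-run exists, then the 17 interior vertices decompose into runs of length at most $2$; combining pigeonhole with the structural property above then either produces a singleton cluster among the $v_j$'s (which is trivially deletable), a configuration violating the structural property (yielding a no-instance certificate), or a configuration supported entirely by the clusters of $a$ or $b$, from which a safe removal can still be extracted by a localized case analysis using the $G$-edges incident to $a$ and $b$.

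The main obstacle will be the backward direction: the fixed embedding prescribes $v_i$'s exact position, and the two clean faces $f_i, f_{i+1}$ of $\mathcal{I}$ incident to $v_i$ merge into a single face of $\mathcal{I}'$, so every saturating edge drawn in that merged face in the $\mathcal{I}'$-solution must be shown to split consistently between $f_i$ and $f_{i+1}$ in $\mathcal{I}$. The technical heart of the proof is to verify this in every sub-case, particularly when the cluster of $v_i$ contains $a$ or $b$ and $v_{i-1}$ or $v_{i+1}$ already has degree two in the cluster path, which forces the insertion of $v_i$ to proceed via rerouting one of the existing edges rather than by direct attachment.
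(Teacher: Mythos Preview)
Your counting to locate a long brick is fine and matches the paper. The genuine gap is in your ``structural property'': you assert that the only possible saturating endpoints at $v_i$ are $v_{i-1}$ or $v_{i+1}$, reasoning from the boundaries of the two clean faces of $D_2$ incident to $v_i$. But $D_2$ is the restriction of $D$ to $Z\cup T_2$; a clean face of $D_2$ is not a face of $D$, and by definition it can contain arbitrarily many degree-$0$ and degree-$1$ vertices of $G$ in its interior (this is stated explicitly where clean faces are introduced). Hence a saturating edge at $v_i$ may well go to a pendant or isolated vertex of $v_i$'s cluster sitting inside $f_i$ or $f_{i+1}$, and your contiguity claim (``members of $C$ among $v_1,\dots,v_t$ must form a single contiguous run'') collapses. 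For the same reason, your backward direction is harder than you acknowledge: the merged face in $\mathcal{I}'$ can carry curves for many other clusters (those of the pendants/isolated vertices inside), and these need not split cleanly between $f_i$ and $f_{i+1}$ once $v_i$ is reinserted.

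The paper's proof is built precisely around this obstacle. It first restricts attention to \emph{superclean} faces (clean faces containing no degree-$0$/$1$ vertices from the clusters of $a$ and $b$), argues that three consecutive superclean faces must exist or the instance is a no-instance, and then performs explicit checks on the other-cluster vertices inside those faces (rejecting if any of them must be connected to something outside the face). Only after these checks does it set $v$ and carry out the re-insertion, where the supercleanness and the passed checks are exactly what make the rerouting of the curves for $a$'s cluster and $v_i$'s cluster go through. Your ``no 3-run'' fallback is also not a proof as written; the paper instead does a two-case analysis on whether $v_i$ shares a cluster with a neighbor, together with a ``badness'' check that detects an unavoidable crossing when it does not.
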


 \begin{proof}
By our assumption on the size of $T_2$, there must exist $16$ consecutive clean faces in some brick $P$. Indeed, by \cref{lem:specialf} there are at most $36k$ vertices in $T_2$ incident to any special face of $D_2$, meaning that at least $384k+1$ vertices in $T_2$ which are incident exclusively to faces occurring in bricks. The existence of $P$ then follows by the number of bricks being upper-bounded by $24k$ as per \cref{obs:blocks}. Let us denote the faces occurring in $P$ as  $f_1,\dots,f_{16}$ and recall that each of these faces of $D_{a,b}$ only contain vertices of degree $1$ and $0$ in $D$. For $2\leq i\leq 16$, let $v_i$ be the unique vertex in $T_2$ shared by $f_i$ and $f_{i+1}$; let $v_1$ ($v_{17}$) be the unique vertex in $T_2$ on the boundary of $f_1$ ($f_{16}$, respectively) but no other face in $P$.

Let us call a clean face \emph{superclean} if it does not contain any degree-$0$ and degree-$1$ vertices from the same clusters as those of $a$ and $b$.
Assume that there are no three consecutive superclean faces in $f_1,\dots,f_{16}$. Then we claim that $\mathcal{I}$ must be a no-instance: indeed, a solution for \CPLS\ consists of a set of curves which only traverse each vertex at most once, but at most two curves for the clusters of $a$ and $b$ can only visit at most four out of the sixteen faces in $f_1,\dots,f_{16}$.

We proceed by assuming that a consecutive set of three of the above faces, say $f_{i-1}$-$f_{i+1}$, are superclean. For $j\in \{i,i+1\}$, let $v_j$ be \emph{bad} if it belongs to a different cluster than $v_{j-1}$ and $v_{j+1}$ but the two latter vertices belong to the same cluster. Observe that if both $v_i$ and $v_{i+1}$ are bad then $\mathcal{I}$ must be a no-instance: indeed, in this case no curve can connect $v_{i-1}$ with $v_{i+1}$ without crossing through a vertex belonging to another cluster. By symmetry, we proceed under the assumption that $v_i$ is not bad.

Next, we check whether there is a vertex $c\in f_{i-1}$ such that $c$ belongs to a cluster different from the cluster(s) of $v_{i-1}$ and $v_{i}$, and yet the cluster of $c$ also contains some vertex $d$ outside of $f_{i-1}$. If this check succeeds then $\mathcal{I}$ must be a no-instance since every curve connecting $c$ to $d$ must cross the boundary of $f_i$. We perform the analogous check for $f_i$, and then---assuming no such $c$ exists---distinguish two cases based on the cluster of $v_i$:

For the first case, let $v_i$ belong to the same cluster as either $v_{i+1}$ or $v_{i-1}$. We provide the argument for the former case, whereas the latter case is entirely symmetric. Set $v:=v_i$ and assume that the instance $\mathcal{I'}$ obtained from $\mathcal{I}$ by removing $v$ is a yes-instance. A solution for $\mathcal{I'}$  must contain a (possibly degenerate) curve for the cluster of $v$ which touches $v_{i+1}$, and since $v_{i+1}$ has degree $2$ this curve can be assumed to either end at $v_{i+1}$ or cross into $f_i$. In either of these cases, we can extend the curve to touch $v_i$ in order to also obtain a solution for $\mathcal{I}$, whereas---since $f_i$ is clean---the only other curve such an extension may need to intersect would be a curve in the solution of $\mathcal{I'}$ connecting $a$ to $b$. If this happens, we redraw the curve connecting $a$ to $b$ to follow the edges $av$ and then $vb$ in $f_{i-1}$ whenever it would intersect into $f_i$; this is safe since we ensured $f_i$ is superclean. Hence, in this case we obtain a valid solution for $\mathcal{I}$, as desired.

For the second case, let $v_{i}$ belong to a different cluster than both $v_{i+1}$ and $v_{i-1}$, and recall that since $v_i$ is not bad the latter two vertices must themselves belong to different clusters. Say that these vertices belong to the clusters $V_s$, $V_p$ and $V_q$, respectively. Here, we need to perform a set of additional checks before proceeding: we check whether the set of degree-one vertices adjacent to $a$ (or, symmetrically, $b$) inside $f_i$ contains a pair of vertices $x,y$ such that $x\in V_s$, $y\in V_q$, and the order of neighbors of $a$ in a circular traversal around $a$ (or, symmetrically, $b$) contains the pattern $v_{i+1},y,x,v_i$. If this check fails, then $\mathcal{I}$ must be a no-instance since every curve connecting $y$ to $v_i$ must cross every curve connecting $x$ to $v_{i+1}$. We also perform an analogous check for the face $f_{i-1}$. 

Now assume that the above checks did not fail, and that the instance $\mathcal{I'}$ obtained from $\mathcal{I}$ by removing $v:=v_{i+1}$ is a yes-instance. We construct a solution for $\mathcal{I}$ from one for $\mathcal{I'}$ as follows. For each cluster other than $V_s$, we keep their curves as in the solution for $\mathcal{I'}$, but with the difference that whenever such a curve would intersect either of the two edges incident to $v_i$ we instead redraw it to follow that edge without crossing it; this is possible since by the non-badness of $v_i$, the check performed for choices of $c$ and the supercleanliness of $f_i$ and $f_{i-1}$, the only cluster that can contain vertices in both of these faces is $V_s$. Next, we construct a curve for the cluster $V_s$ by first having it collect all of its vertices in $f_{i-1}$ (in a circular order along the boundary of $f_{i-1}$), then traversing through $v_i$ to $f_i$, and collecting all of its vertices in $f_{i}$. This concludes the construction of the sought-after solution for $\mathcal{I}$, and the lemma follows.
\end{proof} 

Our next goal will be to reduce the size of $T_1$. To do so, we will first reduce the total number of clusters occurring in the instance---in particular, while by now we have the tools to reduce the size of $G_2$ (and hence also the number of clusters intersecting $V(G_2)=T_2\cup Z$), there may be many other clusters that contain only vertices in $T_1$ and $T_0$. 
Let $V_i$ be a cluster which does not intersect $V(G_2)=T_2\cup Z$. Observe that if $V_i$ contains vertices in more than a single face of $D_2$, then $\mathcal{I}$ must be a no-instance; for the following, we shall hence assume that this is not the case. In particular, for a cluster $V_i$ such that all of its vertices are contained in a face $f$ of $D$, we define its \emph{type} $t(i)$ as follows. $t_i$ is the set which contains $f$ as well as all vertices $v$ on the boundary of $f$ fulfilling the following condition: each $v\in t(i)$ is adjacent to a pendant vertex $a\in V_i$ where $a$ is drawn in $f$. An illustration of types is provided in~\cref{fig:types}.

\begin{SCfigure}[3][t]
\includegraphics[scale=0.6]{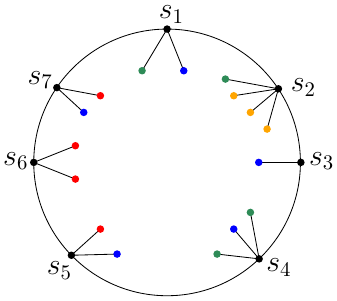}
\caption{An illustration of cluster types in a depicted face $f$. In this example, individual clusters are marked by colors and none of the vertices $s_1,\dots,s_7$ belong to any of the colored clusters. The types of the red, blue, yellow and green clusters are $\{f, s_5, s_6, s_7\}$, $\{f, s_1, s_3, s_4, s_5, s_7\}$, $\{f,s_2\}$ and $\{f, s_1, s_2, s_4\}$, respectively. Note that the depicted example cannot occur in a yes-instance since the curves for, e.g., the blue and red clusters would need to cross each other.}
\label{fig:types}
\end{SCfigure}

For the next lemma, let $\tau$ be the set of all types occurring in $\mathcal{I}$.

\begin{lemma}
\label{lem:reducing types}
If $|V(G_2)|=\alpha$ and $D_2$ has $\beta$ faces, then $|\tau|\leq 7\alpha+8\beta$ or $\mathcal{I}$ is a no-instance. 
\end{lemma}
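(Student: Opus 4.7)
The plan is to prove the contrapositive: assuming $\mathcal{I}$ is a yes-instance with some valid solution drawing $\Gamma$ extending the given embedding of $G$, show that $|\tau|\le 7\alpha+8\beta$. Fix such a $\Gamma$.

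My first step would be a \emph{localization} observation: every cluster $V_i$ disjoint from $V(G_2)$ has all its vertices drawn inside the interior of a single face of $D_2$. Indeed, if two vertices of $V_i$ lay in distinct faces of $D_2$, then the cluster path for $V_i$ in the solution would contain an edge crossing a face boundary of $D_2$, i.e., an edge of $G_2$, contradicting planarity of $\Gamma$. So each type $t\in\tau$ inherits a well-defined \emph{home face} $f(t)$, and $|\tau|=\sum_{f}|\mathcal{F}_f|$ where $\mathcal{F}_f=\{t\in\tau:f(t)=f\}$.

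Next, fix a single face $f$ of $D_2$ with boundary cycle of length $m_f$, and bound $|\mathcal{F}_f|$. For each $t\in\mathcal{F}_f$, write $A(t)=t\setminus\{f\}\subseteq\mathrm{bd}(f)$ for the set of boundary vertices that receive a pendant of the corresponding cluster. Since the cluster paths inside $f$ for distinct home-$f$ clusters are vertex-disjoint simple curves drawn inside the closed disk bounded by $\mathrm{bd}(f)$ (and the cyclic orders at boundary vertices are fixed by the embedding), these paths cannot cross; this forces the family $\{A(t):t\in\mathcal{F}_f\}$ to satisfy a strong pairwise compatibility condition on the cyclic sequence $\mathrm{bd}(f)$ (no two attachments may interleave in a way that would topologically force their cluster paths to cross). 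I would then split the count: types with $|A(t)|\le 1$ contribute at most $m_f+1$ to $|\mathcal{F}_f|$ (the empty attachment together with one singleton per boundary vertex). For attachments of size at least two, I associate to each such $A$ a center $c_A$ and consider the planar bipartite graph $H_f$ with vertex set $\mathrm{bd}(f)\cup\{c_A\}$ and edges joining each center to the elements of its attachment; the compatibility condition lets $H_f$ be drawn inside the disk bounded by $\mathrm{bd}(f)$ without crossings. Combining Euler's formula (refined using bipartiteness and the fact that each $c_A$ has degree at least two) with an iterative peeling of an ``outermost'' center---one whose removal frees off a new boundary arc---I expect to derive a bound of the form $|\mathcal{F}_f|\le c_1 m_f+c_2$ for suitable small constants.

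Finally, I would sum the per-face estimate over all faces of $D_2$. Using the planarity of $G_2$ together with $\sum_f m_f=2|E(G_2)|\le 6\alpha$, I obtain $|\tau|\le 6c_1\alpha+c_2\beta$, and after choosing constants with $6c_1\le 7$ and $c_2\le 8$ this yields $|\tau|\le 7\alpha+8\beta$. The main obstacle I anticipate is the per-face combinatorial bound: a direct application of Euler's formula to $H_f$ controls only the total incidence $\sum_A|A|$, not the number of distinct attachments, so the peeling argument must carefully exploit the cyclic compatibility condition so that each center removed ``consumes'' its own share of the boundary (or of a separating chord of the remaining disk). I would also need to handle separately short edge cases---for instance faces of small boundary length where the singleton count is already close to the target---to keep the constants tight enough to land at exactly $7\alpha+8\beta$.
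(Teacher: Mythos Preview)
Your route is genuinely different from the paper's, and it runs into a real obstruction at exactly the step you yourself flag. The paper does not localise to faces at all: it partitions $\tau$ by the size of the type and bounds each size class by a single \emph{global} planarity argument. Types of size $1$ contribute at most $\beta$; types of size $2$ are at most the number of vertex--face incidences of $D_2$, i.e., the edges of the bipartite planar radial graph on $V(G_2)\cup\{\text{faces}\}$, hence $\le 2(\alpha+\beta)$; for types of size $3$ one inserts one chord per type into the drawing and uses planarity of the resulting graph to get $\le 3(\alpha+\beta)$; and for size $\ge 4$ one represents each type by a new vertex of degree $\ge 3$ attached to its boundary vertices and invokes the argument behind \cref{lem:linearplanarN} to get $\le 2(\alpha+\beta)$. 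The sum is $7\alpha+8\beta$.

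Your per-face inequality $|\mathcal{F}_f|\le c_1 m_f+c_2$ cannot hold with $c_1\le 7/6$. In a single face with boundary $v_1,\dots,v_{m}$ one can realise, in a yes-instance, all $m-1$ fan attachments $\{v_1,v_i\}$ together with the $m-2$ adjacent-pair attachments $\{v_j,v_{j+1}\}$ for $2\le j\le m-1$: these $2m-3$ size-$2$ attachments are pairwise non-crossing (the fan triangulates the polygon and each adjacent-pair arc sits inside its own triangle), and they coexist with all $m$ singleton types and the empty type. Hence $|\mathcal F_f|\ge 3m_f-2$, forcing $c_1\ge 3$ and thus $6c_1\ge 18$. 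No peeling argument can repair this: the issue is not that Euler miscounts centers, it is that there genuinely are $\Theta(m_f)$ distinct, mutually compatible degree-$2$ attachment sets in a single face. Your scheme would still deliver a linear bound $O(\alpha+\beta)$---which is all the kernel needs---but not the constants in the lemma as stated; to hit $7\alpha+8\beta$ you need the global auxiliary-graph arguments that aggregate the size-$2$ and size-$3$ contributions across all faces at once.
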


 \begin{proof}
To prove the lemma, it will be useful to consider the auxiliary planar graph $G^*$ obtained from $G_2$ and its drawing $D_2$ by adding a vertex $v_f$ into each face of $D_2$ and making it adjacent to each vertex of $G_2$ on the boundary of $f$. We call the newly created edges incident to $v_f$ \emph{marker edges}, and observe that---by the planarity of $G^*$ and the fact that the marker edges induce a bipartite graph---the number of marker edges is upper-bounded by $2(\alpha+\beta)$. Crucially, each type $\{f,a_1,\dots,a_\ell\}\in \tau$ can be uniquely associated with the set of marker edges that is induced by $v_f$ and the vertices in that type (in particular, these will be the edges connecting the face in the type to the vertices occurring in that type).

By the above, we can immediately conclude that the number of types of size $2$ is upper-bounded by the total number of marker edges, i.e., $2(\alpha+\beta)$. The number of types of size $1$ is trivially upper-bounded by $\beta$.
As for a types of size $3$, we observe the following: for each type in $\tau$ containing two vertices $a_1$, $a_2$ on the boundary of its face $f$, either there is a simple closed curve in $f$ connecting $a_1$ to $a_2$ which does not cross any part of $D_2$ or any other such simple closed curve for types of size $3$, or $\mathcal{I}$ is a no-instance. In other words, we can enhance $D_2$ by adding, for each $\{f,a_1,a_2\}\in \tau$, an edge $a_1a_2$ into $f$ while maintaining planarity (or, if this fails, we can correctly reject the instance). By planarity, it then follows that the number of types of size $3$ cannot exceed $3(\alpha+\beta)$.

Finally, for types of size at least $4$, a hypothetical solution would have to add curves into $D_2$ connecting all the pendant vertices contributing to that type while maintaining planarity. Thus, we can once again invoke Lemma 13.3 in~\cite{fomin2019kernelization} (i.e., \cref{lem:linearplanarN}) to bound the total combined number of clusters in all such types---and hence also the number of types of size at least $4$---by $2\alpha+2\beta$, whereas if this bound is exceeded then we correctly identify that $\mathcal{I}$ is a no-instance.
\end{proof} 

\cref{lem:reducing types} allows us to bound the total number of cluster types in the instance via Lemma~\ref{lem:fewclusters} below. 
The proof relies on a careful case analysis that depends on the size of the cluster types of the considered clusters; for cluster types of size at least $4$ we directly obtain a contradiction with planarity, but for cluster types of size $2$ or $3$ we identify ``rainbow patterns'' that must be present and allow us to simplify the instance.

\begin{lemma}
\label{lem:fewclusters}
Assume there are three distinct clusters, say $V_1$, $V_2$ and $V_3$, which do not intersect $V(G_2)$ and all have the same type of size at least $2$. Then we can, in polynomial time, either correctly identify that $\mathcal{I}$ is a no-instance or find a non-empty set $A\subseteq T_1$ with the following property: $\mathcal{I}$ is a yes-instance iff so is the instance $\mathcal{I'}$ obtained from $\mathcal{I}$ by~removing~$A$.
\end{lemma}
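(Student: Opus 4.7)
The plan is to proceed by case analysis on the size of the common cluster type $t = \{f, a_1, \ldots, a_\ell\}$ shared by $V_1, V_2, V_3$, where $\ell \geq 1$. In every case the crucial observation is that each vertex of $V_1 \cup V_2 \cup V_3$ lies in the face $f$ of $D_2$, so in any hypothetical saturator the three corresponding cluster-paths must be drawn entirely inside the disk bounded by $\partial f$, and each must reach every anchor $a_j$ through at least one pendant drawn inside $f$.

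For the case $|t| \geq 4$ I would show that $\mathcal{I}$ is a no-instance, which makes the lemma vacuous (output any single pendant as $A$). Starting from a hypothetical saturator, contract each $V_i$-path inside $f$ to a single vertex $u_i$ and each pendant attached to $a_j$ onto $a_j$. The resulting graph is drawn planarly inside $f$, is bipartite between $\{u_1,u_2,u_3\}$ and $\{a_1,\ldots,a_\ell\}$, and contains every edge $u_i a_j$ since each cluster has a pendant at every anchor. With $\ell \geq 3$ this contains a $K_{3,3}$ topological minor, contradicting planarity.

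For the remaining cases $|t| \in \{2,3\}$ I would exploit that the fixed embedding prescribes the rotation at each anchor $a_j$ restricted to $f$, which in turn fixes the cyclic order of the $V_i$-pendants in $f$ near $a_j$. The first sub-step is to prove that a solution can exist only if a \emph{rainbow pattern} holds: for $\ell = 2$ the orders of representative $V_i$-pendants at $a_1$ and $a_2$ must agree up to reversal, while for $\ell = 1$ the pendants of each cluster must occupy a contiguous arc in the rotation at $a_1$ inside $f$. A violation of the rainbow immediately yields a no-instance. Otherwise the rainbow singles out a middle cluster $V_m \in \{V_1,V_2,V_3\}$ whose saturator-path must be topologically sandwiched between those of the other two; I would then let $A$ consist of a non-empty set of $V_m$'s pendants, chosen so that at least one pendant per anchor remains to anchor $V_m$'s reduced path.

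The forward direction of the equivalence is routine: delete the $A$-vertices from $V_m$'s path in the saturator and add a local shortcut near the relevant anchor inside the region reserved for $V_m$. The main obstacle I expect is the reverse direction, where from a saturator $H'$ of $\mathcal{I}'$ I must re-insert the pendants of $A$ together with their fixed edges to the anchors while preserving the prescribed embedding. The key point is that, because $V_m$ is the middle cluster, the arc of the rotation at the relevant anchor where $A$'s pendants originally sat is, in any drawing of $H'$, shielded by the saturator-paths of the two extremal clusters, producing a free sub-region of $f$ into which the removed pendants and their saturating sub-path can be placed and then spliced back into $V_m$'s path. Verifying that this splicing is always possible under the given fixed embedding, and that it does not cross the paths of any other clusters that happen to live in $f$, is the main technical challenge, and it is where the rainbow pattern plays its essential role.
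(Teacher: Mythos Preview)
Your treatment of the case $|t|\ge 4$ via a $K_{3,3}$ minor matches the paper exactly, and for $|t|=3$ your rainbow--plus--middle--cluster idea is precisely what the paper does: it finds representatives $s^1,s^2,s^3$ and $t^1,t^2,t^3$ in rainbow order (rejecting otherwise), then deletes the single pendant $A=\{s^2\}$ of the middle cluster $V_2$. The backward direction there does not rely on ``shielding'' by the outer clusters, but on an explicit check that no cluster other than $V_2$ has vertices in both of the regions $L_1$ and $L_2$ cut out by the rainbow; if that check fails one rejects, and if it passes the curve for $V_2$ can simply make a detour from $t^2$ to $s^2$ without crossing anything. Your sketch is in the right spirit but leaves this check implicit.

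The genuine gap is the case $|t|=2$, i.e.\ a single anchor $a$. Your claimed rainbow condition---``the pendants of each cluster must occupy a contiguous arc in the rotation at $a$''---is false: if $V_1$ has pendants $p,p'$ with a pendant $q$ of some other cluster between them, this is perfectly fine as long as $q$'s entire cluster lives in the region enclosed by the $V_1$-curve and the edges $ap,ap'$. What holds is only a \emph{nesting} (laminarity) property, not contiguity, and with a single anchor there is no canonical ``middle'' cluster among $V_1,V_2,V_3$. The paper sidesteps this entirely and does not use three clusters at all in this case: it looks only at $V_1$, deletes it if $|V_1|=1$, and otherwise takes any two pendants $b,c\in V_1$, lets $L$ be the pendants of $a$ strictly between $b$ and $c$, rejects if some cluster has vertices both in $L$ and outside $L$, and otherwise sets $A=\{c\}$. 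The backward direction then just routes the $V_1$-curve from $b$ along $ba$, around the boundary of everything drawn for $L$, and along $ac$ to reach $c$. You should replace your $\ell=1$ argument by this one (or by a correct laminarity-based variant).
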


 \begin{proof}
Let $f$ be the unique face in $t(1)$. 
Assume $|t(1)|=2$ and in particular $t(1)=\{f,a\}$ for some $a$ on the boundary of $f$. If $V_1$ contains only a single vertex, then the claim holds by setting $A:=V_1$. Otherwise, let $b,c\in V_1$, and let $L$ be the set of all pendant vertices adjacent to $a$ that occur between $b$ and $c$ when traversing the boundary of $f$ in $D$ (in particular, every vertex in $L$ must be a pendant vertex attached to $a$, just like $b$ and $c$). Notice that all vertices in $L$ as well as $b$ and $c$ must lie in different clusters than the one containing $a$. We distinguish two cases:

\begin{enumerate}
\item If $L$ contains at least one vertex $d\in V_i$ such that $V_i$ also contains at least one vertex inside $f$ which lies outside of $L$, then we identify $\mathcal{I}$ as a no-instance. This is correct, since a solution would need to draw a curve for $V_i$ and this would necessarily cross the curve connecting $b$ to $c$;
\item Otherwise, we set $A:=\{c\}$. This is correct, since every hypothetical solution for $\mathcal{I}'$ obtained after removing $c$ can be extended to a solution for $\mathcal{I}$ as follows: at the point where the curve for $V_1$ reaches $b$, we let it (a) follow the edge $ba$, then follow the boundary of all curves and edges incident to $L$, then follow the edge $ac$ to reach $c$, and finally backtrack its steps to return to $b$. The fact that a  hypothetical solution for $\mathcal{I}$ implies the existence of a solution for $\mathcal{I}'$ follows directly by yes-instances being closed under vertex deletion.
\end{enumerate}

Next, assume $|t(1)|=3$ and in particular $t(1)=\{f,s,t\}$ for some $s,t$ on the boundary of $f$, and let $V_s$ and $V_t$ be the clusters containing $s$ and $t$, respectively; notice that $V_s$ could be identical to $V_t$. For $1\leq j\leq 3$, let $s^j$ ($t^j$) denote an arbitrary vertex in $V_j$ adjacent to $s$ ($t$, respectively). By renaming these three clusters, we may assume w.l.o.g.\ that in a traversal of the boundary of $f$, we encounter these vertices in the order $(s^1,s^2,s^3,t^3,t^2,t^1)$, meaning that they form a ``rainbow pattern''. Indeed, if this were not the case, then a hypothetical solution would need to have the curves for at least two of these clusters intersect each other, meaning that we can immediately reject instances without this rainbow pattern. Moreover, a circular traversal of $f$ allows us to partition the vertices in $T_1$ based on their position w.r.t.\ the rainbow pattern identified above. In particular, let 
\begin{itemize}
\item $L_0$ be the vertices encountered from $t^1$ to $s^1$, 
\item $L_1$ be the vertices encountered from $s^1$ to $s^2$ and from $t^2$ to $t^1$,
\item $L_2$ be the vertices encountered from $s^2$ to $s^3$ and from $t^3$ to $t^2$,
\item $L_3$ be the vertices encountered from $s^3$ to $t^3$.
\end{itemize}

For $L_1$ and $L_2$, we can now perform an analogous case distinction as we did for $L$ in the case of $|t(1)|=2$ earlier. In particular:

\begin{enumerate}
\item If $L_1$ contains at least one vertex $d\in V_i$ such that $V_i$ also contains at least one vertex in $L_2$, then we identify $\mathcal{I}$ as a no-instance. This is correct, since a solution would need to draw a curve for $V_i$ and this would necessarily cross the curve connecting $s^2$ to $t^2$;
\item Otherwise, we set $A:=\{s^2\}$. This is correct, since every hypothetical solution for $\mathcal{I}'$ obtained after removing $s^2$ can be extended to a solution for $\mathcal{I}$ as follows. First, we observe that a solution for $\mathcal{I}'$ cannot separate $s^2$ from $t^2$ due to the condition stated in the previous point not being satisfied. Hence, to extend a solution for $\mathcal{I}'$ to $\mathcal{I}$ we can extend the curve for $V_2$ as follows: when it touches $t^2$ it will make a detour to reach $s^2$, and then route back to the vicinity of $t^2$, and then proceed as per the original solution for $\mathcal{I}'$.
The fact that a  hypothetical solution for $\mathcal{I}$ implies the existence of a solution for $\mathcal{I}'$ follows directly by yes-instances being closed under vertex deletion.
\end{enumerate}

Finally, if $|t(1)|\geq 4$ then a hypothetical solution would need to contain three pairwise non-crossing curves connecting the three vertices on the boundary of $f$ belonging to $t(1)$ (implying the existence of a $K_{3,3}$ with one side being obtained by the contraction of the three clusters), which is not possible by the planarity of the solution.
\end{proof}

Having bounded the number of cluster types (\cref{lem:reducing types}) and the number of clusters of each type (\cref{lem:fewclusters}), it remains to bound the number of vertices in each of the clusters.

\begin{lemma}
\label{lem:reducecluster}
Let $a\in V(G_2)$ and $f$ be a face of $D_2$ incident to $a$, and let $k$ be the size of a provided vertex cover of $G$. Assume a cluster $V_i$ contains at least $2k+3$ vertices in $T_1$ that are adjacent to $a$ and lie in $f$. Then we can, in polynomial time, either correctly identify that $\mathcal{I}$ is a no-instance, or find a vertex $q\in V_i\cap T_1$ such that $\mathcal{I}$ is a yes-instance if and only if so is the instance $\mathcal{I}'$ obtained by deleting $q$.
\end{lemma}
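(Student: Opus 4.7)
Enumerate the $m\ge 2k+3$ pendants in $V_i\cap T_1$ that are adjacent to $a$ and drawn in $f$ as $p_1,\dots,p_m$ in the cyclic rotation order at $a$ inside $f$, and write $P$ for their set. The forward direction (a yes-instance for $\mathcal{I}$ implies a yes-instance for $\mathcal{I}'$) is immediate by deleting $q$ from the $V_i$-path of a solution, so the bulk of the plan consists of selecting $q$ and proving the reverse direction.

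I select $q:=p_j$ as the middle of a triple of rotation-consecutive pendants of $V_i$; more precisely, $p_{j-1}, p_j, p_{j+1}$ are three $P$-positions appearing consecutively in $a$'s rotation inside $f$ with no other edge of $a$ between them. To guarantee such a triple exists, I argue by pigeonhole: the $m$ positions of $P$ in $a$'s rotation within $f$ are partitioned into maximal $P$-runs by the positions of edges at $a$ incident to vertices outside $P$; denote by $t$ the number of such separating non-$P$ positions. A key sub-step is to establish $t\le k$ by leveraging the vertex cover size $|X|=k$ together with \cref{lem:linearplanarN} and (after exhaustive application of) \cref{lem:reducing types} and \cref{lem:fewclusters}, which constrain how many distinct clusters can contribute non-$V_i$ pendants at $a$ in $f$; any configuration violating this can be certified as a no-instance in polynomial time via local consistency checks analogous to those in the proof of \cref{lem:fewclusters}. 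With $t\le k$ and $m\ge 2k+3$, some $P$-run necessarily has length $\ge 3$, yielding the target triple and hence the pendant $q$.

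For the reverse direction, take any solution of $\mathcal{I}'$. Every saturating edge at a pendant $p\in P$ is drawn inside $f$ (its sole original edge goes to $a$), and since $p_{j-1}\in V_i\setminus\{q\}$ with $|V_i\setminus\{q\}|\ge 2$, there is at least one saturating edge $e=p_{j-1}x$ of the $V_i$-path drawn in $f$. I replace $e$ by the detour $p_{j-1}\,q\,x$: the edge $p_{j-1}q$ is drawn as a tiny arc at $a$ between the rotation positions of $p_{j-1}$ and $q$, and $qx$ is drawn as a slight shift of $e$ toward $q$'s rotation position. Because no other edge of $a$ lies between $p_{j-1}$ and $q$ in the rotation, this shift can be performed without introducing crossings. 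If $p_{j-1}$ was a path-endpoint in the solution for $\mathcal{I}'$, the simpler extension where $q$ is appended after $p_{j-1}$ works instead.

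The main obstacle is the bound $t\le k$: in principle $a$ may have many pendants in $f$, so a careful accounting is required, combining the structural consequences of \cref{lem:reducing types,lem:fewclusters} with the planarity of the embedding to show that only $O(k)$ of these can separate the $P$-positions (subject to an additional rejection clause when this fails). The rerouting argument itself is essentially a local topological modification, but its correctness hinges on the \emph{strict} rotation-adjacency of $p_{j-1}$ and $q$, which is precisely what the pigeonhole step delivers.
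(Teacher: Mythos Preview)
There is a genuine gap: the bound $t\le k$ on the number of non-$P$ positions in $a$'s rotation inside $f$ is unjustified, and the lemmas you cite do not deliver it. \Cref{lem:reducing types} and \cref{lem:fewclusters} bound the number of cluster \emph{types} and the number of clusters per type, but say nothing about how many pendants a single cluster $V_j\neq V_i$ may contribute at $a$ inside $f$; bounding that is precisely the job of the present lemma (applied to $V_j$), so invoking it here would be circular. The vertex-cover size $k$ does not help either, since the separating pendants are degree-$1$ vertices outside the cover. Your fallback ``rejection clause when this fails'' is also unsupported: having many non-$V_i$ pendants interleaved with $P$ at $a$ does not by itself certify a no-instance, so you cannot simply reject when $t>k$. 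Without $t\le k$, the pigeonhole step collapses and you have no clean triple to work with.

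The paper's argument avoids this obstacle by not seeking a clean triple at all. It looks at the $2k+2$ ``gaps'' $L_j$ between consecutive $V_i$-pendants (allowing arbitrary content in each gap) and calls $L_j$ \emph{bad} if some cluster has vertices both inside $L_j$ and elsewhere in $f$. If at least $2k+2$ gaps are bad, then in any hypothetical solution the curves witnessing badness separate the $v_j$'s into $2k+3$ regions of $f$, forcing the $V_i$-curve to cross the boundary of $f$ in $D_2$ at $\ge 2k+1$ distinct vertices; this contradicts the $\le 2k$ bound on that boundary coming directly from the vertex cover, so one rejects. Otherwise some $L_p$ is \emph{good}, and $q:=v_p$ is deleted; the rerouting then only needs to detour around curves that stay entirely within $L_p$, which ``good'' guarantees. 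The key difference is that the paper's pigeonhole is over a notion (good vs.\ bad gaps) whose failure genuinely certifies a no-instance, whereas yours is over a quantity ($t$) that carries no such certificate.
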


\begin{proof}
Let $v_1,\dots,v_{2k+3}$ be the first $2k+3$ vertices in $V_i\cap T_1$ in the neighborhood of $a$ encountered when walking in a counterclockwise manner along the boundary of $f$. For each $j\in [2k+2]$, let $L_j$ denote the set of vertices in $T_1$ that are encountered in the same traversal between $v_j$ and $v_{j+1}$; note that $L_j$ may be empty. Let $L^*$ be the set of all vertices encountered by following the counterclockwise traversal beyond $v_{2k+3}$. To avoid confusion, we remark that $L^*$ will also contain vertices outside of $T_1$.

We say that $L_j$ is \emph{bad} if it contains at least one vertex $d\in L_j\cap V_p$ (for some $p$) such that $V_p$ also contains at least one vertex in or on the boundary of $f$ but outside of $L_j$; more precisely, it contains a vertex in either $L^*$ or in $L_{j'}$ for some $j'\neq j$. If $L_j$ is not bad, then we call it \emph{good}. We now distinguish two cases that are somewhat analogous to the cases considered in the proof of \cref{lem:fewclusters}.

Assume there are at least $2k+2$ bad sets $L_{j^1},\dots, L_{j^{2k+2}}$. We claim that in this case, $\mathcal{I}$ can be rejected. Indeed, let us consider a hypothetical solution for $\mathcal{I}$, and notice that such a solution must include a curve that connects a vertex from each $L_{j^u}$, $u\in [2k+2]$, to some vertex on the boundary of $f$ in $D$ that lies outside of $L_{j^u}$. 
Crucially, such a curve must separate $v_{j^u}$ from $v_{j^{u+1}}$ in $f$, and in particular the set of these curves partition $f$ into $2k+3$ many connected regions such that each of the vertices $v_{j^u}$ lies in a separate region. 

Next, we consider the curve for $V_i$ in the hypothetical solution and divide it into $2k+1$ subcurves which start and end at the individual vertices $v_{j^u}$, $u\in [2k+2]$; note that the curve for $V_i$ need not visit the vertices in $V_i$ in any particular order (and, e.g., not in the order $v_1,\dots,v_{2k+3}$). However, each of the at least $2k+1$ subcurves would have to cross through the boundary of $f$ in $D_2$---but this is a contradiction with the fact that the boundary of $f$ in $D_2$ cannot have more than $2k$ vertices due to $k$ being the size of a provided vertex cover of $G$.

Hence, we are left with the case where there are at most $2k+1$ bad sets $L_j$. 
Since the total number of such sets is $2k+2$, there must be at least one good set, say $L_p$. We now set $q:=v_p$. To conclude the proof, it remains to argue that a hypothetical solution $S$ for $\mathcal{I}'$ can be extended to a solution for $\mathcal{I}$ (the converse holds by the fact that yes-instances are closed under vertex deletion). Towards this, we begin by first taking $S$ and topologically shifting each of the curves which happen to intersect or touch the edge $av_p$ to avoid any collisions; this can be done without causing any new intersections between the shifted curves and other edges of $D$ or other curves since $v_p$ is a pendant vertex. 

Before proceeding, it will be useful to observe that since $L_p$ is good, the drawing of all the curves in $S$ intersecting the vertices in $L_p$ does not touch the boundary of $f$ outside of $L_p$. Let us denote by $\delta$ the curve in $S$ for the cluster $V_0$ containing $a$.
For our second step of adapting $S$, we alter the curve $\iota$ for $V_i$ as follows: when the curve touches $v_{p+1}$, we route it along the edge $v_{p+1}a$ and then have it follow the boundary of $f\cup (S\setminus \{\iota\})$ until it reaches $v_p$. We then have $\iota$ touch $v_p$ and subsequently backtrack in the same way until we get back to $v_{p+1}$, and at that point we follow the original route of $\iota$. As observed at the beginning of this paragraph, this can be done without crossing any edge in $D$ and any curve in $S$ possibly except for $\delta$.

For our third and final step of adapting $S$, we reroute $\delta$ to obtain a solution for $\mathcal{I}$. Specifically, at this point $\delta$ may cross $\iota$ if it enters or exits from $a$ between $v_p$ and $v_{p+1}$ (in a counterclockwise circular traversal of the edges incident to $a$), and this may happen up to two times. We alter the drawing of $\delta$ as follows: whenever it enters (or exits) $a$ between $v_p$ and $v_{p+1}$, we instead enter (or exit) $a$ immediately to the left of $v_p$, then have $\delta$ follow along the edge $av_p$ and then along the boundary of $\iota$ until we reach the angular position at which $\delta$ originally entered (or exited) $a$, at which point we have $\delta$ follow its original route. Now none of the curves cross each other or any of the edges in $D$, and hence we have obtained a solution for $\mathcal{I}$ as required. 
\end{proof}

We now have all the ingredients required to obtain our polynomial kernel:

\begin{theorem}
\CPLSF\ has a cubic kernel when parameterized \mbox{by the vertex cover number.}
\end{theorem}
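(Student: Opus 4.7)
The plan is to apply a sequence of reduction rules that successively bound different components of the instance, culminating in a straightforward counting argument. I would begin by computing a $2$-approximate vertex cover $X$ of $G$ (so $|X|\le 2k$) and forming $Z$ as the union of $X$ with all vertices of degree at least three; \cref{lem:linearplanarN} then gives $|Z|\le 3k$, and every vertex of $V(G)\setminus Z$ falls into $T_0$, $T_1$, or $T_2$ according to its number of neighbors in $Z$.

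Next, I would exhaustively invoke \cref{lem:fixedreducetwo} until $|T_2|\le 420k$, so that $|V(G_2)|=O(k)$. Since $G_2$ is planar, the drawing $D_2$ has $O(k)$ edges and $O(k)$ faces by Euler's formula, and the total number of vertex--face incidences $\sum_f |\partial f|$ is also $O(k)$.

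To bound the number of clusters, I would then apply \cref{lem:reducing types} to cap the number of distinct types by $|\tau|=O(k)$, and apply \cref{lem:fewclusters} exhaustively to leave at most two clusters per type of size at least two. For singleton types $\{f\}$, i.e., clusters consisting entirely of isolated vertices drawn inside $f$, I would add two auxiliary rules: (i) delete all but one vertex of such a cluster, and (ii) if more than one singleton-type cluster lives in the same face, delete all but one of them; both rules are safe because such clusters do not interact with the rest of the instance and any linear saturator extends trivially to the deleted content. Together with the at most $|V(G_2)|=O(k)$ clusters that intersect $V(G_2)$, the instance now contains $O(k)$ clusters in total.

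Finally, I would apply \cref{lem:reducecluster} exhaustively so that, for every triple $(a,f,V_i)$ with $a\in V(G_2)$, $f$ a face of $D_2$ incident to $a$, and $V_i$ a cluster, at most $2k+2$ vertices of $V_i\cap T_1$ are adjacent to $a$ inside $f$. A companion reduction is needed for the $T_0$ vertices of each cluster inside each face of $D_2$ (any excess $T_0$ vertex of $V_i$ in $f$ can be safely deleted, since it can be spliced back into the path of its cluster after the rest of the drawing is fixed). Multiplying the bounds, there are $O(k)$ vertex--face incidences, $O(k)$ clusters, and $O(k)$ vertices per $(a,f,V_i)$-triple, yielding $|T_0|+|T_1|=O(k^3)$ and hence $|V(G)|=O(k^3)$. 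The main obstacle I anticipate is formulating and verifying the companion rule for $T_0$ and the singleton-type rules, since \cref{lem:reducecluster} as stated addresses only $T_1$ vertices adjacent to a specific $a$; once these are pinned down, the cubic bound follows by the counting argument above.
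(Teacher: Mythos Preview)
Your proposal is correct and follows essentially the same route as the paper: bound $Z$ and $T_2$ linearly via \cref{lem:linearplanarN} and \cref{lem:fixedreducetwo}, bound the number of cluster types and hence clusters linearly via \cref{lem:reducing types} and \cref{lem:fewclusters}, then multiply the $O(k)$ vertex--face incidences, $O(k)$ clusters, and $O(k)$ bound from \cref{lem:reducecluster} to get $|T_1|=O(k^3)$. The paper handles singleton-type clusters by deleting any cluster lying entirely in $T_0$ within a single face (and rejecting otherwise), and handles the remaining $T_0$ vertices by keeping at most one per cluster per face of $D_2$, yielding $|T_0|=O(k^2)$; your proposed companion rules are equivalent in spirit and suffice for the cubic bound. (One cosmetic slip: with a $2$-approximate cover of size $\le 2k$, \cref{lem:linearplanarN} gives $|Z|\le 6k$, not $3k$; this only affects constants.)
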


 \begin{proof}
We begin by computing a vertex cover $X$ of size at most $2k$ and then the set $Z$ whereas $|Z|\leq 6k$. We then exhaustively apply \cref{lem:fixedreducetwo} to either solve the input instance or reduce the number of vertices in $T_2$ to at most $840k$. At this point $G_2$ has at most $846k$ vertices and hence at most $1692k$ faces. 

We now invoke \cref{lem:reducing types} to either reject the instance or obtain a guarantee that the number $|\tau|$ of types of clusters which do not intersect $V(G_2)$ is at most $5922k+13536k=19458k$. Next, we exhaustively apply \cref{lem:fewclusters} to either solve the instance, or obtain a bound of $38916k$ on the total number of clusters which (1) do not intersect $V(G_2)$ and at the same time (2) contain at least one vertex in $T_1$. 
For each cluster which contains vertices exclusively in $T_0$, we check whether all of its vertices occur in the same face (in which case we can safely delete it from the instance) or not (in which case we can reject the instance).
Since the number of clusters which do intersect $V(G_2)$ is trivially upper-bounded by $840k$, we are guaranteed that the instance at this point contains at most $39756k$ clusters. 

Before applying the final reduction rule, we need to bound the number of pairs $(v,f)$ such that $v$ is adjacent to at least one pendant vertex that is drawn in the face $f$. Towards this, let us consider the auxiliary graph $H$ whose one side consists of $X$ and whose other side consists of the set of faces in $D_2$. We add an edge $(v,f)$ into $H$ whenever $v\in X$ is incident to the face $f$ of $D_2$. Since $|V(H)|\leq 2k+1692k$ and $H$ is a bipartite planar graph, the total number of edges in $H$ is upper-bounded by $3388k$---and this then also bounds the number of pairs $(v,f)$ such that $v$ is adjacent to at least one pendant vertex that is drawn in the face $f$. Now, we apply exhaustively \cref{lem:reducecluster} for each of the at most $3388k$ choices of $(a,f)\in E(H)$ and for each choice $V_i$ of the at most $39756k$ clusters remaining in the instance; at the end of this procedure, we have either solved the instance or are left with a guarantee that each $V_i$ contains at most $4k+3$ vertices in $T_1$ that are adjacent to $a$ and lie in $f$. Altogether, this yields a bound of at most $538 773 312 k^3+404 079 984 k^2$ vertices in $T_1$. Finally, we keep at most one isolated vertex per cluster in each of the faces of $D$, which allows us to bound the number of vertices in $T_0$ by $67 267 152\cdot k^2$, completing our kernel.
\end{proof}

\subsection{The Variable-Embedding Case}

For \CPLS, we can immediately observe that the vertices in $T_0$ are entirely irrelevant.

\begin{observation}
\label{obs:variable-degzero}
Let $(G',\mathcal{V'})$ be the instance of \CPLS\ obtained by removing all vertices in $T_0$ from $(G,\mathcal{V})$. Then $(G',\mathcal{V'})$ is a yes-instance of \CPLS\ if and only if so is $(G,\mathcal{V})$.
\end{observation}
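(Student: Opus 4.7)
\textbf{Proof plan for \Cref{obs:variable-degzero}.} The proof is a direct two-direction argument exploiting the fact that every vertex in $T_0$ is an isolated vertex of $G$: indeed, any $v \in T_0$ is outside the vertex cover $X$ and has no neighbor in $Z \supseteq X$, so all potential neighbors of $v$ would have to lie in $V(G)\setminus X$, contradicting $X$ being a vertex cover.

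For the forward direction, I would start with a linear saturation $H$ of $G$ together with a planar drawing $\Gamma$, and repeatedly eliminate every $v \in T_0$ as follows. Since $v$ is isolated in $G$, the two (or one, or zero) edges of $H$ incident to $v$ are all saturating edges of the cluster $V_i \ni v$. If $v$ is an endpoint of the path $H[V_i]$, simply delete $v$ together with its incident saturating edge. If $v$ is an internal vertex with path-neighbors $u$ and $w$, delete $v$ and its two incident saturating edges $uv, vw$ and add a new saturating edge $uw$, drawn as a Jordan curve that closely follows the concatenation $uv \cup vw$ in $\Gamma$; planarity is preserved because the old curves had no crossings. The resulting embedded graph is planar, its edge set still contains $E(G')=E(G)$, and each cluster $V_i \setminus T_0$ continues to induce a single path, so it is a linear saturation of $G'$.

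For the reverse direction, I would start with a linear saturation $H'$ of $G'$ in a planar drawing $\Gamma'$, and build a linear saturation $H$ of $G$ by attaching the vertices of $T_0$ one cluster at a time. For each cluster $V_i$ with $V_i \cap T_0 = \{v_1^i,\dots,v_{t_i}^i\} \neq \emptyset$, fix an endpoint $u_i$ of the path $H'[V_i \setminus T_0]$ (if this path is empty, the construction below starts from $v_1^i$ directly) and add the saturating edges $u_i v_1^i, v_1^i v_2^i, \dots, v_{t_i-1}^i v_{t_i}^i$. The augmented graph $H$ is planar because we are iteratively attaching pendants: $v_1^i$ can be drawn inside any face incident to $u_i$ in the current drawing, and each subsequent $v_{j+1}^i$ is placed inside the (unique) face containing the freshly added pendant $v_j^i$. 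For clusters entirely contained in $T_0$, the whole new path $v_1^i,\dots,v_{t_i}^i$ is drawn inside a small disk-shaped region placed in the outer face, with disjoint regions used for different such clusters. The resulting graph $H$ is planar, every $H[V_i]$ is a single path, and $E(H) \supseteq E(H') \supseteq E(G')=E(G)$, so $H$ is a linear saturation of $G$.

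There is essentially no obstacle: the only thing to check is the standard fact that attaching a degree-one vertex inside a face of a planar embedded graph preserves planarity, which is what lets the reverse direction go through. The forward direction boils down to the equally standard observation that contracting (or, equivalently, bypassing) a degree-two vertex preserves planarity.
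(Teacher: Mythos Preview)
Your proposal is correct and follows essentially the same approach as the paper: a direct two-direction argument exploiting that $T_0$-vertices are isolated in $G$. The only cosmetic difference is in the reverse direction, where the paper inserts each $T_0$-vertex by subdividing a saturating edge of its cluster path (``placing it on the curve''), whereas you append the $T_0$-vertices as pendants at one end of the path; both constructions are equally valid and equally elementary.
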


 \begin{proof}
If $(G,\mathcal{V})$ is a yes-instance, then so is $(G',\mathcal{V'})$ since yes-instances are closed under vertex deletion. On the other hand, if $(G',\mathcal{V'})$ admits a solution consisting of a drawing $D$ and a set of curves, then we can construct a solution for $(G,\mathcal{V})$ by placing each vertex in $T_0$ directly on the curve for that cluster (whereas if the cluster was empty, we place all of the vertices next to each other in an arbitrary face of $D$).
\end{proof} 

For the remainder of this subsection, we hence assume that $T_0=\emptyset$. While ultimately our aim will be to reduce the size of each of the remaining neighborhood types, for now we will restrict ourselves to instead reduce the number of vertices belonging to the same cluster occurring in each of these neighborhood types.

\begin{observation}
\label{obs:variable-onetype}
Let $a,b\in T_1$ be two vertices that lie in the same neighborhood type and in the same cluster. Then $(G,\mathcal{V})$ is a yes-instance if and only if so is $(G-a,\mathcal{V}-a)$.
\end{observation}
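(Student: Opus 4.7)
The forward direction is immediate, since yes-instances of \CPLS{} are closed under vertex deletion. For the reverse direction, the plan is to start from any solution $(D', Z')$ for $(G-a, \mathcal{V}-a)$---where $D'$ is a planar drawing and $Z'$ the set of saturating edges---and to extend it to a solution for $(G,\mathcal{V})$ by a local modification near $b$. The key point is that $a$ and $b$ are ``twins'': both are pendants attached to the same vertex $v \in Z$, and both belong to the same cluster $V_i$. Let $P$ denote the path induced by $V_i \setminus \{a\}$ in the saturated subgraph $(G-a) \cup Z'$; by construction $P$ passes through $b$.

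If $b$ is an endpoint of $P$, the argument is short: pick any face $f$ of $D'$ incident to both $b$ and $v$ (which exists since $bv \in E(G-a)$), place $a$ inside $f$ very close to $b$, draw the edge $av$ inside $f$ parallel to $bv$, and add a short saturating edge $ab$ inside $f$. Planarity is preserved because all modifications take place inside $f$, and the cluster path for $V_i$ becomes $P$ extended by $a$ at its $b$-endpoint.

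The main case is when $b$ is internal to $P$, with neighbors $x,y$ on $P$; since $V_i$ is independent in $G$, both $bx$ and $by$ are saturating edges. Here the plan is to re-route the path as $\ldots-x-b-a-y-\ldots$, replacing $by$ with two new saturating edges $ba$ and $ay$. The crucial observation is that after deleting the curve of $by$ from $D'$, the two faces formerly adjacent to $by$ at $b$ merge into a single face $f$ whose boundary contains $b$, $y$, and---by examining the cyclic order of the three edges $bv, bx, by$ around $b$---also $v$. Inside $f$ one can then place $a$ near $b$ and route three pairwise non-crossing curves from it: $av$ drawn parallel to $bv$ on the $f$-side of $bv$, $ay$ along the former route of $by$, and a short $ab$ near $b$. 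The opposite cyclic order at $b$ is handled symmetrically by instead inserting $a$ between $x$ and $b$. The main step to verify carefully is precisely this local incidence structure around $b$: that the merged face is simultaneously incident to $v$, $b$, and $y$, and that the three new curves at $a$ can be routed without crossings or interference with~$Z'$.
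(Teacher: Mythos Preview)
Your proof is correct and follows the same idea as the paper's: place $a$ at $\epsilon$-distance next to $b$, route the edge $av$ alongside $bv$, and splice $a$ into the cluster path at $b$. The paper's own argument is a two-sentence version of this (``route it through $a$ and then let it return to $b$ before proceeding''); your version is more careful in separating the endpoint/internal cases and in spelling out the local face structure after deleting the saturating edge $by$, but the underlying construction is identical.
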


 \begin{proof}
One direction once again follows by the fact that yes-instances are closed under vertex deletion. For the non-trivial direction, assume we have a solution for $(G-a,\mathcal{V}-a)$. We can extend such a solution to a solution for $(G,\mathcal{V})$ by placing $a$ at an $\epsilon$-distance next to $b$ in the drawing (for a sufficiently small $\epsilon$). The edge connecting $a$ to its neighbor follows the corresponding edge connecting $b$ to that same neighbor, and when the curve for the cluster containing $a$ and $b$ visits $b$ we route it through $a$ and then let it return to $b$ before proceeding.
\end{proof} 

\begin{lemma}
\label{lem:variable-twotype}
Let $a,b,c,d\in T_2$ be four vertices that lie in the same neighborhood type and in the same cluster. Then $(G,\mathcal{V})$ is a yes-instance if and only if so is $(G-a,\mathcal{V}-a)$.
\end{lemma}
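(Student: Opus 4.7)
The forward direction is straightforward: given a solution to $(G, \mathcal{V})$ with its planar drawing, removing $a$ and its incident edges leaves a planar graph, and if $a$ was internal to the $V_i$-path one may add one saturating edge between $a$'s two path-neighbours (drawable in the face vacated by $a$, and permissible because $V_i$ is independent in $G$). For the harder backward direction, I will start from a solution $H$ to $(G{-}a, \mathcal{V}{-}a)$ together with its accompanying planar drawing and reinsert $a$ along with its $G$-edges $ax, ay$ (where $\{x,y\}$ is the common neighborhood of $a, b, c, d$ in $Z$) and at most two additional saturating edges of $V_i$, obtaining a solution to $(G, \mathcal{V})$.

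The central structural object will be the subgraph of $H$ induced by $\{x, y, b, c, d\}$, which is a $K_{2,3}$ and hence $2$-connected, so its embedding in the given planar drawing of $H$ is unique up to reflection and has three faces $F_{bc}, F_{cd}, F_{bd}$, each bounded by a $4$-cycle on $x, y$ and two of $\{b, c, d\}$ (after relabelling so that $c$ is the middle vertex in the cyclic order of $b, c, d$ around $x$). The cluster path $P$ for $V_i$ in $H$ visits $b, c, d$ in some order, and each of its path-edges incident to a vertex in $\{b, c, d\}$ lies in exactly one of the two $K_{2,3}$-faces adjacent to that vertex.

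I will split into three cases based on how $P$ interacts with $b, c, d$. In the first case, some $v \in \{b, c, d\}$ is an endpoint of $P$: then at most one $K_{2,3}$-face adjacent to $v$ contains a path-edge at $v$, and I will insert $a$ near $v$ on the empty side in the face of $H$ whose boundary contains $v, x, y$, with edges $ax, ay$ together with a single new saturating edge $av$, making $a$ the new endpoint of $P$. In the second case, all of $b, c, d$ are internal to $P$ and some $v \in \{b, c, d\}$ has both of its path-edges in the same $K_{2,3}$-face $F$: then $a$ is inserted in the face of $H$ at $v$ opposite $F$ (whose boundary contains $v,x,y$), and $P$ is modified by a small local rerouting that replaces one of $v$'s path-edges $vw$ with the pair $va, aw$, giving a new cluster path $\ldots u, v, a, w, \ldots$ for $V_i$. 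In the third case, all of $b, c, d$ are internal to $P$ and each has its two path-edges split across its two adjacent $K_{2,3}$-faces; by tracing $P$, the $K_{2,3}$-face whose boundary contains the first and the last of $b, c, d$ on $P$ must then contain the two ``tails'' of $P$ leading to the two endpoints of $P$, and the face of $H$ obtained by walking around these tails has $x, y$ on its boundary together with path-edges, which lets me insert $a$ and subdivide one such path-edge by $a$.

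The main obstacle will be ensuring that the face of $H$ I target in each case really does exist with the desired boundary structure: other clusters' saturating edges, or saturating edges of $V_i$ not ruled out by the case hypothesis, might a priori further subdivide the relevant $K_{2,3}$-face and separate $x$ from $y$ in it. Overcoming this relies on the combined observations that no saturating edge of another cluster can touch $b, c$, or $d$ (since these are in $V_i$), and that in each case the case hypothesis excludes exactly the obstructive $V_i$-saturating edges from the targeted region. Once these local obstructions are ruled out, the insertion of $a$ and the re-routing of $P$ become routine and the resulting drawing is easily seen to be a solution for $(G, \mathcal{V})$.
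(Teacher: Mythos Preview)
Your Case~2 is broken. You place $a$ in the $H$-face at $v$ on the side opposite to $F$, and then propose to replace the path-edge $vw$ (which lies in $F$) by the pair $va,aw$. The edge $va$ can indeed be drawn in that opposite face, but $aw$ cannot: $a$ and $w$ lie in different $K_{2,3}$-regions, and the common boundary of those two regions is precisely the path $x\text{--}v\text{--}y$, so every arc from $a$ to $w$ that avoids $v$ must cross $vx$ or $vy$. Deleting $vw$ does not open a passage, since $vx$ and $vy$ remain in place. Working at the non-crossing vertex $v$ simply cannot succeed, because $a$ needs edges to both $x$ and $y$ \emph{and} to two path-neighbours, yet all path-neighbours of $v$ sit on the wrong side of $\{vx,vy\}$.

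The paper's proof sidesteps this by never working at a non-crossing vertex. It observes that the $K_{2,3}$ on $\{x,y,b,c,d\}$ cuts the plane into three regions, that the curve for $V_i$ visits all of $b,c,d$ but neither $x$ nor $y$, and that no single region has all three of $b,c,d$ on its boundary; hence the curve must pass from one region to another, and can only do so at one of $b,c,d$. At that vertex (call it $d$) the two path-edges lie on opposite sides of $dx,dy$, and one inserts $a$ right next to $d$ by subdividing the path-edge on one side and drawing $ax,ay$ parallel to $dx,dy$. Concretely, if $p_1,p_2,p_3$ is the order in which $P$ visits $\{b,c,d\}$, then the $p_1$--$p_2$ segment of $P$ is confined to $F_{p_1p_2}$ and the $p_2$--$p_3$ segment to $F_{p_2p_3}$, so $p_2$ is always a crossing vertex. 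This single observation subsumes all three of your cases: your Case~1 construction is correct but unnecessary, your Case~3 is handled at $p_2$ without the vague ``walking around the tails'' step, and the fix for your Case~2 is to abandon $v$ and work at $p_2$ instead.
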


 \begin{proof}
One direction once again follows by the fact that yes-instances are closed under vertex deletion. For the non-trivial direction, assume we have a solution for $(G-a,\mathcal{V}-a)$, let $v,w\in Z$ be the two neighbors of $a,b,c,d$ and let $V_i$ be the cluster containing $a,b,c,d$. Observe that the edges between $b,c,d$ and $v,w$ partition the plane into three regions, and that the curve for $V_i$ in the hypothetical solution must cross into at least two of these regions. However, the only way such a curve may move from one region to the other is by passing through one of these vertices; without loss of generality and by symmetry, let us assume that is the vertex $d$. We can now extend the solution for $(G-a,\mathcal{V}-a)$ to a solution for $(G,\mathcal{V})$ by placing $a$ right next to $d$ and having the curve for $V_i$ visit $a$ either just before or just after it visits $d$. 
\end{proof} 

\cref{obs:variable-onetype} and \cref{lem:variable-twotype} allow us to deal with cases when there are large clusters; however, it may still happen that there are many clusters occurring in the instance. We first handle this case for clusters intersecting $T_2$.

For a $2$-vertex subset $Q=\{a,b\}$ of $Z$, let a cluster $V_i$ occurring in $T_Q$ be $Q$-\emph{isolated} if $V_i\subseteq T_{\{a,b\}}\cup T_a \cup T_B$, and \emph{isolated} if it is $Q$-\emph{isolated} for some $Q$. We now show that if there are sufficiently many clusters in $T_2$, then there must be at least two isolated clusters.

\begin{lemma}
\label{lem:findisolated}
Let $|Z|\leq p$ for some integer $p$. Assume that $|T_2|\geq 60p+1$ and that we have exhaustively applied \cref{lem:variable-twotype}. Then the instance either contains two $Q$-isolated clusters for some $T_Q\subseteq T_2$, or must be a no-instance.
\end{lemma}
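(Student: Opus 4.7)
The strategy is a contradiction argument: assume the instance is a yes-instance yet no $T_Q$ contains two $Q$-isolated clusters, and derive the contradictory inequality $|T_2|\leq 60p$. I begin with two structural observations. Since $G$ is planar, the multigraph on $Z$ obtained by suppressing each degree-$2$ vertex of $T_2$ into a parallel edge between its two $Z$-neighbors is a minor of $G$ and hence planar, so the number of distinct non-empty $T_Q$'s (with $|Q|=2$) is at most $3p-6$. Moreover, by the exhaustive application of \cref{lem:variable-twotype}, every cluster $V_j$ satisfies $|V_j\cap T_Q|\leq 3$ for each such $Q$.

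Call a cluster $V_j\in\mathcal{V}_2:=\{V_i:V_i\cap T_2\neq\emptyset\}$ \emph{good} if it is $Q$-isolated for some $Q$ and \emph{bad} otherwise. Observe that a good $V_j$ has $V_j\cap Z=\emptyset$ (since $T_Q\cup T_a\cup T_b$ is disjoint from $Z$) and $V_j\cap T_2\subseteq T_Q$ for its unique witnessing $Q$; in particular, each good cluster contributes at most $3$ vertices to $T_2$. Under the contradiction hypothesis, there is at most one good cluster per $T_Q$, so there are at most $3p-6$ good clusters overall, together contributing at most $9p-18$ vertices to $T_2$.

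The heart of the argument will be bounding the bad clusters' contribution to $|T_2|$ via bipartite planarity. Starting from a c-planar embedding of a witnessing linear saturation, I will contract each cluster path into a planar minor that keeps the vertices of $Z$ as individual vertices; this yields a planar bipartite graph $B$ on parts $Z$ and the contracted cluster pieces. Each good cluster produces a single piece of degree exactly $2$ in $B$ (connected to the two vertices of its unique $Q$), whereas each bad cluster produces pieces whose combined degree is at least $3$ (coming from a neighbor in $Z\setminus\{a,b\}$, arising either from an external witness vertex or from a $Z$-vertex of the cluster itself). Applying $|E(B)|\leq 2|V(B)|-4$ together with these degree bounds yields both a linear-in-$p$ bound on the number of bad clusters and, by exploiting the planarity of the sub-multigraph on each $N_Z(V_j\cap T_2)$ (which gives $d_j\leq 3|N_Z(V_j\cap T_2)|-6$ where $d_j:=|\{Q:V_j\cap T_Q\neq\emptyset\}|$), a linear-in-$p$ bound on $\sum_{V_j\text{ bad}}d_j$. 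Since $|V_j\cap T_2|\leq 3d_j$ after the reduction, the total bad contribution to $|T_2|$ is linear in $p$, and together with the $9p-18$ from good clusters this forces $|T_2|\leq 60p$ for suitable constants, contradicting $|T_2|\geq 60p+1$.

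The main technical obstacle is the treatment of bad clusters $V_j$ with $V_j\cap Z\neq\emptyset$: naive contraction of the cluster path would merge the $Z$-vertices of $V_j$ into a single minor vertex, destroying the separation required by the bipartite argument. I would address this by contracting only the maximal sub-paths of each cluster path that avoid $Z$, producing at most $|V_j\cap Z|+1$ pieces per such cluster; since $\sum_{V_j}|V_j\cap Z|\leq p$, the total number of contracted pieces remains $O(|\mathcal{V}_2|+p)$, so the bipartite planarity bound remains linear in $p$ and the final accounting closes.
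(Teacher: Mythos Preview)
Your approach is genuinely different from the paper's and, with some care, does yield a linear bound $|T_2|=O(p)$; however, it does not recover the specific constant $60$ in the stated threshold.

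The paper argues directly inside a hypothetical solution drawing $D$. Restricting $D$ to $G_2=G[Z\cup T_2]$ and invoking the special/clean face dichotomy of \cref{sub:fixembvc}, one bounds (with $|Z|\le p$) the $T_2$-vertices lying in the same cluster as some $T_2$-vertex incident to a special face by $36p$, and the number of bricks by $8p$. The remaining $\ge 24p+1$ vertices live strictly inside bricks; pigeonhole gives a brick containing $\ge 4$ of them, hence $\ge 2$ clusters, and the key geometric observation is that the curve of any such cluster cannot escape its brick, forcing the cluster to be $Q$-isolated. The constant $60$ is exactly tuned to this counting.

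Your contraction-to-bipartite-minor argument avoids the face analysis altogether and is more portable, but the bookkeeping is looser. Two remarks. First, the assertion that every bad cluster has combined piece-degree $\ge 3$ is not quite right: if $V_j=\{a,u\}$ with $u\in T_{\{a,b\}}$ and $a\in Z$, then $V_j$ is bad yet its single piece $\{u\}$ has $Z$-neighbourhood $\{a,b\}$ of size $2$. This is harmless since such clusters lie in $\mathcal B_1$ and $|\mathcal B_1|\le p$, but you should separate $\mathcal B_0$ (bad with $V_j\cap Z=\emptyset$, one piece of degree $\ge 3$) from $\mathcal B_1$ explicitly rather than claiming a uniform degree bound. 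Second, tracing your chain --- $|\mathcal G|\le 3p$, $|\mathcal B_0|=O(p)$ from the degree-$3$ argument on $\mathcal B_0$-pieces, $|\mathcal B_1|\le p$, hence $O(p)$ pieces in total; then $\sum_j|N_Z(V_j\cap T_2)|\le|E(B)|=O(p)$, so $\sum_j d_j=O(p)$ via $d_j\le 3|N_Z(V_j\cap T_2)|$; finally $|T_2|\le 3\sum_j d_j$ --- the constant one actually obtains is well above $100p$. So your plan proves the right qualitative statement (and would suffice for the linear kernel in \cref{thm:varembkernel} with a worse constant), but not the lemma as written with the threshold $60p+1$.
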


 \begin{proof}
Consider a hypothetical solution for $(G,\mathcal{V})$ consisting of a drawing $D$ along with a set of curves. Recalling the definition of $D_2$ and of special and clean faces from \cref{sub:fixembvc}, by \cref{lem:specialf} we obtain that the number of vertices in $T_2$ that are incident to at least one special face in $D$ is upper-bounded by $12p$.
By the exhaustive application of \cref{lem:variable-twotype}, we are guaranteed that the number of vertices in $T_2$ which lie in the same cluster as those that are incident to at least one special face in $D$ is upper-bounded by $36p$. 

This means that there remain at least $24p+1$ vertices in $T_2$ which are not incident to any special face, and these must lie in bricks as defined in \cref{sub:fixembvc} whereas the number of bricks is upper-bounded by $8p$ by \cref{obs:blocks}. Hence, there must be a brick containing at least $4$ vertices; these vertices all belong to the same $T_Q\subseteq T_2$ (for some $Q$) and must lie in at least two distinct clusters. Crucially, we observe that these two clusters must be $Q$-isolated, since their corresponding curves cannot leave their brick in $D$. This means that if $(G,\mathcal{V})$ is a yes-instance, then it must contain two $Q$-isolated clusters.
\end{proof} 

\begin{lemma}
\label{lem:removeisolated}
Let $V_i$. $V_j$ be two $\{a,b\}$-isolated clusters. Then $(G,\mathcal{V})$ is a yes-instance if and only if so is $(G-V_j,\mathcal{V}\setminus \{V_j\})$. 
\end{lemma}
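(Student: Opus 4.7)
The forward direction is immediate: yes-instances of \CPLS{} are closed under vertex deletion, so restricting any solution of $(G,\mathcal{V})$ to $V(G)\setminus V_j$ yields a solution for $(G-V_j,\mathcal{V}\setminus\{V_j\})$.

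For the reverse direction, fix a solution $Z^-$ for $(G-V_j,\mathcal{V}\setminus\{V_j\})$, let $Z'\subseteq Z^-$ denote the saturating edges of $Z^-$ for clusters other than $V_i$, and pick any $v^* \in V_i \cap T_{\{a,b\}}$ (which exists because $V_i$ occurs in $T_{\{a,b\}}$). By the definition of neighborhood types, $v^*$ is not in $Z$, so its total degree in $G$ is at most $2$; combined with its guaranteed neighbors $\{a,b\}$, this forces $N_G(v^*)=\{a,b\}$ exactly. Consider now the planar graph
\[
 H^* \;=\; \bigl(G - (V_i\setminus\{v^*\}) - V_j\bigr) + Z',
\]
which is a subgraph of $(G-V_j)+Z^-$. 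Fix any planar embedding of $H^*$. Since $v^*$ has degree exactly $2$ in $H^*$, the two edges $v^*a$ and $v^*b$ are trivially consecutive in the cyclic order around $v^*$, so the face $f^*$ between them has $a,v^*,b$ appearing consecutively on its boundary and in particular is incident to both $a$ and $b$.

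The plan is then to construct $Z$ by keeping $Z'$ and inserting the missing vertices $(V_i\setminus\{v^*\})\cup V_j$ together with their $G$-edges to $\{a,b\}$ and fresh Hamilton saturating paths on $V_i$ and $V_j$, entirely inside $f^*$. Concretely, I would subdivide $f^*$ into two disjoint subregions $f_i^*$ and $f_j^*$ by drawing an internal Jordan arc from $a$ to $b$ through the interior of $f^*$ on the side opposite to $v^*$; both subregions are topological disks whose boundaries contain $a$ and $b$, while $f_i^*$ additionally has $v^*$ on its boundary. I would then place $V_i\setminus\{v^*\}$ inside $f_i^*$ and $V_j$ inside $f_j^*$, in each case drawing the $T_{\{a,b\}}$-vertices of that cluster as ``wedges'' between $a$ and $b$, the $T_a$-vertices as pendants of $a$, the $T_b$-vertices as pendants of $b$, and the cluster's saturating Hamilton path as a simple curve through the corresponding subregion visiting vertices in the order $T_a \to T_{\{a,b\}} \to T_b$ (with $V_i$'s path passing through $v^*$ at the appropriate position on $f_i^*$'s boundary).

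The main obstacle is confirming that this simultaneous insertion really yields a planar drawing of $G$ augmented with a valid linear saturator of every cluster. Planarity is immediate because each subregion is a topological disk containing no drawn edges in its interior, so the wedges, pendants, and saturating curves can all be placed without crossings; the subtlety is that the chosen traversal order must produce a \emph{single} Hamilton path on each of $V_i$ and $V_j$ (rather than several disconnected fragments), which I would establish by using the freedom of the disk to arrange the pendants and wedges in any desired linear order along the arc of the subregion's boundary containing $a$ or $b$. Combined with the fact that $Z'$ is unchanged for all other clusters, this yields the required solution $Z$ for $(G,\mathcal{V})$.
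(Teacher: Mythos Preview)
Your proof is correct, but it takes a somewhat different route from the paper's. The paper keeps the entire solution for $(G-V_j,\mathcal{V}\setminus\{V_j\})$ intact: since $V_i$ is $\{a,b\}$-isolated, the drawing of $V_i$ together with its saturating curve touches the rest of the drawing only at $a$ and $b$, and hence one can trace along its boundary to obtain a simple $a$--$b$ curve $\alpha$ that avoids everything else. Duplicating $\alpha$ into $|V_j|$ parallel strands then gives room to place the vertices of $V_j$ (each on its own strand, at distance~$\epsilon$ from $a$) and to link them by a short circular arc around $a$. So only $V_j$ is inserted, and nothing previously drawn is touched.

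You instead discard the saturator of $V_i$, strip the drawing down to $H^*$, and use the single surviving vertex $v^*\in V_i\cap T_{\{a,b\}}$ as a certificate that some face of $H^*$ is incident to both $a$ and $b$; you then rebuild both $V_i$ and $V_j$ from scratch inside that face. This is perfectly valid and arguably more explicit about why the insertion succeeds (your ``$T_a\to T_{\{a,b\}}\to T_b$'' ordering is exactly what one needs, and can be realised by placing all vertices on a segment between $a$ and $b$ with the $a$-edges drawn as nested arcs on one side and the $b$-edges as nested arcs on the other). The trade-off is that you do more work---re-embedding $V_i$ in addition to $V_j$---whereas the paper's argument is more local. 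Either way the key point is the same: the presence of a second isolated cluster $V_i$ guarantees a region in the plane with $a$ and $b$ on its boundary into which $V_j$ can be planted.
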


 \begin{proof}
As in the previous reduction rules, one direction follows by yes-instances being closed under vertex deletion. For the non-trivial direction, let us consider the drawing $D$ used in a hypothetical solution for $(G-V_j,\mathcal{V}\setminus V_j)$. The drawing of the curve for $V_j$ in the solution together with the drawing of $V_j$ in $D$ only touches the drawing of $G-V_j$ at $a$ and $b$ and does not intersect any other curve in the solution. Hence, in the solution there must exist a curve $\alpha$ from $a$ to $b$ which does not intersect or touch any other part of $D$ or any other curve in the solution (indeed, such an $\alpha$ may be obtained by following the boundary of the object described in the previous sentence). We duplicate $\alpha$ to construct $|V_j|$-many such pairwise non-crossing $a$-$b$ curves that are very close to each other, and place each vertex in $V_j$ on its own $a$-$b$ curve at an $\epsilon$ distance from $a$ (for a sufficiently small $\epsilon$). Notice that can extend $D$ by adding each such vertex to $a$, $b$, or both by following the curve assigned to each vertex in $V_j$. Moreover, we can connect all of the vertices in $V_j$ with a single curve that only intersects the rest of the drawing at the vertices of $V_j$ by simply drawing a circular segment centered at $a$ with radius $\epsilon$.
\end{proof} 

The above reduction rules are sufficient to deal with all vertices except for those in $T_1$, i.e., the pendant vertices. We handle these with the following observation and two lemmas, which distinguish how large are the clusters containing these vertices.

\begin{observation}
\label{obs:Toneone}
Let $v\in T_1\cap V_i$ be such that $|V_i|=1$. Then $(G,\mathcal{V})$ is a yes-instance if and only if so is $(G-v,\mathcal{V}-v)$.
\end{observation}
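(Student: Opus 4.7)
The plan is to prove this by showing the two directions separately, noting that the forward direction (yes-instance closure under vertex deletion) is immediate, so essentially all the work is in the backward direction, which is itself straightforward.

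For the forward direction, if $(G,\mathcal{V})$ admits a linear saturation $H$ with a planar drawing $D$, then simply deleting $v$ from both $H$ and $D$ yields a linear saturation for $(G-v,\mathcal{V}-v)$: no saturating edges are incident to $v$, because $V_i=\{v\}$ is a singleton cluster and the unique cluster path on $V_i$ has no edges. Hence the remaining saturating edges still form valid cluster paths for every other cluster $V_j\neq V_i$, and the restricted drawing is planar.

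For the backward direction, suppose $(G-v,\mathcal{V}-v)$ admits a linear saturation together with a planar drawing $D'$. Let $u\in Z$ be the unique neighbor of $v$ in $G$ (which exists because $v\in T_1$), and let $f$ be any face of $D'$ incident to $u$. The extension step is to place $v$ as a new vertex in $f$ at an $\epsilon$-distance from $u$ and draw the edge $uv$ as a short arc inside $f$; by choosing $\epsilon$ small enough this arc intersects no other edge or vertex of $D'$, so the resulting drawing remains planar. Because $V_i=\{v\}$, the single-vertex path on $V_i$ is trivially realized without adding any saturating edge, and the saturating edges for all other clusters carry over unchanged from the solution for $(G-v,\mathcal{V}-v)$. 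Thus the extended graph is a linear saturation of $G$ witnessing that $(G,\mathcal{V})$ is a yes-instance.

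There is no real obstacle here: the argument relies only on the fact that every vertex in a planar drawing is incident to at least one face, and that singleton clusters impose no connectivity requirement. This is why the observation suffices even in the variable-embedding setting, where we are free to choose the face $f$ in which to insert $v$.
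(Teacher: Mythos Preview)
Your proof is correct; the paper states this as an observation without proof, and your argument is exactly the natural one the reader is expected to supply.
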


\begin{lemma}
\label{lem:Tonetwo}
Let $Y=\{v\in T_1~|~v\in V_i, |V_i|=2\}$ be the set of all pendant vertices occurring in clusters of size $2$. If $|Y|\geq 6|V(G_2)|+2$, then we can in polynomial time either determine that $(G,\mathcal{V})$ is a no-instance, or identify a non-empty vertex subset $A\subseteq Y$ such that $(G,\mathcal{V})$ is a yes-instance if and only if so is $(G-A,\mathcal{V}\setminus A)$.
\end{lemma}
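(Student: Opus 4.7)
After exhaustively applying \cref{obs:variable-onetype} to ensure no two pendants sharing a neighbor in $Z$ are in the same cluster, I run a planarity-based pigeonhole on an auxiliary edge set indexed by the clusters of size $2$ meeting $Y$. For each such cluster $V_i=\{v,v'\}$ with $v\in Y$ and $a_v\in Z$ the unique $G$-neighbor of $v$, define the auxiliary edge
\[
e_i:=\begin{cases}\{a_v,v'\}&\text{if }v'\in Z\cup T_2,\\ \{a_v,a_{v'}\}&\text{if }v'\in T_1\text{ (in which case also }v'\in Y\text{).}\end{cases}
\]
The crucial observation is that each $v\in Y$ has degree exactly $2$ in any solution graph of $(G,\mathcal{V})$ (one pendant edge from $G$ plus one saturating edge), so suppressing all vertices of $Y$ in that solution graph yields a planar multigraph that contains $G_2$ together with every $e_i$; in particular, the number of distinct unordered pairs realized as some $e_i$ is at most $3|V(G_2)|-6$.

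I then count: every cluster of size $2$ meeting $Y$ contributes at most two vertices to $Y$, so the number $c$ of such clusters satisfies $c\geq |Y|/2\geq 3|V(G_2)|+1$. Distributing $c$ auxiliary edges over at most $3|V(G_2)|-6$ distinct pairs, pigeonhole produces two distinct clusters $V_i,V_j$ with $e_i=e_j=\{u,w\}$. A short case analysis on the location of $v'$ rules out any pairing in which one of the two partners lies in $T_2$ (the $T_2$-endpoint would then lie in both $V_i$ and $V_j$, contradicting cluster-disjointness) and rules out the ``aligned'' pairing of two partner-in-$Z$ clusters (which would force their $Z$-partners to coincide). The surviving configurations are: both clusters in the ``both-endpoints-in-$T_1$'' case; one cluster in each of that case and the ``partner-in-$Z$'' case; or the ``swapped partner-in-$Z$'' configuration $V_i=\{v_1,w_i\}$, $V_j=\{v_2,w_j\}$ with $a_{v_1}=w_j$ and $a_{v_2}=w_i$.

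I then set $A:=V_i\cap Y$, which is non-empty by construction, and establish the equivalence of $(G,\mathcal{V})$ and $(G-A,\mathcal{V}\setminus A)$. The forward direction is immediate since yes-instances are closed under vertex deletion. For the reverse direction, I take a solution $S'$ for $(G-A,\mathcal{V}\setminus A)$ and extend it by drawing the vertices of $A$ and the unique saturating edge of $V_i$ inside a thin $\epsilon$-tube that shadows the cluster path of $V_j$ in $S'$: each pendant of $A$ is placed next to the analogous pendant of $V_j$ attached to the same vertex of $Z$, so its pendant $G$-edge can be drawn parallel to the corresponding edge for $V_j$; the saturating edge of $V_i$ is then routed by tracing the curves of $S'$ that form the cluster path of $V_j$ (together with, in the swapped configuration, the pendant $G$-edge $v_2w_i$). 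Since the tube can be made arbitrarily thin, no new crossings are introduced.

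The main obstacle is the case analysis underlying the pigeonhole step, together with the verification that the tube extension respects the saturation structure in each surviving configuration. The most delicate case is the swapped partner-in-$Z$ configuration, where one must check that the saturating edge $v_1w_i$ of $V_i$ can indeed be drawn along the three-segment route $v_1\!\to\!w_j\!\to\!v_2\!\to\!w_i$ through $S'$ without conflicting with the curves of other clusters incident to $u$ and $w$; this is handled by routing strictly inside the $\epsilon$-tube around $V_j$'s drawn cluster path.
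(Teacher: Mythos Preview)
Your approach is essentially the same as the paper's: both map each size-$2$ cluster meeting $Y$ to an edge of an auxiliary graph on $V(G_2)$, invoke planarity of a hypothetical solution to bound the number of distinct such edges, apply pigeonhole to find two clusters realizing the same edge, and then delete one cluster's pendants while extending any solution for the reduced instance by shadowing the other cluster's path. Your case analysis of the possible configurations (both-in-$T_1$, partner-in-$Z$, swapped partner-in-$Z$) is in fact more careful than the paper's sketch; the only small omission is that you should state explicitly that if no two $e_i$ coincide you output ``no-instance'' (your planarity bound is only justified for yes-instances, so a failure of pigeonhole certifies the negative answer).
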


 \begin{proof}
Let $V_Y$ be the set of all clusters intersecting $Y$; clearly, $|Y|\leq 2|V_Y|$ and $|V_Y|\leq |Y|$. Towards the proof, let us consider the auxiliary bipartite $G_Y$ obtained as follows. The vertex set of $G_Y$ is $V(G_2)$. For each pair of vertices $w_1w_2\in V(G_Y)$, we add $w_1w_2$ to $E(G_Y)$ if and only if there is a cluster $V_x\in V_Y$ with the following property: for each $i\in \{1,2\}$, either $w_i\in V_x$ or $w_i$ is adjacent to some vertex in $Y\cap V_x$. Crucially, if $(G,\mathcal{V})$ is a yes-instance then $G_Y$ must be planar. Indeed, a planar drawing of $G_Y$ can be obtained from a solution for $(G,\mathcal{V})$ by copying the placement of all vertices in the solution, and for each edge $e\in E(G_Y)$ tracing the curve corresponding to that cluster, whereas whenever the curve ends at a pendant vertex we then follow the edge connecting it to its neighbor in $V(G_Y)$.

Hence, if $G_Y$ is not planar then we can immediately reject the instance, and otherwise it contains at most $3|V(G_2)|$ many edges. Since $6|V(G_2)|+2\leq |Y|$, we have that $3|V(G_2)|+1\leq |V_Y|$. By the pigeon-hole principle, there must some edge, say $w_1w_2\in E(G_2)$, which was ``created twice''---more precisely, both $w_1$ and $w_2$ are adjacent to pendants from (or belong to) two different clusters $V_a, V_b\in V_Y$. Let us set $A:=V_a\cap Y$ (in particular, $A$ could either be the whole $V_a$ or one pendant in $V_a$, depending on whether the other vertex in $V_a$ is a vertex of $G_2$ or not).

To conclude the proof, it remains to argue that if $(G-V_q,\mathcal{V}\setminus \{V_q\})$ is a yes-instance, then so is $(G,\mathcal{V})$. Towards this, we observe that a solution for the former can be extended for the latter by drawing at most two pendants in $V_q$ right next to the placement of the two pendants in $V_b$ (but on the same side of the curve for $V_b$). A curve connecting the two vertices in $V_q$ can then be drawn by simply following the curve for $V_b$.
\end{proof} 

\begin{lemma}
\label{lem:Tonethree}
Let $Y=\{v\in T_1~|~v\in V_i, |V_i|\geq 3\}$ be the set of all pendant vertices occurring in clusters of size at least $3$. If $|Y|\geq 15|V(G_2)|+1$, then $(G,\mathcal{V})$ is a no-instance.
\end{lemma}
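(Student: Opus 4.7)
Suppose for contradiction that $(G,\mathcal{V})$ is a yes-instance while $|Y|\ge 15|V(G_2)|+1$. The plan is to construct a bipartite planar auxiliary graph whose edges account for $|Y|$, in the spirit of the auxiliary-graph argument in the proof of \cref{lem:Tonetwo}, and then use the bipartite planar edge bound to derive a contradiction. First I would apply \cref{obs:variable-onetype} exhaustively, so that in every cluster $V_x$ at most one pendant is attached to each anchor. Writing $\mathcal{X}$ for the family of clusters of size at least $3$ containing at least one pendant, $y_x:=|V_x\cap Y|$, and $A_x\subseteq V(G_2)$ for the set of anchors of the pendants of $V_x\in\mathcal{X}$, this reduction yields $y_x\le|A_x|$. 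Because $V_x$ is an independent set, no anchor of a pendant of $V_x$ can lie in $V_x$, so $W_x:=(V_x\cap V(G_2))\cup A_x$ is a disjoint union, and $|W_x|=|V_x\cap V(G_2)|+|A_x|\ge|V_x\cap V(G_2)|+y_x=|V_x|\ge 3$.

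Define the bipartite graph $B$ on $V(G_2)\cup\mathcal{X}$ by inserting the edge $(w,V_x)$ whenever $w\in W_x$; then $|E(B)|=\sum_{V_x\in\mathcal{X}}|W_x|\ge|Y|$. The crux of the proof is to show that $B$ is planar. For this, I would fix a hypothetical solution and, for every $V_x\in\mathcal{X}$, consider the embedded tree $T_x$ on vertex set $V_x\cup A_x$ formed by the saturating path of $V_x$ together with the anchor edges of its pendants. A sufficiently thin closed tubular neighbourhood $N_x$ of $T_x$ is then a topological closed disk with the vertices of $W_x$ on its boundary (after a small deformation). Placing the node $V_x$ of $B$ in the interior of $N_x$ and drawing each edge $(w,V_x)\in E(B)$ as a curve inside $N_x$ from $V_x$ to $w$ yields $|W_x|$ pairwise non-crossing ``spokes'', and since distinct $T_x,T_y$ share vertices only in $V(G_2)$, the neighbourhoods $N_x$ can be chosen to meet pairwise only at those shared $V(G_2)$-points, so curves drawn inside different neighbourhoods do not cross.

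Once $B$ is established as bipartite and planar, the bound $|E(B)|\le 2(|V(G_2)|+|\mathcal{X}|)-4$ combined with $3|\mathcal{X}|\le|E(B)|$ (from $|W_x|\ge 3$) gives $|\mathcal{X}|\le 2|V(G_2)|-4$, and therefore
\[
|Y|\;\le\;|E(B)|\;\le\;6|V(G_2)|-12\;<\;15|V(G_2)|,
\]
contradicting the hypothesis; the degenerate regime $|V(G_2)|+|\mathcal{X}|<3$ is handled separately, as it forces $|Y|=0$. The main obstacle I anticipate is the planarity argument for $B$: two neighbourhoods $N_x$ and $N_y$ may meet at a common $V(G_2)$-vertex $w$ (either $w\in A_x\cap A_y$, or $w\in V_x\cap V(G_2)$ also anchors a pendant of $V_y$), and one has to guarantee $N_x\cap N_y=\{w\}$ rather than a two-dimensional overlap. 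This is possible because all edges of $T_x$ and $T_y$ incident to $w$ are pairwise distinct---being saturating edges of different cluster paths or anchor edges of different pendants---so they occupy disjoint angular sectors around $w$ in the embedding, allowing the ``fingers'' of $N_x$ and $N_y$ to be routed into those disjoint sectors.
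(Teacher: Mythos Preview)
There is one real gap: applying \cref{obs:variable-onetype} \emph{inside} the proof deletes pendants and thereby shrinks $Y$, so the inequality $|E(B)|\geq |Y|$ you derive holds only for the reduced set $Y'$, not for the original one. In fact the lemma as literally stated is false without the hypothesis that no two pendants of the same cluster share an anchor: a star whose $10^6$ leaves form a single cluster is a yes-instance with $|Y|=10^{6}$ and $|V(G_2)|=1$. The paper's own argument carries the same hidden assumption (its claim ``every vertex in $V_Y^{\text{split}}$ has degree at least $3$'' fails in exactly this situation), and in the proof of \cref{thm:varembkernel} the lemma is only invoked after \cref{obs:variable-onetype} has already been applied exhaustively. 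So you should phrase this as a standing assumption on $(G,\mathcal{V})$ rather than as a reduction step performed during the proof.

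Granting that assumption, your argument is correct and takes a genuinely different, simpler route than the paper's. The paper cuts each cluster $V_i$ into $\lfloor |V_i|/3\rfloor$ contiguous chunks along the solution curve and introduces one auxiliary vertex per chunk (each of degree $\geq 3$), arriving at $|Y|\leq 15|V(G_2)|$. You instead use a single auxiliary vertex per cluster, adjacent to all of $W_x=(V_x\cap V(G_2))\cup A_x$; since $|W_x|=|V_x|\geq 3$ under the assumption, the bipartite-planar degree argument gives $|\mathcal{X}|\leq 2|V(G_2)|-4$ and the sharper bound $|Y|\leq 6|V(G_2)|-12$. Your tubular-neighbourhood planarity argument for $B$ is sound, and the point you anticipated---that $N_x$ and $N_y$ can be made to meet only at shared $V(G_2)$-vertices because all $T_x$- and $T_y$-edges incident there are pairwise distinct---is exactly what is needed. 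The paper's chunking buys nothing extra once the one-pendant-per-anchor assumption is in place; it merely trades your topological disk argument for a ``trace the subcurve'' argument of comparable informality.
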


 \begin{proof}
We begin by following the idea of the previous proof, and set $V_Y$ to be the set of all clusters intersecting $Y$; note that unlike in the previous proof, here $V_Y$ could be much smaller than $Y$. Let us now consider a hypothetical solution for $(G,\mathcal{V})$ consisting of a drawing $D$ along with a curve associated with each cluster. We say that such a curve \emph{visits} a vertex $v$ of $G_2$ if it either touches or passes through $v$, or if it touches or passes through a pendant in $Y$ that is adjacent to $v$.

We once again construct an auxiliary graph $G_Y$, but this time the construction will be more involved. In particular, $G_Y$ will be a bipartite graph with the first side of the bipartition being $V(G_2)$. For each cluster $V_i\in V_Y$ such that $\lfloor \frac{|V_i|}{3}\rfloor=p$, we now add $p$ vertices $v_i^1,\dots,v_i^p$ to the second side of the bipartition in $G_Y$, whereas $v_i^1$ will be adjacent to the first $3+(|V_i| \mod 3)$ vertices visited by the curve for $V_i$, and each subsequent $v_i^2,\dots,v_i^p$ will be adjacent to the next $3$ vertices visited by the curve for $V_i$. Observe that while $V_Y$ could be much smaller than $Y$, the number of vertices in the second side of the bipartition of $G_2$---which we hereinafter denote $V_Y^\text{split}$---is lower-bounded by $\frac{|Y|}{5}$, and every vertex in $V_Y^\text{split}$ has degree at least $3$.

Crucially, we can again argue that this new graph $G_Y$ must be planar. Indeed, the curves for the clusters in $V_Y$ are pairwise non-crossing, and this also holds if we split the curves into subcurves (one for each set of vertices adjacent to the second side of the bipartition of $G_2$). A planar drawing of $G_Y$ can then be obtained from the solution by placing each vertex $v_i^j$ at an arbitrary position on its corresponding subcurve and then tracing the subcurve to draw all of the edges to its neighbors in $G_2$ without interfering with any other subcurve.

Since $G_Y$ is planar, by the argument of \cref{lem:linearplanarN} we obtain a guarantee that if $(G,\mathcal{V})$ admits a solution then $|V_Y^\text{split}|\leq 3|V(G_2)|$. We already observed that $\frac{|Y|}{5}\leq |V_Y^\text{split}|$, and so we obtain that either $|Y|\leq 15|V(G_2)|$ or $(G,\mathcal{V})$ must be a no-instance.
\end{proof}

\begin{theorem}
\label{thm:varembkernel}
\CPLS\ has a linear kernel when parameterized \mbox{by the vertex cover number.}
\end{theorem}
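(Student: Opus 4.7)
The plan is to stack the reduction rules developed throughout this subsection into a polynomial-time kernelization that drives the total vertex count down to $O(k)$. I would first compute a vertex cover $X$ of $G$ (either exactly via an FPT algorithm for \textsc{Vertex Cover} or within factor~$2$ by an LP-based approximation; the choice only affects constants), then build the set $Z$ from the opening of~\cref{se:fptAlgo} so that $|Z| = O(k)$ by~\cref{lem:linearplanarN}. Since we are in the variable embedding case,~\cref{obs:variable-degzero} immediately lets me delete every vertex of $T_0$.

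Next, I would reduce $T_2$. After applying~\cref{lem:variable-twotype} exhaustively, every (cluster, neighborhood-type) pair contributes at most three vertices to $T_2$. I would then iterate the following loop: while $|T_2| \geq 60\cdot|Z| + 1$, invoke~\cref{lem:findisolated} to either declare a no-instance or produce two $Q$-isolated clusters, and in the latter case merge one of them away via~\cref{lem:removeisolated}. Each successful iteration strictly decreases the number of clusters intersecting $T_2$ (and hence terminates in polynomial time), leaving $|T_2| \leq 60\cdot|Z| = O(k)$ and therefore $|V(G_2)| = O(k)$.

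It then remains to bound $|T_1|$. I would first use~\cref{obs:Toneone} to strip away every pendant lying in a singleton cluster, so the surviving pendants belong to clusters of size at least~$2$. Splitting these by cluster size, the pendants in size-$2$ clusters are controlled by~\cref{lem:Tonetwo}, which (after polynomially many applications) keeps at most $6|V(G_2)|+1 = O(k)$ of them or rejects, while the pendants in clusters of size at least~$3$ are controlled by~\cref{lem:Tonethree}, which bounds them by $15|V(G_2)| = O(k)$ or rejects. Summing $|Z|+|T_1|+|T_2|$ then yields $|V(G)|=O(k)$, which is the claimed linear kernel, and all reductions above run in polynomial time.

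The main obstacle behind this scheme is already encapsulated in the statements we rely on: one cannot just delete ``extra'' pendants or degree-$2$ vertices, because the routing of the cluster paths in a hypothetical solution may use them crucially, and the combinatorial explosion from many small clusters is not directly controlled by the vertex cover number. The nontrivial content is in~\cref{lem:findisolated,lem:removeisolated,lem:Tonetwo,lem:Tonethree}, which together exploit the planarity of the solution graph via Euler-type bounds on carefully constructed auxiliary planar (multi)graphs to argue that, beyond a linear threshold, at least one pendant or cluster must be topologically redundant. With those safe reductions in hand, the kernelization itself is just a matter of applying the rules in the order above and computing the resulting vertex bound.
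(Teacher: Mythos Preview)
Your plan is essentially the paper's approach, and the organizational difference (looping \cref{lem:findisolated}/\cref{lem:removeisolated} versus the paper's exhaustive application of \cref{lem:removeisolated} followed by a single invocation of \cref{lem:findisolated}) is immaterial.

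There is, however, one small but genuine omission: you never apply \cref{obs:variable-onetype}, which the paper applies alongside \cref{lem:variable-twotype}. This reduction (deleting one of two pendants that share both their cluster and their unique neighbor) is in fact a hidden precondition for \cref{lem:Tonethree}. Without it, a single cluster may contain arbitrarily many pendants all attached to the same vertex $a\in Z$; such an instance can easily be a yes-instance (the cluster path simply fans around~$a$), yet $|Y|$ can exceed $15|V(G_2)|$. Concretely, take $G$ to be a star with center $a$ and $100$ leaves forming one cluster of size $\geq 3$; then $|V(G_2)|=1$ while $|Y|=100$, so \cref{lem:Tonethree} as you invoke it would wrongly reject a yes-instance. (In the proof of \cref{lem:Tonethree}, the auxiliary vertices in $V_Y^{\text{split}}$ would then have degree~$1$ rather than~$\geq 3$, and the planarity counting argument collapses.) The fix is trivial---exhaustively apply \cref{obs:variable-onetype} together with \cref{lem:variable-twotype}, exactly as the paper does---after which each cluster has at most one pendant per vertex of $Z$ and \cref{lem:Tonethree} applies correctly.
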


\begin{proof}
We begin by computing a vertex cover $X$ of size at most $2k$ and then the set $Z$ whereas $|Z|\leq 6k$. \cref{obs:variable-degzero} allows us to remove all isolated vertices from the instance, and we exhaustively apply \cref{obs:variable-onetype} and \cref{lem:variable-twotype}. At this point, our instance contains at most $6k$ vertices in $Z$, and we are guaranteed that each cluster contains at most three vertices in each neighborhood type w.r.t.\ $Z$. 

Next, we exhaustively check for isolated clusters, and if we detect two $Q$-isolated clusters then we delete one as per \cref{lem:removeisolated}. After this step, we apply \cref{lem:findisolated} to either reject the instance or obtain a guarantee that $|T_2|\leq 360k$. We now exhaustively apply \cref{obs:Toneone}, \cref{lem:Tonetwo} and \cref{lem:Tonethree} to either reject the instance, or obtain an upper-bound on $|T_1|$ of $(6|V(G_2)|+2) + (15|V(G_2)|+1)$, where $V(G_2)=Z\cup T_2$. Altogether, this yields the bound $|T_1|\leq 21\cdot (366k) +3=7686k+3$.

The theorem now follows simply by the fact that the vertices in $G$ are partitioned into $Z$, $T_2$ and $T_1$, with the kernel having a size guarantee of at most $8052k+3$.
\end{proof}

As an immediate consequence of \cref{thm:varembkernel} and \cref{thm:exactVarialbe}, we can obtain an improved asymptotic running time for solving \CPLS:

\begin{corollary}
\label{cor:singleexpvc}
\CPLS\ can be solved in time $2^{\bigoh(k)}\cdot n^{\bigoh(1)}$, where $k$ and $n$ are the vertex cover number and the number of vertices of the input graph, respectively. 
\end{corollary}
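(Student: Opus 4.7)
The plan is to simply pipeline the two main ingredients already established in this section together with the exact algorithm from \cref{se:exactAlgoVar}. Given an input instance $(G,\VVV)$ of \CPLS\ with $n$ vertices whose underlying graph has vertex cover number $k$, the first step is to compute a vertex cover of size at most $2k$ in polynomial time (using, e.g., the standard $2$-approximation), and then apply \cref{thm:varembkernel} to produce, in polynomial time, an equivalent instance $(G',\VVV')$ of \CPLS\ whose number of vertices is $\bigoh(k)$.

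Next, I would feed $(G',\VVV')$ into the exact algorithm of \cref{thm:exactVarialbe}. Since \CPLS\ is the special case of {\sc CPLS-Completion*} in which no vertex is marked and every cluster is independent, the algorithm applies directly and runs in time $2^{\bigoh(|V(G')|)} = 2^{\bigoh(k)}$. Combining the polynomial-time kernelization with this single-exponential exact solver yields the overall running time $2^{\bigoh(k)}\cdot n^{\bigoh(1)}$, where the polynomial factor absorbs the cost of computing the vertex cover and executing the kernel reductions on the original $n$-vertex graph.

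Correctness follows from the equivalence preserved by the kernel (\cref{thm:varembkernel}) together with the correctness of the exact algorithm (\cref{thm:exactVarialbe}, i.e., \cref{lem:algoCorrect}). There is essentially no obstacle here beyond verifying that the kernelization indeed outputs a bona fide \CPLS\ instance (so that the exact algorithm may be invoked without modification), which is immediate from the statement of \cref{thm:varembkernel}.
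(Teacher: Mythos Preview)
Your proposal is correct and matches the paper's approach exactly: the paper presents \cref{cor:singleexpvc} as an immediate consequence of combining the linear kernel of \cref{thm:varembkernel} with the exact $2^{\bigoh(n)}$ algorithm of \cref{thm:exactVarialbe}, with no additional proof given.
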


\section{NP-completeness}\label{se:npc}

The \CPLS\ problem was shown \NP-complete even for independent c-graphs with an unbounded number of clusters, whose underlying graph is a subdivision of $3$-connected planar graph~\cite{AngeliniLBFPR17}. It is a simple exercise to show that \CPLS\ is \NP-complete for trees and for forests of stars. To this aim, consider the following transformation applied to an edge $(u,v)$ of a c-graph: (i) subdivide the edge $(u,v)$ with two dummy vertices $u'$ and $v'$, (ii) remove the edge $(u',v')$, and (iii) assign $u'$ and $v'$ to a new cluster. Clearly, this yields an equivalent instance of \CPLS. Applying the above procedure to a connected graph as long as there exist cycles yields a tree, whereas applying this procedure to all edges yields a forest of stars. 

We establish the \NP-completeness of \CPLS\ and \CPLSF\ when restricted to instances with up to three clusters. The proof is based on a polynomial-time Turing reduction from the \NP-complete  {\sc Bipartite 2-Page Book Embedding} problem~\cite{AngeliniLBFP21}.

In the following, we show the \NP-completeness of the \CPLS\ problem for instances with up to three clusters, even if the underlying graph has a prescribed embedding. To this aim, we will employ a polynomial-time Turing reduction from the \NP-complete  {\sc Bipartite 2-Page Book Embedding}  (for short, {\sc B2PBE}) problem~\cite{AngeliniLBFP21}.
A \emph{polynomial-time Turing reduction} (or \emph{Cook reduction}) from a problem $\cal A$ to a problem $\cal B$, denoted $\mathcal{A} \leq^P_T \mathcal{B}$, is an algorithm that solves $\cal A$ using a polynomial number of calls to a subroutine for $\cal B$, and polynomial-time additional extra time. 

Let $G=(U_b,U_r,E)$ be a bipartite planar graph, where the vertices in $U_b$ and $U_r$ are called \emph{black} and \emph{red}, respectively. 
A \emph{bipartite $2$-page book embedding} of $G$ is a planar embedding $\cal E$ of $G$ in which the vertices are placed along a Jordan curve $\ell_{\cal E}$, called \emph{spine} of ${\cal E}$, the black vertices appear consecutively along $\ell_{\cal E}$ (and, hence, the red vertices do as well), and each edge lies entirely in one of the two regions of the plane, called \emph{pages}, bounded by $\ell_{\cal E}$. 
The {\sc B2PBE} problem asks whether a given bipartite graph admits a bipartite $2$-page book embedding. 

Let $G^+$ be a planar supergraph of $G$ whose vertex set is $U_b \cup U_r$ and whose edge set is $E \cup E(\mathcal{\sigma})$, where $\sigma$ is a cycle that traverses all the vertices of $U_b$ (and, thus, also of $U_r$) consecutively. 
A cycle $\sigma$ exhibiting the above properties is a \emph{connector} of $G$, and its edges are called \emph{connecting edges}\footnote{In~\cite{AngeliniLBFP21}, connectors are called {\em saturators} and connector edges are called {\em saturating edges}. To avoid confusion with the the standard notions of {\em saturation} and {\em saturating edges} used in the context of {\sc C-Planarity} that we adopt in this paper (see \cref{se:preliminaries}), we have decided to rename such concepts within the extend of this section.}. 
A connecting edge is \emph{black} if it connects two black vertices, and it is \emph{red} if it connects two red vertices.  
By definition, $\sigma$ is a Hamiltonian cycle of $G^+$ consists of four paths: A path $\pi_b=(b_s,\dots,b_t)$ passing through all the vertices of $U_b$ (consisting of black connecting edges), a path $\pi_r=(r_s,\dots,r_t)$ passing through all the vertices of $U_r$ (consisting of red connecting edges), and two edges, namely, the edge $(b_t,r_s)$ and the edge $(r_t,b_s)$. 
The end-vertices of $\pi_b$ and the end-vertices of $\pi_r$ are the \emph{black} and the \emph{red end-vertices} of $\sigma$, respectively. 
We will exploit the following characterization for the {\sc B2PBE} problem.

\begin{longlemma}[\cite{AngeliniLBFP21}]\label{le:characterization-btpbe}
	A bipartite graph $G=(U_b,U_r,E)$ admits a bipartite $2$-page book embedding if and only if it admits a connector.
\end{longlemma}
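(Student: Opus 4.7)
The plan is to establish the equivalence by directly translating between the two structures, exploiting the Jordan curve theorem as the bridge: in both directions the spine and the connector play interchangeable topological roles as the curve that separates the plane into the two pages.

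For the forward direction (bipartite 2-page book embedding $\Rightarrow$ connector), I would start from a bipartite 2-page book embedding $\cal E$ of $G$ with spine $\ell_{\cal E}$. The spine induces a linear order on $V(G)$ in which the black vertices $b_1,\dots,b_{|U_b|}$ appear consecutively, followed by the red vertices $r_1,\dots,r_{|U_r|}$. I would define the connector $\sigma$ as the Hamiltonian cycle consisting of the black path $(b_1,\dots,b_{|U_b|})$, the red path $(r_1,\dots,r_{|U_r|})$, and the two closing edges $(b_{|U_b|},r_1)$ and $(r_{|U_r|},b_1)$. To certify that $G^+=G\cup E(\sigma)$ is planar, I would draw each new connector edge along the corresponding subarc of $\ell_{\cal E}$, slightly perturbed into one of the two pages. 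Since every edge of $G$ is strictly contained in a single page and never touches the spine except at its endpoints, these new edges can be added without introducing any crossings.

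For the reverse direction (connector $\Rightarrow$ bipartite 2-page book embedding), I would fix an arbitrary planar embedding $\Gamma$ of $G^+$. The image of the connector $\sigma$ in $\Gamma$ is a simple closed curve, which by the Jordan curve theorem partitions the plane into two open regions $R_{\mathrm{in}}$ and $R_{\mathrm{out}}$. Each edge of $E$ lies entirely in one of these two regions, since crossing $\sigma$ at an interior point would contradict planarity. I would then apply a homeomorphism of the sphere sending the image of $\sigma$ to a standard Jordan curve, which becomes the spine $\ell_{\cal E}$ of an embedding $\cal E$; the two regions $R_{\mathrm{in}}$ and $R_{\mathrm{out}}$ become the two pages. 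By the definition of a connector, the cyclic order of vertices along $\sigma$ places $U_b$ consecutively and $U_r$ consecutively, so $\cal E$ is a valid bipartite 2-page book embedding.

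The main (and only) subtle point is the second direction: ensuring that the cyclic order of vertices along the planar image of $\sigma$ can actually be realized by straightening $\sigma$ into a spine without losing the page-confinement of the edges. This follows from standard facts about homeomorphisms of the sphere, but it deserves a careful statement since the spine is an abstract curve that must correspond to the topological separator induced by $\sigma$. The combinatorics in both directions is otherwise straightforward, as the definition of connector has been specifically tailored to match the structural constraints of a bipartite 2-page book embedding.
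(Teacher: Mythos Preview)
The paper does not actually prove this lemma; it is stated with a citation to~\cite{AngeliniLBFP21} and used as a black box. Your self-contained argument is correct and is precisely the standard proof of this folklore-style equivalence: in one direction the spine itself serves as a planar drawing of the connector, and in the other a planar embedding of $G^+$ turns the Hamiltonian cycle $\sigma$ into a Jordan curve whose two sides become the two pages, with the consecutiveness of $U_b$ and $U_r$ along $\sigma$ guaranteeing the bipartite condition.
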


We will exploit a specialized version of the {\sc B2PBE} problem, which we call {\sc Bipartite 2-Page Book Embedding with Prescribed End-vertices} (for short, {\sc B2PBE-PE}), defined as follows. Given a bipartite planar graph  $G=(U_b,U_r,E)$ and an ordered quadruple $(b_s,b_t,r_s,r_t) \in U_b \times U_b \times U_r \times U_r$, the {\sc B2PBE-PE} problem asks whether $G$ admits a bipartite $2$-page book embedding $\cal E$ in which the vertices $b_s$, $b_t$, $r_s$, and $r_t$ appear in this counter-clockwise order along the spine of $\cal E$. 
By \cref{le:characterization-btpbe} and the definition of connector, we immediately get the following.

\begin{longlemma}\label{le:characterization-btpbe-pe}
	A bipartite graph $G=(U_b,U_r,E)$ with distinct vertices $b_s, b_t \in U_b$ and $r_s,r_t \in U_r$ admits a bipartite $2$-page book embedding ${\cal E}$ in which the vertices $b_s$, $b_t$, $r_s$, and $r_t$ appear in this counter-clockwise order along the spine of ${\cal E}$ if and only if it admits a connector $\sigma$ in which 
 $b_t$ immediately precedes $r_s$ counter-clockwise along $\sigma$ and
 $r_t$ immediately precedes $b_s$ counter-clockwise along $\sigma$.
\end{longlemma}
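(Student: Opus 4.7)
The plan is to prove both implications by reducing to Lemma \ref{le:characterization-btpbe} and carefully tracking the cyclic order of vertices on the spine, which corresponds to the cyclic traversal order of a connector under the equivalence established there.

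For the reverse direction, starting from a connector $\sigma$ satisfying the prescribed immediate-precedence conditions, I would apply Lemma \ref{le:characterization-btpbe} to obtain a bipartite $2$-page book embedding $\cal E$ of $G$. In the underlying construction, the vertices of $G$ are placed along the spine $\ell_{\cal E}$ in the cyclic order induced by a traversal of $\sigma$, and the remaining edges of $G$ are distributed between the two pages. Because $\sigma$ uses the single black path $\pi_b = (b_s, \ldots, b_t)$, the single red path $\pi_r = (r_s, \ldots, r_t)$, and the bichromatic edges $(b_t, r_s)$ and $(r_t, b_s)$, each color class occupies a consecutive arc of $\ell_{\cal E}$, with $b_s, b_t$ at the endpoints of the black arc, $r_s, r_t$ at the endpoints of the red arc, $b_t$ immediately followed by $r_s$, and $r_t$ immediately followed by $b_s$ in the counter-clockwise direction. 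Hence $b_s, b_t, r_s, r_t$ appear in the prescribed counter-clockwise order along $\ell_{\cal E}$.

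For the forward direction, starting from a bipartite $2$-page book embedding $\cal E$ with the prescribed counter-clockwise order on the spine, I would construct a connector $\sigma$ of $G$ whose monochromatic-path end-vertices are precisely $b_s, b_t, r_s, r_t$. The key step is to exhibit a canonical book embedding $\cal E'$ of $G$ in which $b_s$ is the first black vertex counter-clockwise on the spine, $b_t$ the last, $r_s$ the first red vertex, and $r_t$ the last. Such $\cal E'$ can be obtained from $\cal E$ by iteratively repositioning any non-prescribed black vertex currently outside the counter-clockwise arc from $b_s$ to $b_t$ on the spine into that arc (and symmetrically for red vertices), re-routing its incident bichromatic edges through the appropriate page. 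Once $\cal E'$ is in hand, its spine-hugging Hamiltonian cycle is a connector of $G$ by Lemma \ref{le:characterization-btpbe}, and by construction its monochromatic paths have the prescribed end-vertices, so $b_t$ immediately precedes $r_s$ and $r_t$ immediately precedes $b_s$ counter-clockwise along $\sigma$.

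The main obstacle is justifying the canonical-form reduction in the forward direction. Concretely, one must show that non-prescribed vertices can always be shifted into the desired arcs while preserving planarity. I expect this to follow from a planarity-preserving sweeping argument that exploits the bipartite structure of $G$ together with the prescribed cyclic order: any non-prescribed black vertex lying outside the arc from $b_s$ to $b_t$ must sit between $r_t$ and $b_s$ or between $b_t$ and $r_s$ on the spine, so its incident edges go only to red vertices and can be locally redrawn as the vertex is migrated into the black arc between $b_s$ and $b_t$, using the flexibility afforded by the two pages.
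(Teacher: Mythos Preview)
The paper gives no explicit proof: it simply declares the lemma to follow ``immediately'' from \cref{le:characterization-btpbe} and the definition of connector. The intended argument is that the correspondence between bipartite $2$-page book embeddings and connectors is order-preserving at the level of cyclic sequences---the spine cycle \emph{is} a connector, and conversely a connector \emph{is} the spine of a book embedding. Your reverse direction is exactly this, and is correct.

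Your forward direction, however, is considerably more elaborate than what the paper has in mind, and the ``canonical-form reduction'' you propose has a real gap (which you yourself flag). Repositioning a black vertex along the spine and re-routing its bichromatic edges between pages is not a local operation that obviously preserves planarity: moving a vertex across other black vertices changes the nesting structure of its incident edges relative to all other edges in both pages simultaneously, and there is no evident reason this can always be resolved. In fact, under the reading the paper clearly intends---signalled by the problem name \textsc{Prescribed End-vertices} and the later phrase ``end-vertices $b_s,b_t,r_s,r_t$''---the four prescribed vertices are meant to be the endpoints of the monochromatic arcs on the spine, not merely four vertices appearing in the stated cyclic order. With that reading the forward direction is again immediate: the spine cycle of $\cal E$ is already the desired connector. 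The literal phrasing ``appear in this counter-clockwise order'' in the statement of \textsc{B2PBE-PE} is slightly weaker than this, which is a minor imprecision in the paper rather than something your repositioning argument is needed to fix; and for the overall Turing reduction from \textsc{B2PBE} to \textsc{CPLS-3} only the quadruples that actually arise as transition vertices of some connector matter, so the imprecision is harmless there.
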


For convenience, the above statements can be essentially summarized as follows:

\begin{lemma}\label{le:characterization-btpbe-peFROMSHORT}
	A bipartite graph $G=(U_b,U_r,E)$ with distinct vertices $b_s, b_t \in U_b$ and $r_s,r_t \in U_r$ is a positive instance of {\sc B2PBE-PE} iff it admits a connector $\sigma$ in which 
	$b_t$  and
	$r_t$ immediately precede $r_s$ and
	 $b_s$, respectively, counter-clockwise along $\sigma$.
\end{lemma}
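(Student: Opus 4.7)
The plan is to derive this statement as an immediate consequence of \cref{le:characterization-btpbe-pe}. In essence, the current lemma is a syntactic repackaging of \cref{le:characterization-btpbe-pe} using the definition of {\sc B2PBE-PE} to collapse the two conditions on $b_t, r_s$ and on $r_t, b_s$ into a single conjunctive phrase. I would therefore structure the argument as a one-line invocation, together with a verification that the three conditions being compared (positive instance of {\sc B2PBE-PE}, the cyclic-order condition on the spine, and the precedence condition on the connector) are matched correctly.

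First, I would unfold the definition: by definition of {\sc B2PBE-PE}, the graph $G$ with the prescribed quadruple $(b_s,b_t,r_s,r_t)$ is a positive instance if and only if $G$ admits a bipartite $2$-page book embedding $\mathcal{E}$ whose spine $\ell_{\mathcal{E}}$ encounters $b_s, b_t, r_s, r_t$ in this counter-clockwise order. By \cref{le:characterization-btpbe-pe}, this holds if and only if $G$ admits a connector $\sigma$ in which $b_t$ immediately precedes $r_s$ counter-clockwise along $\sigma$ and $r_t$ immediately precedes $b_s$ counter-clockwise along $\sigma$. This is precisely the condition asserted in the present lemma, so the equivalence follows by transitivity.

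The only subtlety worth checking is the correspondence between the direction along the spine in \cref{le:characterization-btpbe-pe} and the direction along the connector cycle $\sigma$ in the statement here. Since a connector $\sigma$ of $G$ is a Hamiltonian cycle that, according to the definition given earlier in the section, consists of the black path $\pi_b$ from $b_s$ to $b_t$, the red path $\pi_r$ from $r_s$ to $r_t$, and the two transition edges $(b_t,r_s)$ and $(r_t,b_s)$, the counter-clockwise traversal of $\sigma$ naturally corresponds to the counter-clockwise traversal of $\ell_{\mathcal{E}}$ in the associated book embedding. Thus no extra work is needed, and there is no real obstacle: the lemma is stated here only as a convenient compact form that can be cited in the forthcoming hardness reduction from {\sc B2PBE-PE}.
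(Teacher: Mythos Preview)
Your proposal is correct and matches the paper's own treatment: the paper introduces this lemma with the phrase ``the above statements can be essentially summarized as follows'' and gives no separate proof, so your derivation of it as an immediate repackaging of \cref{le:characterization-btpbe-pe} together with the definition of {\sc B2PBE-PE} is exactly what is intended.
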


Next, we prove the main result of the section, which rules out the existence of fixed-parameter tractable algorithms for the {\sc Clustered Planarity with Linear Saturators} problem parameterized by the number of clusters, unless $\P = \NP$. 

\begin{theorem}\label{th:hard}
Both \CPLS\ and \CPLSF\ are \NP-complete, even when restricted to instances with at most three clusters.
\end{theorem}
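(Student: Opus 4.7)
The plan is to show that both problems lie in \NP\ by a standard argument—a certificate consists of the set of saturating edges and, for \CPLS, a combinatorial embedding of the augmented graph, with planarity, the cluster-path property, and (for \CPLSF) embedding-consistency all checkable in polynomial time—and then establish hardness via a polynomial-time Turing reduction from {\sc B2PBE}. The starting observation is \cref{le:characterization-btpbe-peFROMSHORT}: a bipartite graph $G=(U_b,U_r,E)$ is a yes-instance of {\sc B2PBE} iff some quadruple $(b_s,b_t,r_s,r_t)$ of pairwise distinct vertices makes it a yes-instance of {\sc B2PBE-PE}. The Turing reduction therefore iterates over all $\mathcal{O}(n^4)$ such quadruples and, for each, invokes the \CPLS\ (resp.~\CPLSF) oracle on a 3-cluster instance built as described below, accepting iff some query does. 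It thus suffices to exhibit, for each fixed quadruple, a polynomial-time many-one reduction from {\sc B2PBE-PE} to our 3-cluster variant.

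Given an instance $(G,b_s,b_t,r_s,r_t)$ of {\sc B2PBE-PE}, I would construct a c-graph $(G',\mathcal{V})$ as follows. Let $G'$ extend $G$ by two new vertices $x,y$ and four new edges $xb_t,\, xr_s,\, yr_t,\, yb_s$; set $V_1=U_b$, $V_2=U_r$, and $V_3=\{x,y\}$. Each $V_i$ is independent in $G'$: $G$ is bipartite, and $x,y$ are attached only to cross-class vertices of the gadget. Any linear saturator must then produce a Hamilton path $\pi_b$ on $U_b$, a Hamilton path $\pi_r$ on $U_r$, and—because $|V_3|=2$—the single edge $xy$; write $H$ for the resulting saturated graph. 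For \CPLSF, I would rigidify the gadget so that the relevant embedding is essentially unique: for instance, by triangulating the ``outside'' of the spine face with a handful of additional singleton-cluster vertices, so that the prescribed embedding of $G'$ can be computed directly, or alternatively by iterating over the polynomially many candidate embeddings of the gadget portion.

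The core correctness claim is that $H$ is planar iff $G$ admits a connector $\sigma=\pi_b+(b_t,r_s)+\pi_r+(r_t,b_s)$ with the prescribed endpoints. The easy direction starts from a planar embedding of $G\cup \sigma$ with $\sigma$ bounding a face, reinserts $x,y$ as subdivisions of the two transition edges, and draws $xy$ inside that face. For the harder direction, I would delete $xy$ from $H$, making $x$ and $y$ both of degree two, and suppress them to obtain the graph $G\cup \pi_b\cup \pi_r\cup\{(b_t,r_s),(r_t,b_s)\}$; planarity of $H$ together with $xy$ being drawable means the two transition edges share a common face $F$ in this suppressed graph. Since the two transition edges are disjoint and non-crossing, the cyclic order of $b_s,b_t,r_s,r_t$ along $\partial F$ must be the spine order $(b_s,b_t,r_s,r_t)$. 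The main obstacle, and the step I expect to require the most care, is arguing that this structural constraint together with the rigidity of the tree-like gadget and the cluster-induced Hamilton-path condition forces the endpoints of $\pi_b$ to be exactly $\{b_s,b_t\}$ and those of $\pi_r$ to be exactly $\{r_s,r_t\}$—otherwise the two transition edges would fail to close $\pi_b\cup \pi_r$ into a Hamilton cycle. I would handle this by a local topological analysis around $F$, using that any ``wrong-endpoint'' Hamilton path would either detach part of its color class from $\partial F$ or force a crossing with the tree gadget. Once this step is settled, $G\cup\sigma$ is planar with $\sigma$ as a face, and the equivalence with {\sc B2PBE-PE} (and hence with {\sc B2PBE}) follows, yielding \NP-hardness for both \CPLS\ and \CPLSF\ restricted to at most three clusters; combined with \NP-membership, this proves the theorem and, via the equivalence of \CPLS\ and {\sc Clique Planarity}, resolves the open problem from~\cite{AngeliniL20}.
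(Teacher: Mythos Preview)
Your high-level plan (NP-membership plus Turing reduction from \textsc{B2PBE} via \textsc{B2PBE-PE}) matches the paper, but the concrete many-one reduction is broken in both directions.

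\textbf{Forward direction.} You write that the ``easy'' direction starts from a planar embedding of $G\cup\sigma$ with $\sigma$ bounding a face, and draws $xy$ inside that face. But $\sigma$ does \emph{not} bound a face in a bipartite $2$-page book embedding unless one page is empty. After you subdivide the two transition edges of $\sigma$ by $x$ and $y$, the arcs of the spine from $x$ to $y$ are exactly the black arc and the red arc; since $G$ is bipartite, \emph{every} edge of $G$ drawn inside $\sigma$ has one endpoint on each arc and therefore separates $x$ from $y$ inside, and symmetrically outside. Hence $xy$ is drawable only if one page is empty, so your reduction produces a NO-instance of \CPLS\ for essentially all non-outerplanar inputs of \textsc{B2PBE-PE}. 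The paper circumvents exactly this obstruction: it subdivides every edge of $G$ with a vertex placed in the third cluster $V_c$, and the $V_c$-saturator then threads through all subdivision vertices to connect the cyan vertices of the two gadgets across the bulk of $G$.

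\textbf{Backward direction.} The gap you yourself flag---that planarity together with the two-vertex gadget forces $\pi_b$ to end at $b_s,b_t$ and $\pi_r$ at $r_s,r_t$---is real, and the ``local topological analysis around $F$'' you sketch cannot close it: your gadget is a tree on six vertices and imposes no rigidity on how the Hamilton paths sit relative to $b_s,b_t,r_s,r_t$. The paper resolves this with two $3$-connected gadgets $\mathcal Q$ and $\mathcal P$ whose unique embeddings trap extra black and red vertices ($b^*,r^*,b^\diamond,r^\diamond$) inside their outer cycles; the only same-cluster neighbour such a trapped vertex can see is the corresponding boundary vertex $b_t,r_s,b_s,r_t$, which forces $(b^*,b_t),(r^*,r_s),(b^\diamond,b_s),(r^\diamond,r_t)$ into the saturator and hence pins the endpoints of $\pi_b$ and $\pi_r$ exactly as required. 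Without a mechanism of this kind, the equivalence with \textsc{B2PBE-PE} does not hold.
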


\begin{proof}
The membership in \NP{} is obvious. In fact, given an independent c-graph $\mathcal{C}=(G,\{V_1,\dots,V_k\})$, it is possible to verify in (deterministic) polynomial time
whether a given collection $Z_1,\dots,Z_k$ of sets of non-edges of $G$, one for each of the $k$ clusters of $\cal C$, is such that $G \cup \bigcup^k_{i=1}Z_i$ is planar (possibly, for {\CPLSF}, respecting a fixed embedding of $G$) and, for each $i \in [k]$, the graph $G_i = (V_i,Z_i)$ is a path.

In the remainder, we show the \NP-hardness of the {\sc CPLS} problem, when the input independent c-graph contains only three clusters (hereafter denoted, for short, as the {\sc CPLS-3} problem). As discussed at the end of the proof, the same reduction shows that {\CPLSF} is \NP-hard for instances with at most three clusters.
Our reduction is from the \textsc{B2PBE} problem, which is known to be \NP-complete~\cite{AngeliniLBFP21}. In fact, it is shown in~\cite{AngeliniLBFP21} that the \NP-hardness holds even for {\em $2$-connected} graphs with a {\em fixed embedding} (for details, see also~\cite{AngeliniLBFP21}). The \NP-hardness of \textsc{B2PBE} immediately implies the \NP-hardness of the {\sc B2PBE-PE} problem for the same restricted instances. In fact, given a bipartite graph $G$, the {\sc B2PBE} problem for~$G$ can be solved using a polynomial number of calls to a sub-procedure solving {\sc B2PBE-PE} for all possible pairs of end-vertices, and in addition to that spends only polynomial time. That is, {\sc B2PBE} $\leq^P_T$ {\sc B2PBE-PE}. In \cref{le:reduction} below, we show a polynomial-time many-one reduction, denoted by $\leq^P_m$, from {\sc B2PBE-PE} to {\sc CPLS-3}. This provides, in combination with the polynomial-time Turing reduction from {\sc B2PBE} to {\sc B2PBE-PE}, a polynomial-time Turing reduction from {\sc B2PBE} to {\sc CPLS-3}, which proves the statement.

In our reduction from {\sc B2PBE-PE} to {\sc CPLS-3}, we will use the following gadgets. The \emph{originating gadget} (see \cref{fig:gadget-Q}) is an independent c-graph $\mathcal{Q} = (G_{\mathcal{Q}},\mathcal{V}_{\mathcal{Q}})$, where $\mathcal{V}_{\mathcal{Q}} = \{V^\mathcal{Q}_r, V^\mathcal{Q}_b, V^\mathcal{Q}_c\}$ consists of three clusters, defined as follows. The underlying graph $G_{\mathcal{Q}}$ of $\mathcal{Q}$ is obtained from the union of the 4-cycle $(b'_t,\beta,r'_s,\psi)$ (called \emph{outer cycle} of $\mathcal Q$), of the 4-cycle $(\alpha,r^*,\omega,b^*)$ (called \emph{inner cycle} of $\mathcal Q$), and of the edges $(b'_t,\alpha)$, $(\beta,r^*)$, $(r'_s,\omega)$, $(\psi, b^*)$, and $(r^*,b^*)$. The clusters of $\mathcal{Q}$ are $V^\mathcal{Q}_b = \{b'_t,b^*\}$, $V^\mathcal{Q}_r = \{r'_s,r^*\}$, and $V^\mathcal{Q}_c = \{\alpha,\beta,\psi,\omega\}$.
The \emph{traversing gadget}  (see \cref{fig:gadget-P})  is an independent c-graph $\mathcal{P} = (G_{\mathcal{P}}, \mathcal{V}_{\mathcal{P}})$, where $\mathcal{V}_{\mathcal{P}} = \{V^{\mathcal{P}}_b, V^{\mathcal{P}}_r, V^{\mathcal{P}}_c\})$ has three clusters, obtained from the originating gadget $\cal Q$ by removing the edge $(r^*,b^*)$ and by renaming the vertices 
$\beta$, $r'_s$, $\psi$, $b'_t$, $r^*$, $\omega$, $b^*$, and $\alpha$ as 
$\mu$, $r'_t$, $\nu$, $b'_s$, $r^\diamond$, $\chi$, $b^\diamond$, and $\lambda$, respectively.
The clusters of $\mathcal{Q}$ are $V^\mathcal{P}_b = \{b'_s,b^\diamond\}$, $V^\mathcal{P}_r = \{r'_t,r^\diamond\}$, and $V^\mathcal{P}_c = \{\lambda,\mu,\nu,\chi\}$.
In particular, $(b'_s,\mu,r'_t,\nu)$ is the \emph{outer cycle} of $\mathcal P$ and $(\lambda,r^\diamond,\chi,b^\diamond)$ is the \emph{inner cycle} of $\mathcal P$. Observe that, the underlying graphs of $\mathcal P$ and $\mathcal Q$ are $3$-connected. An illustration of the gadgets defined above is provided in \cref{fig:Q-plus}.

Let $W$ be a graph with $4$ special vertices $b_s$, $b_t$, $r_b$, and $r_t$. Let $W^*$ be the graph obtained by taking the union of $W$, $G_{\mathcal{Q}}$, and $G_{\mathcal{P}}$ and by identifying the vertices $b_t$ and $r_s$ of $W$ with the vertices $b'_t$ and $r'_s$ of $G_{\mathcal Q}$, respectively, and by identifying the vertices $b_s$ and $r_t$ of $W$ with the vertices $b'_s$ and $r'_t$ of $G_{\mathcal{P}}$, respectively. We say that $W^*$ is a \emph{$\mathcal{PQ}$-merge} of $W$.

\begin{figure}[tb!]
\centering
    \begin{minipage}{0.24\textwidth}
    \centering
        \includegraphics[width=\textwidth,page=1]{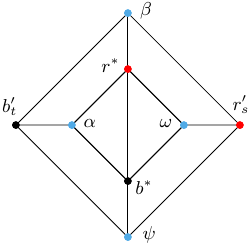}
        \subcaption{Gadget $\cal Q$}
        \label{fig:gadget-Q}
    \end{minipage}
    \begin{minipage}{0.24\textwidth}
    \centering
        \includegraphics[width=\textwidth,page=2]{figures/gadgets.pdf}
        \subcaption{Saturation of $\cal Q$}
        \label{fig:gadget-Q-saturation}
    \end{minipage}
        \begin{minipage}{0.24\textwidth}
    \centering
        \includegraphics[width=\textwidth,page=3]{figures/gadgets.pdf}
        \subcaption{Gadget $\cal P$}
        \label{fig:gadget-P}
    \end{minipage}
    \begin{minipage}{0.24\textwidth}
    \centering
        \includegraphics[width=\textwidth,page=4]{figures/gadgets.pdf}
        \subcaption{Saturation of $\cal P$}
        \label{fig:gadget-P-saturation}
    \end{minipage}
    \caption{Illustrations for the gadget $\mathcal{Q}$ and $\mathcal{P}$.}
    \label{fig:Q-plus}
\end{figure}

We are now ready to prove the following.

\begin{longclaim}\label{le:reduction}
{\sc B2PBE-PE} $\leq^P_m$ {\sc CPLS-3}.
\end{longclaim}

\begin{proof}
Given a connected bipartite planar graph $G=(U_b,U_r,E)$ and an ordered quadruple $(b_s,b_t,r_s,r_t) \in U_b \times U_b \times U_r \times U_r$, we construct an independent c-graph $\mathcal{C}=(H,\{V_b,V_r,V_c\})$  with three clusters as follows (see also \cref{fig:Q-plus-claim}). 
\begin{enumerate}
\item First, we initialize the underlying graph $H$ of $\mathcal{C}$ to be
the $\mathcal{PQ}$-merge of $G$ (which is well-defined, since $G$ contains the $4$ distinct vertices $b_s$, $b_t$, $r_s$, and $r_t$.). Also, we initialize $V_b = U_b \cup \{b^*,b^\diamond\}$, $V_r = U_r \cup \{r^*,r^\diamond\}$, and $V_c = \{\alpha,\beta,\lambda,\mu,\nu,\chi,\psi,\omega\}$. 

\item Second, we subdivide, in $H$, each edge $e$ of $E(G)$ with a dummy vertex $c_e$ and assign the vertex $c_e$ to $V_c$. We say that $c_e$ \emph{stems from} $e$.
 
\end{enumerate}

Clearly, the above reduction can be carried out in polynomial time, in fact, in linear time, in the size of $G$.
In what follows, we show that $\mathcal{C}$ is a positive instance of {\sc CPLS-3}
if and only if $(G,b_s,b_t,r_s,r_t)$ is a positive instance of {\sc B2PBE-PE}.

$(\Rightarrow)$ Suppose first that $G$ admits a bipartite $2$-page book embedding $\cal E$ in which the end-vertices $b_s$, $b_t$, $r_s$, and $r_t$ appear in this counter-clockwise order along the spine of $\cal E$ (see \cref{fig:bip2pageBook}). Then, by \cref{le:characterization-btpbe-pe}, $G$ admits a connector $\sigma$ in which 
 $b_t$ immediately precedes $r_s$ counter-clockwise along $\sigma$ and
 $r_t$ immediately precedes $b_s$ counter-clockwise along $\sigma$.
Recall that $\sigma$ consists of four paths: 
A path $\pi_b=(b_s,\dots,b_t)$ passing through all the vertices in $U_b$ (consisting of black connecting edges), a path $\pi_r=(r_s,\dots,r_t)$ passing through all the vertices in $U_r$ (consisting of red connecting edges), the edge $(b_t,r_s)$, and the edge $(r_t,b_s)$.  
Let $\mathcal{E}_\sigma$ be any planar embedding of $G_\sigma = (V(G),E(G) \cup E(\sigma))$, which exists by the definition of connector (see \cref{fig:connector}). All the faces of $\mathcal{E}_\sigma$ are incident to a single subpath of $\pi_b$ and to a single subpath of $\pi_r$.
Moreover,  $\mathcal{E}_\sigma$ has the following properties:
\begin{enumerate}[Pi]
\item \label{p:init}There exists a unique face $f'_{in}$ (resp. $f'_{out}$) of $\mathcal E_\sigma$ that lies in the interior of $\sigma$ (resp. in the exterior of $\sigma$) and is incident to the edge $(b_t,r_s)$. Similarly, there exists a unique face $f''_{in}$ (resp. $f''_{out}$) of $\mathcal E_\sigma$ that lies in the interior of $\sigma$ (resp. in the exterior of $\sigma$) and is incident to the edge $(r_t,b_s)$. Moreover, each of these faces is incident to a unique edge connecting a black and a red vertex; we let $e'_{in}$, $e'_{out}$, $e''_{in}$, and $e''_{out}$ be these edges incident to $f'_{in}$, $f'_{out}$, $f''_{in}$, and $f''_{out}$, respectively.\footnote{It may happen that $f'_{in} = f''_{in}$, in which case the edges $e'_{in}$ and $e''_{in}$ do not exist. Similarly if $f'_{out} = f''_{out}$.} We call these faces \emph{extremal}. 
\item \label{p:traverse} All other faces of $\mathcal E_\sigma$ are incident to exactly two edges connecting a black and a red vertex. We call such faces \emph{transversal}. 
\end{enumerate}

We show that $\cal C$ is a positive instance of \textsc{CPLS-3}. In particular, we 
shall compute sets $Z_b$, $Z_r$, and $Z_c$ of non-edges of $H$ such that
$H^* = (V(H),E(H) \cup Z_b \cup Z_r \cup Z_c)$ is planar and such 
that $H[Z_b]$, $H[Z_r]$, and $H[Z_c]$ are paths connecting the vertices of $V_b$, of $V_r$, and of $V_c$, respectively. To this aim, we show how to construct $H^*$ and simultaneously a planar embedding $\mathcal E_{H^*}$ of it, starting from $\mathcal E_\sigma$; refer to \cref{fig:saturator}.

\begin{figure}[hbt!]
\centering
    \begin{minipage}{\textwidth}
    \centering
        \includegraphics[width=.543\textwidth,page=8]{figures/gadgets.pdf}
        \subcaption{A bipartite $2$-page book embedding $\mathcal E$ of a graph $G$
        in which the vertices $b_s$, $b_t$, $r_s$, and $r_t$ appear in this counter-clockwise order along the spine $\ell_{\mathcal {E}}$ of $\cal E$.}
        \label{fig:bip2pageBook}
    \end{minipage}
    \\
    \begin{minipage}{\textwidth}
    \centering
        \includegraphics[width=.543\textwidth,page=5]{figures/gadgets.pdf}
        \subcaption{A planar embedding $\mathcal{E}_{\sigma}$ of the bipartite graph $G$ together with a connector $\sigma$ in which $b_t$ immediately precedes $r_s$ counter-clockwise along $\sigma$ and $r_t$ immediately precedes $b_s$ counter-clockwise along $\sigma$.
        The edges of $\sigma$ are dashed. The paths $\pi_b$ and $\pi_r$ are black and red, respectively. The edge $(b_t,r_s)$ belongs to both $G$ and $\sigma$.
        }
        \label{fig:connector}
    \end{minipage}
    \\
        \begin{minipage}{\textwidth}
    \centering
        \includegraphics[width=.543\textwidth,page=6]{figures/gadgets.pdf}
        \subcaption{The planar embedding $\mathcal{E}_{H^*}$ of the linear saturation $H^*$ of the underlying graph of the independent c-graph $\cal C$, obtained from $\mathcal{E}_\sigma$.}
        \label{fig:saturator}
    \end{minipage}
    \caption{Illustrations for the proof of \cref{le:reduction}.}
    \label{fig:Q-plus-claim}
\end{figure}
\clearpage

We initialize $H^* = G_\sigma$, $\mathcal E_{H^*} = \mathcal E_\sigma$,  $Z_b = E(\pi_b)$, and $Z_r = E(\pi_r)$. 
Then, we draw each vertex $c_e$ of $V_c$ in the interior of the drawing of the edge $e=(b,r)$ of $G$ it stems from, and let the two parts of $e$ that connect $c_e$ with $b$ and with $r$ be the drawing of the edges $(b,c_e)$ and $(c_e,r)$ of $H$, respectively.
Second, by Properties {\bf P}\ref{p:init} and {\bf P}\ref{p:traverse}, we have that each transversal face contains exactly two vertices $c_{e}$ and $c_{g}$ of $V_c$. For each such face $f$, we add the edge $(c_{e},c_{g})$ to $Z_c$ and draw this edge inside $f$ in $\mathcal{E}_{H^*}$. For simplicity, we will denote the unique vertex of $V_c$ stemming from $e'_{in}$, $e'_{out}$, $e''_{in}$, and $e''_{out}$ as $c'_{in}$, $c'_{out}$, $c''_{in}$, and $c''_{out}$, respectively.
Also, we remove from $H$ and $\mathcal{E}_{H^*}$ the edge $(b_t,r_s)$ and draw the originating gadget $\mathcal Q$ arbitrarily close to the drawing of $(b_t,r_s)$ so that $b'_t$ coincides with $b_t$, so that $r'_s$ coincides with $r_s$, and that no crossing is introduced. 
W.l.o.g., we assume that the drawing of $\mathcal Q$ is such that the vertex $\beta$ shares a face with $c'_{in}$ and thus that the vertex $\psi$ shares a face with the vertex $c'_{out}$.
Similarly, we remove from $H$ and $\mathcal{E}_{H^*}$ the edge $(r_t,b_s)$ and draw the traversing gadget $\mathcal P$ arbitrarily close to the drawing of $(r_t,b_s)$ so that $b'_s$ coincides with $b_s$, so that $r'_t$ coincides with $r_t$, and that no crossing is introduced. 
W.l.o.g., we assume that the drawing of $\mathcal P$ is such that the vertex $\mu$ shares a face with $c''_{in}$ and thus that the vertex $\nu$ shares a face with the vertex $c''_{out}$.
Observe that the drawing of $\mathcal P$ and $\mathcal Q$ in $\mathcal E_{H^*}$ leaves the vertices of $H$ that do not belong to these graphs in the exterior of their outer cycles. This allows us to add 
the edges $(b'_t,b^*)$ and $(b'_s,b^\diamond)$ to $H^*$ and $Z_b$, 
the edges $(r'_s,r^*)$ and $(r'_t,r^\diamond)$ to $H^*$ and $Z_r$, and
the edges $(\alpha,\beta)$,  $(\psi,\omega)$, $(\mu,\lambda)$, $(\lambda,\chi)$, and $(\chi,\nu)$ to $H^*$ and $Z_c$, and 
to draw these edges planarly in $\mathcal{E}_{H^*}$. 
Notice that this concludes the construction of the sets $Z_b$ and $Z_r$.
To complete the construction of $H^*$ and $\mathcal{E}_{H^*}$, we add to $Z_c$ the edges $(\beta,c'_{in})$, $(\mu,c''_{in})$, $(\psi,c'_{out})$, and $(\nu,c''_{out})$, which can be done while preserving planarity as the endpoints of these edges are incident to distinct faces.\footnote{It may happen that $f'_{in} = f''_{in}$, in which case we add the edge $(\beta,\mu)$. Similarly if $f'_{out} = f''_{out}$, when we add the edge $(\psi,\nu)$.}
This concludes the construction of $Z_c$. Since $H^*$ is planar and the graphs $H[Z_b]$, $H[Z_r]$, and $H[Z_c]$ are paths, we have that $\mathcal C$ is a positive instance of \textsc{CPLS-3}.

$(\Leftarrow)$ Suppose now that $\mathcal C$ admits a planar saturation $H^*=(V(H), E(H) \cup \sum^3_{i=1} Z_i)$ via three sets $Z_1$, $Z_2$, and $Z_3$ of non-edges of $H$ such that $H[V_i]$ is a path, with $i \in [3]$. Let $\mathcal{E}_{H^*}$ be a planar embedding of $H^*$. 
We show that $(G,b_s,b_t,r_s,r_t)$ is a positive instance of {\sc B2PBE-PE}.
We will exploit the following property of a {$\mathcal{PQ}$-merge} of a $2$-connected graph.

\begin{longproperty}\label{prop:exterior-of-gadgets}
Let $W$ be a $2$-connected planar graph with $4$ distinct vertices $b_s$, $b_t$, $r_b$, and $r_t$, and let $W^*$ be a \emph{$\mathcal{PQ}$-merge} of $W$. If $W^*$ is planar, then in any planar embedding of $W^*$ it holds that:
\begin{itemize}
    \item 
the outer cycle of $\mathcal P$ separates the vertices of the inner cycle of $\mathcal P$ from the vertices in $V(W) \cup V({\mathcal{Q}})$, and
\item 
the outer cycle of $\mathcal Q$ separates the vertices of the inner cycle of $\mathcal Q$ from the vertices in $V(W) \cup V({\mathcal{P}})$.
\end{itemize}
\end{longproperty}

\begin{proof}
In the following, we assume that $W^*$ is planar and we let $\mathcal{E}_{W^*}$ be any of its planar embeddings.
Let $W^-_{\mathcal P} = W^*[V(W) \cup V(\mathcal{Q})]$. We show that (i) holds, i.e., we show that 
the outer cycle $\mathcal O$ of $\mathcal{P}$ separates the vertices of the inner cycle $\mathcal I$ of $\mathcal{P}$ from the vertices of $W^-_{\mathcal P}$; the proof that (ii) holds being analogous. First observe that, since $\mathcal P$ is $3$-connected, the vertices of $\mathcal I$ must lie on the same side of $\mathcal O$. For a contradiction, suppose that there exists a vertex $u \in V(W^-_\mathcal{P})$ that lies on the same side of $\mathcal O$ that hosts the vertices of $\mathcal I$. W.l.o.g, vertex $u$ and the vertices of $\mathcal I$ lie in the interior of $\mathcal O$.

We highlight some properties that will soon turn useful. First, observe that
$W^-_{\mathcal P}$ is $2$-connected, since $W$ and $\mathcal Q$ are $2$-connected and since $W^-_{\mathcal P}$ is obtained by identifying two distinct vertices of $W$ with two distinct vertices of $\mathcal Q$. By the same arguments, we also have that $W^*$ is $2$-connected. Moreover, by the definition of $\mathcal{PQ}$-merge, the vertices $b_t$ and $b_s$ form a separation pair of $W^*$. 

Let $\mathcal E_{\mathcal P}$ be the planar embedding of $\mathcal P$ induced by $\mathcal E_{W^*}$, which is unique (up to a flip and to the choice of the outer face) since $\mathcal P$ is $3$-connected.
Since $u$ lies in the interior of $\mathcal O$, then it must lie in the interior of one of the five faces of $\mathcal{E}_{\mathcal P}$ that are not bounded by the outer cycle of $\mathcal P$. Hereafter, we call these faces \emph{inward}. 
Since $W^*$ is $2$-connected, there must exist in $W^*$ two vertex-disjoint paths $P'$ and $P''$ connecting $u$ and any vertex of $\mathcal I$, say $\alpha$. In particular, 
one of these paths, say $P'$, must pass through $b_t$ and the other path must pass through $r_s$ (due to the fact that $b_t$ and $r_s$ form a separation pair of $W^*$). Let $P'_u$ be the subpath of $P'$ from $u$ to $b_t$ and $P''_u$ be the subpath of $P''$ from $u$ to $r_s$. Then, either $P'_u$, or $P''_u$, or both must intersect the boundary of the inward face of $\mathcal E_{\mathcal P}$ that contains $u$, which contradicts the planarity~of~$\mathcal{E}_{W^*}$.
\end{proof}

Recall that the graph $G$ is $2$-connected and that, by construction, $H$ is a $\mathcal{PQ}$-merge of a subdivision of $G$. Then, by \cref{prop:exterior-of-gadgets} and since $\mathcal{E}_{H^*}$ induces a planar embedding $\mathcal{E}_{H}$ of $H$, we have that in $\mathcal{E}_{H^*}$ it holds that {\bf (a)} the outer cycle of $\mathcal Q$ 
separates the vertices of the inner cycle of $\mathcal Q$ from the vertices of $V(H)$ not in $V(\mathcal Q)$. Similarly, we have that
the outer cycle of $\mathcal P$ 
separates the vertices of the inner cycle of $\mathcal P$ from the vertices of $V(H)$ not in $V(\mathcal P)$. We can exploit {\bf (a)} and {\bf (b)} to show the following.

\begin{longclaim}\label{cl:paths}
    The following statements about paths $P_b = H^*[V_b]$, $P_r = H^*[V_r]$, and $P_c = H^*[V_c]$ hold:
    \begin{itemize}
    \item $P_b$ contains the edges $(b^*,b'_t)$ and $(b^\diamond,b'_s)$ and has $b^*$ and $b^\diamond$ as endpoints; 
    \item $P_r$ contains the edges $(r^*,r'_s)$ and $(r^\diamond,r'_t)$ and has $r^*$ and $r^\diamond$ as endpoints; and
    \item $P_c$ contains the edges $(\alpha,\beta)$, 
    $(\nu,\chi)$, 
    $(\chi,\lambda)$, 
    $(\lambda,\mu)$, and
    $(\psi,\omega)$ and has $\alpha$ and $\omega$ as endpoints.
    \end{itemize}
\end{longclaim}

\begin{proof}
Refer to \cref{fig:gadget-Q-saturation,fig:gadget-P-saturation}. 
By {\bf (a)}, we have that the faces of the unique planar embedding of $\mathcal Q$ that are not bounded by the outer cycle of $\mathcal Q$ are also faces of $\mathcal E_{H*
}$. Thus, the only black vertex $b^*$ shares a face with is $b'_t$ and the only red vertex $r^*$ shares a face with is $r'_s$. This implies that $(b^*,b'_t) \in Z_b$ and that $(r^*,r'_s) \in Z_r$. In turn, the presence of these two edges in $H^*$ and the $3$-connectivity of $\mathcal Q$, imply that $\beta$ is the only cyan vertex that shares a face with vertex $\alpha$ and that $\psi$ is the only cyan vertex that shares a face with vertex~$\omega$. Therefore, $(\alpha,\beta), (\psi,\omega) \in Z_c$. In particular, $b^*$ is an endpoint of $P_b$, $r^*$ is an endpoint of $P_r$, and $\alpha$ and $\omega$ are the two endpoints of $P_c$. 
By {\bf (b)}, we have that the faces of the unique planar embedding of $\mathcal P$ that are not bounded by the outer cycle of $\mathcal P$ are also faces of $\mathcal E_{H*
}$. Thus, the only black vertex $b^\diamond$ shares a face with is $b'_s$ and the only red vertex $r^\diamond$ shares a face with is $r'_t$. This implies that $(b^\diamond,b'_s) \in Z_b$ and that $(r^\diamond,r'_t) \in Z_r$. In turn, the presence of these two edges in $H^*$, the $3$-connectivity of $\mathcal P$, and the fact that the endpoints of $P_c$ are $\alpha$ and $\omega$ 
imply that $(\nu,\chi), (\chi,\lambda), (\lambda,\mu) \in Z_c$.
\end{proof} 

By \cref{cl:paths}, the path $P_b=H[Z_b]$ consists of the edge 
$(b^\diamond,b'_s)$, followed by a path $\pi_b$ that traverses all the vertices of $U_b$ starting from $b'_s = b_s$ and ending at $b'_t = b_t$, followed by the edge $(b'_t,b^*)$.
Also, the path $P_r=H[Z_r]$ consists of the edge 
$(r^\diamond,r'_t)$, followed by a path $\pi_r$ that traverses all the vertices of $U_r$ starting from $r'_t = r_t$ and ending at $r'_s = r_s$, followed by the edge $(r'_s,r^*)$.

We are finally ready to show that $(G,b_s,b_t,r_s,r_t)$ is a positive instance of {\sc B2PBE-PE}. To this aim, by \cref{le:characterization-btpbe-pe}, starting from $\mathcal{E}_{H^*}$, we show how to construct a planar embedding of $G$ together with a connector $\sigma$ in which 
 $b_t$ immediately precedes $r_s$ counter-clockwise along $\sigma$ and
 $r_t$ immediately precedes $b_s$ counter-clockwise along $\sigma$; refer to \cref{fig:Q-saturation-to-book}.

\begin{figure}[h!]
\centering
        \includegraphics[width=.543\textwidth,page=7]{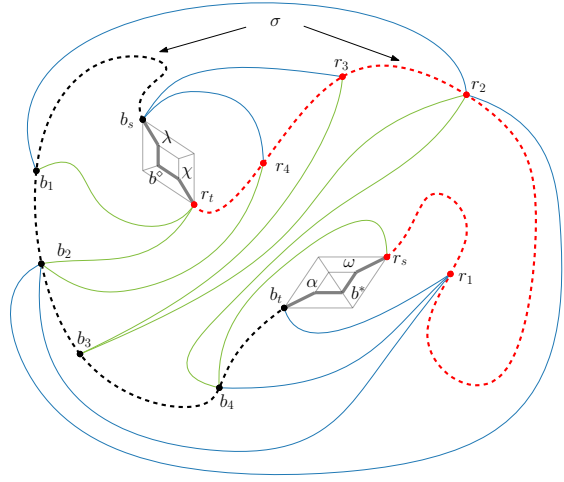}
    \caption{Illustration for the proof of \cref{cl:paths}. Construction a planar embedding of $G$ together with a connector $\sigma$ obtained starting from the planar embedding of $H^*$ in \cref{fig:connector}. The drawings of $\mathcal P$ and $\mathcal Q$ (which are not part of $G$) are shaded gray.}
    \label{fig:Q-saturation-to-book}
\end{figure}

First, we modify $H^*$ and $\mathcal{E}_{H^*}$ as follows. We remove the edges of $Z_c$. Then, for each edge $e=(u,v)$ in $G$, we smoothen the vertex $c_e$, i.e., we remove $c_e$ and let the union of the drawings of $(u,c_e)$ and $(c_e,v)$ be the drawing of the edge $(u,v)$. Finally, we remove the vertices of $\mathcal P$ and $\mathcal Q$ different from $b_s$, $b_t$, $r_s$, and $r_t$ as well as the drawing of the edges of $H^*$ whose endpoints are both in $V(\mathcal P)$ or both in $V(\mathcal Q)$.
Let $G^*$ be the resulting graph and $\mathcal{E}_{G^*}$ be the corresponding embedding obtained from $\mathcal{E}_{H^*}$.
Observe that, after the above steps, no vertex of $V_C$ belongs to $G^*$. Moreover, the following hold:
\begin{itemize}
\item $G^*$ is the union of $G$ and a subset $Z'_b$ and $Z'_r$ of the edges of $Z_b$ and $Z_r$, respectively;
\item $G^*[V_b]$ is the edge set of the path $\pi_b$; and 
\item $G^*[V_r]$ is the edge set of the path $\pi_r$.
\end{itemize}

We now show how to extend $\mathcal{E}^*$ to a planar embedding $\mathcal{E}_\sigma$ of $G \cup \sigma$, where $\sigma$ is a connector of $G$. We initialize $\sigma$ to the union of $\pi_b$ and $\pi_r$. Then, we add to $\sigma$ the edges $(b_t,r_s)$ and $(r_t,b_s)$. Clearly, $\sigma$ is a cycle that traverses all the vertices of $U_r$ and of $U_b$ consecutively. Moreover, in a traversal of $\sigma$ that starts at $b_t$ and traverses the edge $(b_t,r_s)$ from $b_t$ to $r_s$, we encounter $r_s$, $r_t$, and $b_s$ in this order. We thus only need to show that it is possible to draw in  
$\mathcal{E}_{G^*}$ the edges $(b_t,r_s)$ and $(b_t,b_s)$ while preserving planarity. To this aim, we let the drawing of $(b_t,r_s)$ in $\mathcal{E}$  be the (planar) one of the path $(b_t,\alpha,b^*,\omega,r_s)$ of $\mathcal{Q}$ in $\mathcal{E}_{H^*}$ and we let the drawing of  $(r_t,b_s)$ in $\mathcal{E}$ be the (planar) one of the path $(r_t,\chi,b^\diamond,\lambda,b_s)$ of $\mathcal{P}$ in $\mathcal{E}_{H^*}$. If $b_t$ immediately precedes $r_s$ counter-clockwise along $\sigma$, we are done; otherwise, we simply flip $\mathcal{E}$.
This concludes the proof of correctness of the reduction.
\end{proof} 

Since the \NP-completeness of {\sc B2PBE-PE} holds for $2$-connected instances with a fixed embedding, since the graphs resulting from a $\mathcal{PQ}$-merge of a $2$-connected graph are also $2$-connected (see \cref{prop:exterior-of-gadgets}), since $\mathcal P$ and $\mathcal Q$ are $3$-connected, and since the reduction of \cref{le:reduction} can be carried out while also enforcing a prescribed rotation system around the vertices $b'_t$ and $r'_s$ of $\mathcal Q$ and a prescribed rotation system around the vertices $b'_s$ and $r'_t$ of $\mathcal P$, we have shown that both {\CPLSF} and {\CPLS} are \NP-complete for $2$-connected instances with at most three clusters. This concludes the proof.
\end{proof} 

\section{Conclusions}

This paper established upper and lower bounds that significantly expand our understanding of the limits of tractability for finding linear saturators in the context of clustered planarity. 

We remark that, prior to this research, the problem was not known to be \NP-complete for instances with $\bigoh(1)$ clusters. Our \NP-hardness result for instances of {\sc CPLS} with three clusters narrows the complexity gap to its extreme (while also solving the open problem posed in~\cite[OP 4.3]{AngeliniL20} about the complexity of {\sc Clique Planarity} for instances with a bounded number of clusters), and animates the interest for the remaining two cluster case.

\clearpage
\bibliography{bibliography}

\end{document}